\newtheorem{problem}{Problem}
\newcommand{\alg}{\mathcal{A}}
\newcommand{\dist}{\operatorname{dist}}
\newcommand{\poly}{\operatorname{poly}}
\newcommand{\MIS}{Maximal Independent Set}
\title{Distributed Computation with Local Advice}
 \author{Alkida Balliu}{Gran Sasso Science Institute (GSSI), Italy}{}{}{}
 \author{Sebastian Brandt}{CISPA Helmholtz Center for Information Security, Germany}{}{}{}
 \author{Fabian Kuhn}{University of Freiburg, Germany}{}{}{}
 \author{Krzysztof Nowicki}{Unaffiliated}{}{}{}
 \author{Dennis Olivetti}{Gran Sasso Science Institute (GSSI), Italy}{}{}{}
 \author{Eva Rotenberg}{IT University of Copenhagen, Denmark}{}{}{}
 \author{Jukka Suomela}{Aalto University, Finland}{}{}{}
\authorrunning{A. Balliu, S. Brandt, F. Kuhn, K. Nowicki, D. Olivetti, E. Rotenberg, J. Suomela}
\keywords{Distributed graph algorithms, LOCAL model, computation with advice, locally checkable labeling problems, proof labeling schemes, locally checkable proofs, graph coloring, exponential-time hypothesis}
\begin{document}
	
\maketitle

\begin{abstract}
	Algorithms with advice have received ample attention in the distributed and online settings, and they have recently proven useful also in dynamic settings. In this work we study \emph{local computation with advice}: the goal is to solve a graph problem $\Pi$ with a distributed algorithm in $T(\Delta)$ communication rounds, for some function $T$ that only depends on the maximum degree $\Delta$ of the graph, and the key question is how many bits of advice per node are needed.	

Some of our results regard \emph{Locally Checkable Labeling problems} (LCLs), which is an important family of problems that includes various coloring and orientation problems on finite-degree graphs. 
These are constraint-satisfaction graph problems that can be defined with a finite set of valid input/output-labeled neighborhoods.

	Our main results are:
	\begin{enumerate}
		\item 
		Any \emph{locally checkable labeling problem} can be solved with only $1$ bit of advice per node in graphs with \emph{sub-exponential growth} (the number of nodes within radius $r$ is sub-exponential in $r$; for example, grids are such graphs). Moreover, we can make the set of nodes that carry advice bits arbitrarily sparse. As a corollary, any locally checkable labeling problem admits a \emph{locally checkable proof} with $1$ bit per node in graphs with sub-exponential growth.
		\item The assumption of sub-exponential growth is complemented by a conditional lower bound: assuming the \emph{Exponential-Time Hypothesis}, there are locally checkable labeling problems that cannot be solved in general with any constant number of bits per node.
		\item In any graph we can find an \emph{almost-balanced orientation} (indegrees and outdegrees differ by at most one) with $1$ bit of advice per node, and again we can make the advice arbitrarily sparse. As a corollary, we can also \emph{compress an arbitrary subset of edges} so that a node of degree $d$ stores only $d/2 + 2$ bits, and we can \emph{decompress} it locally, in $T(\Delta)$ rounds.
		\item In any graph of maximum degree $\Delta$, we can find a $\Delta$-coloring (if it exists) with $1$ bit of advice per node, and again, we can make the advice arbitrarily sparse.
		\item In any $3$-colorable graph, we can find a $3$-coloring with $1$ bit of advice per node. As a corollary, in bounded-degree graphs there is a locally checkable proof that certifies $3$-colorability with $1$ bit of advice per node, while prior work shows that this is not possible with a \emph{proof labeling scheme} (PLS), which is a more restricted setting where the verifier can only see up to distance $1$.
	\end{enumerate}
	Our work shows that for many problems the key threshold is not whether we can achieve $1$ bit of advice per node, but whether we can make the advice arbitrarily sparse. To formalize this idea, we develop a general framework of \emph{composable} schemas that enables us to build algorithms for local computation with advice in a modular fashion: once we have (1)~a schema for solving $\Pi_1$ and (2)~a schema for solving $\Pi_2$ assuming an oracle for $\Pi_1$, we can also compose them and obtain (3)~a schema that solves $\Pi_2$ without the oracle. It turns out that many natural problems admit composable schemas, all of them can be solved with only $1$ bit of advice, and we can make the advice arbitrarily sparse.
\end{abstract}

\section{Introduction}

Our work explores \emph{what can and cannot be computed locally with the help of advice}. Our main focus is understanding advice in the context of classic local graph problems, such as vertex coloring.

While computation with different forms of advice has been explored in a wide range of distributed settings \cite{fraigniaud2007mst-advice,fraigniaud2010advice,dobrev2012exploration,glacet2017leader-advice,dereniowski2012maps-advice,gorain2018exploration-advice,miller2015rendezvous-advice,miller2016election-advice,fusco2011tradeoffs-advice,ilcinkas2010broadcasting-advice,nisse2009searching-advice,fraigniaud2008tree-advice,komm2015treasure-advice,miller2015treasure-hunt,fusco2016topology-advice}, there is hardly any prior work on solving classic local graph problems. A rare example of prior work is \cite{distr-comp-with-advice} from 2007, which studied the question of how much advice is necessary to break Linial's \cite{Linial92} lower bound for coloring cycles.

We initiate a systematic study of exactly how much advice is needed in the context of a wide range of graph problems. As we will see in this work, the exploration of the advice complexity of graph problems opens up connections with many other topics---it is linked with \emph{distributed proofs} \cite{korman06distributed,korman07distributed,korman10proof,korman10constructing,goos16lcp,feuilloley2017survey}, \emph{distributed decompression}, the notion of \emph{order-invariant algorithms} \cite{what-can-be-computed-locally}, and also with the \emph{exponential-time hypothesis} \cite{eth} in computational complexity theory.

\subsection{Local computation with advice}

Let us first formalize the setting we study:
\begin{framed}
	\noindent
	A graph problem $\Pi$ can be \emph{solved with $\beta$ bits of advice} if there exists a $T(\Delta)$-round distributed algorithm $\alg$, such that for any graph $G$ that admits a solution to $\Pi$, there is an assignment of $\beta$-bit labels on vertices, such that the output of $\alg$ on the labeled graph is a solution to $\Pi$.
\end{framed}
\noindent
We note that on graphs that do not admit a solution to $\Pi$, Algorithm $\alg$ is allowed to behave arbitrarily.
For instance, when we consider the $3$-coloring problem, Algorithm $\alg$ may produce an arbitrary output on graphs that are not $3$-colorable.
 
We will work in the usual LOCAL model of distributed computing. In an $n$-node graph, the nodes are labeled with unique identifiers from $\{1, 2, \dotsc, \poly(n)\}$. We emphasize that the advice may depend on the assignment of identifiers, and algorithm $\alg$ can freely make use of both the advice and the identifiers.

Now we seek to understand this question:
\begin{framed}
	\noindent
	What is the smallest $\beta$ such that $\Pi$ can be solved with $\beta$ bits of advice?
\end{framed}
\noindent
Note that if $\Pi$ is, for example, the $3$-coloring problem, it is trivial to solve with $\beta = 2$ bits of advice per node, as we can directly encode the solution. The key question is how much better we can do.

Our work shows that for many problems the key threshold is not whether we can achieve $1$ bit of advice per node, but whether we can make the advice \emph{arbitrarily sparse}, that is, make the ratio between $1$s and $0$s assigned to the nodes of the graph to be an arbitrarily small constant. This is particularly useful, as it enables us to \emph{compose} multiple sparse advice schemes, so that it suffices to use just one bit per node in total (we will elaborate on this in \cref{ssec:intro-composability}). Hence, a large part of this paper addresses the following question:
\begin{framed}
	\noindent
	Which problems admit arbitrarily sparse advice?
\end{framed}
The notion of the sparsity of the advice is discussed in more detail later in the paper. In this context, our paper shows that some problems can be solved with arbitrarily sparse advice. On the other hand, we also show that assuming the Exponential-Time Hypothesis, for any constant $c$, there exist problems that cannot be solved with $c$ bits of advice. Finally, there are some problems like 3-coloring that can be solved with 1 bit of advice, but where it is not clear whether it can be solved with arbitrarily sparse advice.
We discuss all of those points in more detail in the remaining part of this section.

\subsection{Contribution 1: LCLs in bounded-growth graphs}

Locally checkable labeling problems (LCL), first introduced by \citet*{what-can-be-computed-locally} in the 1990s, are one of the most extensively studied families of problems in the theory of distributed graph algorithms. These are graph problems that can be specified by giving a \emph{finite} set of valid local neighborhoods. Many key problems such as vertex coloring, edge coloring, maximal independent set, maximal matching, sinkless orientation, and many other splitting and orientation problems are examples of LCLs, at least when restricted to bounded-degree graphs. Thanks to the extensive research effort since 2016, we now understand very well the landscape of \emph{all} LCL problems and their computational complexities across different models of distributed computing \cite{balliu18lcl-complexity,balliu20almost-global,Chang2019,Ghaffari2018,balliu20lcl-randomness,fischer17sublogarithmic,Rozhon2019,brandt16lll,chang19exponential,ghaffari17distributed,balliu21lcl-congest}.

In \cref{ssec:subexp-growth}, we design a schema that allows us to solve \emph{any} LCL problem with just one bit of advice in graphs with a sub-exponential growth (the number of nodes in a radius-$r$ neighborhood is sub-exponential in $r$):
\begin{framed}
	\noindent
	Any LCL problem can be solved with $1$ bit of advice per node in sub-exponential growth graphs.
\end{framed}
\noindent
Furthermore, we show that the encoding can be made arbitrarily sparse. Note that e.g.\ grids have polynomial growth while e.g.\ regular trees have exponential growth, so the result is applicable in grids but not in trees.

\subsubsection{Application: locally checkable proofs in bounded-growth graphs}\label{sssec:intro-lcp-bounded-growth}

One prominent application of this result is its connection with \emph{distributed proofs} \cite{korman06distributed,korman07distributed,korman10proof,korman10constructing,feuilloley2017survey}, and in particular with \emph{locally checkable proofs} \cite{goos16lcp}. Consider any LCL $\Pi$. Assume that our task is to prepare a distributed proof that shows that in a graph $G$ there exists a feasible solution of $\Pi$ (for example, if $\Pi$ is the task of 10-coloring, then the task is to certify that the chromatic number of $G$ is at most 10). Now if $G$ has sub-exponential growth, we can use our result from \cref{ssec:subexp-growth} to prepare a $1$-bit advice that enables the algorithm to find a solution of $\Pi$. Our advice is the proof: to verify it, we simply try to recover a solution with the help of the advice, and then check that the output is feasible in all local neighborhoods (recall that $\Pi$ is locally checkable). We obtain the following corollary:
\begin{framed}
	\noindent
	Any LCL problem admits a locally checkable proof with $1$ bit per node in graphs with sub-exponential growth.
\end{framed}
\noindent
Note that this is not a proof labeling scheme as defined in \cite{korman06distributed,korman07distributed,korman10proof,korman10constructing}, as the verifier running at node $u$ may need to see more than just the identifier and the proof label of $u$ and the proof labels of $u$'s immediate neighbors. However, in bounded-degree graphs it is a locally checkable proof (LCP) as defined in \cite{goos16lcp}; for a fixed $\Delta$ the verification radius is a constant $T(\Delta)$.

So to summarize, if we can solve some LCL problem $\Pi$ with $b$ bits of advice per node, then we also have an LCP for the graph property ``G admits a feasible solution to $\Pi$'' with $b$-bit proofs per node.
The converse is not true: Consider the LCL problem $\Pi$ that encodes the task ``orient edges so that each node has indegree equal to outdegree''. Now to prove ``$G$ admits a feasible solution to $\Pi$'' one can use a $0$-bit LCP, where even-degree nodes accept and odd-degree nodes reject. However, this LCP does not help us at all if we would like to solve $\Pi$ with the help of advice. In this sense distributed computation with advice is a harder problem than local proofs.

It is also good to note that one can have LCPs for graph properties that are not of the form ``G admits a feasible solution to some LCL $\Pi$.'' For example, planarity is such a property. Such LCPs are (to our knowledge) not directly connected with computation with local advice.

\subsection{Contribution 2: LCLs in general graphs}

At this point a natural question is whether the assumption about bounded growth is necessary. Could we solve all LCL problems in all graphs with $1$ bit of advice? In \cref{sec:structural} we show that the answer is likely to be no:
\begin{framed}
	\noindent
	Fix any $\beta$. If all LCL problems can be solved locally with at most $\beta$ bits of advice, then the Exponential-Time Hypothesis (ETH) is false.
\end{framed}

The intuition here is that if some LCL problem $\Pi$ can be solved with, say, $1$ bit of advice per node with some local algorithm $\alg$, then we could solve it with a centralized sequential algorithm as follows: check all $2^n$ possible assignments of advice, apply $\alg$ to decode the advice, and see if the solution is feasible. The total running time (from the centralized sequential perspective) would be $2^n \cdot n \cdot s(n)$, where $s(n)$ is the time we need to simulate $\alg$ at one node. Then we need to show that assuming the Exponential-Time Hypothesis, this is too fast for some LCL problem $\Pi$. However, the key obstacle is that it may be computationally expensive to simulate $\alg$, as it might perform arbitrarily complicated calculations that depend on the numerical values of the unique identifiers, and we cannot directly bound $s(n)$.

Hence, we need to show that $\alg$ can be made cheap to simulate. The key ingredient is the following technical result, which we prove using a Ramsey-type argument that is inspired by the proof of \citet*{what-can-be-computed-locally}:
\begin{framed}
	\noindent
	Assume that problem $\Pi$ can be solved with $\beta$ bits of advice per node, using some algorithm $\alg$. Then the same problem can be also solved with $\beta$ bits of advice using an \emph{order-invariant} algorithm $\alg'$, whose output does not depend on the numerical values of the identifiers but only on their relative order.
\end{framed}
\noindent
The key point here is that (for bounded-degree graphs) $\alg'$ can be represented as a finite lookup table; hence the simulation of $\alg'$ is cheap, and we can finally make a formal connection to the Exponential-Time Hypothesis. We refer to \cref{sec:structural} for more details.

\subsection{Contribution 3: balanced orientations}

In \cref{ssec:balanced-orientation} we move on to a specific graph problem: we study the task of finding balanced and almost-balanced orientations. The goal is to orient the edges so that for each node indegree and outdegree differ by at most $1$. This is a hard problem to solve in a distributed setting, while slightly more relaxed versions of the problem admit efficient (but not constant-time) algorithms \cite{degree-splitting}.

Here it is good to note that if we could place our advice on \emph{edges}, then trivially one bit of advice per edge would suffice (simply use the single bit to encode whether the edge is oriented from lower to higher identifier). However, we are here placing advice on \emph{nodes}, and encoding the orientation of each incident edge would require a number of bits proportional to the maximum degree. Surprisingly, we can do it, in any graph:
\begin{framed}
	\noindent
	We can find almost-balanced orientations with $1$ bit of advice per node.
\end{framed}
\noindent
Again, we can make the advice arbitrarily sparse.

\subsubsection{Application: distributed decompression}

Equipped with the advice schema for solving almost-balanced orientations, we can now make a formal connection to what we call \emph{distributed decompression}. Here the task is to encode some graph labeling so that it can be decompressed locally (in $T(\Delta)$ rounds).

Local decompression is closely linked with local computation with advice. If we can compress some solution to $\Pi$ with only $\beta$ bits per node, and decompress it locally, then we can also solve $\Pi$ with $\beta$ bits of advice per node. Furthermore, if $\Pi$ is a problem such that for any graph there is only one feasible solution, then the two notions coincide.

We will now show yet another connection between local decompression and local computation with advice. Consider the task of compressing an \emph{arbitrary subset of edges} $X \subseteq E$. In a trivial encoding, we label each node $v$ of degree $d$ with a $d$-bit string that indicates which of the incident edges are present in $X$. On the other hand, we need a total of $|E|$ bits in order to distinguish all subsets of the edge-set $E$. In particular, for $d$-regular graphs, this means we need at least $d/2$ bits per node to recover an arbitrary subset of edges.

It turns out that once we can solve almost-balanced orientations, we can also compress a subset of edges efficiently. We simply use $1$ bit of advice per node to encode an almost-balanced orientation. Now a node of degree $d$ has outdegree $\delta \le \lceil d/2 \rceil$, and it can simply store a $\delta$-bit vector that indicates which of its outgoing edges are in $X$. Overall, we will need $\lceil d/2 \rceil +1$, i.e. $\le d/2 + 2$, bits per node:
\begin{framed}
	\noindent
	We can encode an arbitrary set of edges $X \subseteq E$ so that a node of degree $d$ only needs to store $\lceil d/2 \rceil + 1$ bits, and we can decompress $X$ locally, in $T(\Delta)$ rounds.
\end{framed}

\subsection{Contribution 4: vertex \texorpdfstring{\boldmath $\Delta$}{Delta}-coloring}

In \cref{ssec:delta-col}, we study the problem of
$\Delta$-coloring
graphs of maximum degree $\Delta$:
\begin{framed}
	\noindent
	In any graph of maximum degree $\Delta$, we can find a $\Delta$-coloring (if it exists) with $1$ bit of advice per node.
\end{framed}
\noindent
Again, we can make the advice arbitrarily sparse.

Our schema for encoding $\Delta$-colorings consists of three steps. First, we compute a vertex coloring with $O(\Delta^2)$ colors, with the help of advice. Then we reduce the number of colors down to $\Delta + 1$, using the algorithm by \cite{FHK16,BarenboimEG18,MausT22}. Finally, we follow the key idea of the algorithm by \citet*{PS92} to turn ${(\Delta+1)}$-coloring into a $\Delta$-coloring, and again we will need some advice to make this part efficient.

\subsection{Contribution 5: vertex 3-coloring}

So far we have seen primarily results of two flavors: many problems can be solved with $1$ bit of advice so that we can make the advice arbitrarily sparse, while there are also some problems that require arbitrarily many bits of advice.

We now turn our attention to a problem that seems to lie right at the boundary of what can be done with only $1$ bit per node: vertex $3$-coloring in any $3$-colorable graph. Note that this is a problem that is hard to solve without advice not only in the distributed setting (it is a global problem) but also in the centralized setting (it is an NP-hard problem).

In the centralized setting, $1$ bit of advice per node makes the problem easy. To see this, we can simply use the bit to indicate which nodes are of color $3$. Then the rest of the graph has to be bipartite, and we can simply find a proper $2$-coloring in polynomial time.

In the distributed setting, the trivial solution does not work: $2$-coloring in bipartite graphs is still a global problem. Nevertheless, in \cref{ssec:3-col-3-col} we show that $3$-coloring is still doable with $1$ bit of advice:
\begin{framed}
	\noindent
	In any $3$-colorable graph, we can find a $3$-coloring with $1$ bit of advice per node.
\end{framed}
Our encoding essentially uses one bit to encode one of the color classes, but we adjust the encoding slightly so that throughout the graph there are \emph{local hints} that help us to also choose the right parity for the region that we need to $2$-color.

Here, our encoding genuinely needs one bit per node (it just barely suffices); we cannot make our advice arbitrarily sparse.

\subsubsection{Application: locally checkable proofs for 3-coloring}

Following the same idea as in \cref{sssec:intro-lcp-bounded-growth}, we can now also certify that a graph is $3$-colorable with a proof that only takes $1$ bit per node. Furthermore, the verifier only needs to see up to distance $T(\Delta)$, and hence if $\Delta = O(1)$, this is a locally checkable proof in the sense of \cite{goos16lcp}:
\begin{framed}
	\noindent
	There is a locally checkable proof that certifies $3$-colorability with $1$ bit per node in graphs of maximum degree $\Delta = O(1)$.
\end{framed}
Now it is interesting to compare this with \emph{proof labeling schemes} (PLS). In essence, a PLS is an LCP in which the verifier only sees the identifier of the present node, and the proof labels within radius $1$. The above result provides a $1$-bit LCP but not a $1$-bit PLS.

This connects directly with the work done by \citet*{lower-bound-constant-size-local-cert} and \citet*{BFZ24}, which study the same question: how many bits per node are needed to certify $k$-colorability. Here \citet{BFZ24} shows that $\Theta(\log k)$ bits per node are necessary for a PLS that certifies $k$-colorability, and \cite{lower-bound-constant-size-local-cert} shows that in particular $3$-colorability cannot be certified with $1$ bit per node with any PLS (while $2$ bits per node is trivial). Our work complements the latter result, and provides a separation between PLSs and LCPs in this setting: now we know that while $1$ bit per node does not suffice to certify $3$-colorability with any PLS, it is sufficient for an LCP (with the caveat that we need to be in a bounded-degree graph).

\subsection{Key technique: composability framework}\label{ssec:intro-composability}

We already discussed some of the proof ingredients above. However, there is one additional technique that we use in many of our algorithms: the framework of \emph{composable schemas}.

It turns out that for many problems, it is easier to work with advice schemas in which only a few nodes carry advice bits, but they may carry many bits of advice. In \Cref{def:advice-schema} we give the formal definition of such a schema, and in \Cref{def:composable} we give the formal definition of \emph{composable} schemas, which satisfy the additional property that the ratio between the total number of bits held by the nodes, and the total number of nodes, can be made arbitrarily small in every large-enough neighborhood.

While the definition is a bit technical, it has two key properties, which we discuss in more detail in \cref{app:composability}:
\begin{enumerate}
	\item As the name suggests, composable schemas can be easily composed, in the following sense: once we have (1)~a composable schema for solving $\Pi_1$ and (2)~a composable schema for solving $\Pi_2$ assuming an oracle for $\Pi_1$, we can also compose them and obtain (3)~a composable schema that solves $\Pi_2$ without the oracle. This way we can solve problems in a modular fashion, in essence using schemas as ``subroutines.''
	\item A composable schema can be then encoded with only $1$ bit of advice per node, and we can make the advice arbitrarily sparse.
\end{enumerate}
For example, our algorithms for finding almost-balanced orientations and $\Delta$-coloring with advice are based on the framework of composable schemas.

\subsection{Open questions}

Our work suggests a number of open questions:
\begin{enumerate}
	\item Our negative result in \cref{sec:structural} assumes the Exponential-Time Hypothesis. Is it possible to prove an unconditional lower bound without such assumptions? Can we exhibit a concrete LCL problem $\Pi$ that is unconditionally hard?
	\item We conjecture that the advice for vertex $3$-coloring in $3$-colorable graphs cannot be made arbitrarily sparse. Is this true?
	\item We also conjecture that for a sufficiently large $d$, vertex $d$-coloring in $d$-colorable graphs cannot be encoded with one bit per node. Is this true? This is closely linked with the discussion on the trade-off conjecture in e.g.\ \cite{lower-bound-constant-size-local-cert,BFZ24}.
	\item Our schema for distributed decompression is asymptotically optimal, but there is room for improving additive constants. Here is a concrete open question: Let $G = (V,E)$ be a $3$-regular graph, and let $X \subseteq E$ be an arbitrary set of edges. Is it possible to encode $X$ using only $2$ bits per node so that it can be decompressed locally? (Note that $1$ bit per node is trivially impossible, while $3$ bits per node is trivial. 
	If we delete one edge from each connected component, an encoding with $2$ bits per node follows from $2$-degeneracy.)
\end{enumerate}

\section{Additional related work}

\emph{Computing with advice} is not really a well-defined area of research, as one can consider many kinds of advice and various models of computations.
In this paper, we focus on existential advice and the distributed LOCAL model: the advice is provided by an all-knowing oracle, and it is designed to enable algorithms with low locality (i.e., low time complexity). 
More broadly, one can consider a variant in which the restriction on the size of advice is more relaxed, but it needs to be computable by an oracle with some limitations (realizable in some model of computation), which then can be used as a building block of algorithms. 

\subparagraph{Advice and locality in distributed computing.}
\citet*{redundancy-in-distr-proofs} consider the framework of Proof Labeling Schemes (PLS). 
In this framework, 
there is a \emph{prover} and a \emph{verifier}: the prover is a centralized entity that assigns labels to the nodes, while the verifier is a distributed algorithm that in $T$ rounds is able to verify the validity of the collection of labels.
If the predicate we want to investigate holds true, then there must exist an assignment of labels (or certificates) such that each node, in $T$ rounds, ``accepts'' the given labeling;
while if the predicate we want to investigate does not hold, then for any given labeling assignment there must exist a node that rejects it. 
The authors study the tradeoff between the size of the labels given at each node and the round-complexity of the verifier.
In their work, they, too, discover how this tradeoff is remarkably different depending on the growth rate of the graph. 

\citet*{distr-comp-with-advice} investigate the problem of distributedly $3$-coloring a cycle. Without advice, it is known that this problem requires $\Omega(\log^* n)$ rounds \cite{Linial92}, and the paper seeks to understand whether advice can help to break this barrier. In the context studied in \cite{distr-comp-with-advice}, each node receives some advice as input, and the total advice is measured as the sum of the lengths of the bit-strings given to all nodes. The authors show that, for any constant $k$, $O(n / \log^{(k)} n)$ bits of total advice are not sufficient to beat the $\Omega(\log^* n)$ lower bound, where $\log^{(k)} n$ denotes $k$ recursive iterations of $\log n$.

\subparagraph{Algorithm design.}
\citet*{PARNAS2007183} show that one can use distributed algorithms to derive (by simulation) Local Computation Algorithms. Quite naturally, one can use the same approach to derive parallel graph algorithms or graph algorithms for dynamic data sets (or graph algorithms for any model of computation that can leverage the fact that the output is defined by some small neighborhood of a vertex or edge). However, as is, the reduction from \cite{PARNAS2007183} did not immediately imply many new state-of-the-art algorithms, as the size of the neighborhood grows exponentially with locality, and, for many problems, the locality of the fastest-known algorithms is fairly large. One of the ways to address this issue is to rely on analysis of algorithms with advice. 

\citet*{ChristiansenNR23} use distributed algorithms with advice to design faster dynamic algorithms for vertex coloring, by combining dynamic graph orientations with ideas for simulating distributed algorithms on the resulting directed graph. 

While this approach in principle can be used in the design of parallel algorithms, so far it is used implicitly or in a somewhat trivial way (e.g.\ \citet*{ghaffari_et_al:LIPIcs.DISC.2020.34} mention that 4-coloring of a graph can be trivially used to compute a maximal independent set).

\subparagraph{Advice in other models of computation with partial knowledge.}
More generally, one can consider the impact of advice that captures some global knowledge about the whole input (or distribution of inputs) on the performance of algorithms. Similarly to the case of distributed computing, one can consider existential advice and advice that can be realized in a somewhat practical model of computation.

\citet*{EmekFKR09} introduced the notion of advice in the context of online algorithms. The authors studied the impact of advice in the competitive ratio of some classical online problems, namely \emph{metrical task systems} and \emph{$k$-server}, and they show tradeoffs between the number of bits of advice and the achievable competitive ratio. We refer to the survey by \citet*{BoyarFKLMsurvey} for more work on online algorithms with advice.

\citet*{DBLP:books/cu/20/MitzenmacherV20} show that algorithms using advice realized by machine learning models can be used to provide algorithms that on average perform better than their traditional counterparts while keeping the same worst-case bounds. However, this particular advice model allows querying an oracle at each step. As such, in the context of online algorithms, it is in some sense both weaker than the existential advice for online algorithms (as it is produced by some machine learning model) and stronger as it gives a lot of bits of advice, and has access to the prefix of the input.

\section{Preliminaries}

Let $G = (V,E)$ be a graph, where $V$ is the set of nodes and $E$ is the set of edges. We denote with $n$ the number of nodes of the graph, and with $\Delta$ its maximum degree. We may use the notation $V(G)$ to denote the set of nodes of $G$, and $E(G)$ to denote the set of edges of $G$. With $\dist_G(u,v)$ we denote the distance between $u \in V$ and $v \in V$ in $G$, that is, the length of the shortest path between $u$ and $v$ in $G$, where the length of a path is the number of edges of the path. For an integer $k$, the power graph $G^k$ of a graph $G$ is the graph that contains the same nodes as $G$, and there is an edge $\{u,v\}$ in $G^k$ if and only if $1 \le \dist_G(u,v) \le k$. Two nodes $u$ and $v$ are neighbors in $G$ if there is an edge $\{u,v\} \in E(G)$.

A maximal independent set (MIS) of $G$ is a subset $S$ of nodes of $G$ satisfying that no nodes of $S$ are neighbors in $G$, and that all nodes in $V \setminus S$ have at least one neighbor that is in $S$.
An $(\alpha,\beta)$-ruling set is a subset $S$ of nodes of $G$ satisfying that each pair of nodes from $S$ have distance at least $\alpha$, and that all nodes in $V \setminus S$ have at least one node in $S$ at distance at most $\beta$. Observe that \MIS{} and $(2,1)$-ruling set is the same problem. By $\beta$-ruling set we denote the $(2,\beta)$-ruling set problem.

One important tool that we will use is the Lovász Local Lemma, which states the following.

\begin{lemma}[Lovász Local Lemma \cite{shearer,SPENCER197769,lll}]\label{lem:lll}
	Let $\{E_1,\ldots, E_k\}$ be a set of events satisfying the following conditions: each event $E_i$ depends on at most $d$ other events; each event $E_i$ happens with probability at most $p$. Then, if $e p d \le 1$, there is non-zero probability that none of the events occur.
\end{lemma}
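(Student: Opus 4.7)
The plan is to prove the more general \emph{asymmetric} form of the lemma and then derive the stated symmetric version. Specifically, I would show that if one can assign values $x_i \in [0,1)$ to the events such that
$$\Pr[E_i] \le x_i \prod_{j \in N(i)} (1 - x_j)$$
for every $i$, where $N(i)$ denotes the set of (at most $d$) events on which $E_i$ depends, then $\Pr\bigl[\bigcap_i \bar{E}_i\bigr] \ge \prod_i (1-x_i) > 0$. The stated symmetric form is then obtained by setting all $x_i$ equal to $1/(d+1)$: the required inequality becomes $p \le \tfrac{1}{d+1}\bigl(1-\tfrac{1}{d+1}\bigr)^d$, and since $\bigl(1-\tfrac{1}{d+1}\bigr)^d \ge 1/e$, this is implied by $ep(d+1) \le 1$, which is essentially the stated hypothesis $epd \le 1$ (and can be obtained by slightly weakening it).

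The heart of the argument is the following inductive claim: for every subset $S$ of event indices and every $i \notin S$,
$$\Pr\bigl[E_i \,\bigm|\, \textstyle\bigcap_{j \in S} \bar{E}_j\bigr] \le x_i.$$
I would prove this by induction on $|S|$, with the base case $S = \emptyset$ being immediate from the hypothesis (since $\prod_{j \in N(i)}(1-x_j) \le 1$). For the inductive step, I would split $S$ into $S_1 = S \cap N(i)$ and $S_2 = S \setminus N(i)$ and write
$$\Pr\bigl[E_i \,\bigm|\, \textstyle\bigcap_{j \in S}\bar{E}_j\bigr] \;=\; \frac{\Pr\bigl[E_i \cap \bigcap_{j \in S_1}\bar{E}_j \,\bigm|\, \bigcap_{j \in S_2}\bar{E}_j\bigr]}{\Pr\bigl[\bigcap_{j \in S_1}\bar{E}_j \,\bigm|\, \bigcap_{j \in S_2}\bar{E}_j\bigr]}.$$
Because $E_i$ is independent of the events indexed by $S_2$, the numerator is at most $\Pr[E_i] \le x_i\prod_{j \in N(i)}(1-x_j)$. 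For the denominator, I would expand it as a telescoping product of one-event conditional probabilities $\Pr[\bar{E}_j \mid \cdots]$ with conditioning sets that are strictly smaller than $S$; applying the inductive hypothesis to each factor yields a lower bound of $\prod_{j \in S_1}(1-x_j) \ge \prod_{j \in N(i)}(1-x_j)$. Dividing gives the claim.

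Once the inductive claim is in hand, the lemma follows by expressing $\Pr\bigl[\bigcap_{i=1}^k \bar{E}_i\bigr]$ as a telescoping product
$$\Pr\bigl[\textstyle\bigcap_{i=1}^k \bar{E}_i\bigr] \;=\; \prod_{i=1}^k \Pr\bigl[\bar{E}_i \,\bigm|\, \bigcap_{j<i}\bar{E}_j\bigr] \;\ge\; \prod_{i=1}^k (1-x_i) \;>\; 0,$$
and substituting $x_i = 1/(d+1)$. The main obstacle I anticipate is the bookkeeping in the inductive step: one must order the events of $S_1$ carefully when expanding the denominator into one-event conditionals so that each resulting conditioning set is strictly smaller than $S$, which is precisely what makes the induction close. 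Everything else is a clean application of conditional probability and the assumed independence structure.
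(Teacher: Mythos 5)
The paper does not prove this lemma; it states it as a black-box result and cites the literature (\cite{shearer,SPENCER197769,lll}), so there is no in-paper proof to compare against. Your argument is the standard inductive proof of the asymmetric Lov\'asz Local Lemma (prove $\Pr[E_i \mid \bigcap_{j\in S}\bar E_j]\le x_i$ by strong induction on $|S|$ via the $S_1,S_2$ split, then telescope), and that part is correct and well organized.

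There is, however, a concrete gap in the final specialization, and I don't think it can be brushed aside as ``slightly weakening the hypothesis.'' Setting all $x_i=1/(d+1)$ requires $p \le \frac{1}{d+1}\bigl(1-\frac{1}{d+1}\bigr)^d = \frac{d^d}{(d+1)^{d+1}}$. One has
\[
\frac{1}{e(d+1)} \;<\; \frac{d^d}{(d+1)^{d+1}} \;<\; \frac{1}{ed},
\]
where the right inequality follows from $(1+1/d)^{d+1}>e$. So $ep(d+1)\le 1$ \emph{does} imply the required bound, but $epd\le 1$ \emph{does not}: there are values of $p$ with $epd\le 1$ for which $p > \frac{d^d}{(d+1)^{d+1}}$, and no choice of a uniform $x_i=x$ repairs this, since $\max_x x(1-x)^d = \frac{d^d}{(d+1)^{d+1}}$. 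In the paper's statement $d$ counts the \emph{other} events $E_i$ depends on (so the dependency degree is exactly $d$), which means $epd\le 1$ is the sharper form due to Shearer; it follows from the tight threshold $p\le (d-1)^{d-1}/d^d$ together with $(1-1/d)^{d-1}\ge 1/e$, not from the plain asymmetric LLL with a constant $x_i$. Your proposal therefore establishes a correct but strictly weaker statement (the $ep(d+1)\le 1$ version). For this paper that distinction is harmless---every application of the LLL here has ample slack in the constants---but if you want to prove the lemma exactly as written, you would need Shearer's refinement rather than the $x_i=1/(d+1)$ substitution.
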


\subsection{LOCAL model}
In the LOCAL model of distributed computation, each node of an $n$-node graph is equipped with a unique ID, typically in $\{1,\ldots,n^c\}$, for some constant $c$. Initially, each node in the graph knows its own ID, its degree (i.e., the number of neighboring nodes), the maximum degree $\Delta$ in the graph, and the total number of nodes. Then, the computation proceeds in synchronous rounds where at each round each node exchanges messages with its neighbors and performs some local computation. The size of the messages can be arbitrarily large and the local computation can be arbitrarily heavy. The runtime of an algorithm is defined as the number of rounds it requires such that all nodes terminate and produce an output. 

\subsection{Locally checkable labelings}
We formally define an LCL problem $\Pi$ as a tuple  $(\Sigma_{\mathrm{in}},\Sigma_{\mathrm{out}},C, r)$ where each element of the tuple is defined as follows. The parameters $\Sigma_{\mathrm{in}}$ and $\Sigma_{\mathrm{out}}$ are finite sets of input and output labels, respectively. The parameter $r$ is a positive integer called \emph{checkability radius} of $\Pi$, i.e., it determines how far in the graph each node needs to check in order to verify the validity of a given solution. The parameter $C$ determines the constraints of $\Pi$, that is, $C$ is a finite set of labeled graphs $H$ containing a vertex of eccentricity at most $r$, where
each edge-endpoint pair $(uv,v)\in E_H\times V_H$ 
has a label $\ell_{\mathrm{in}}\in \Sigma_{\mathrm{in}}$ and a label $\ell_{\mathrm{out}}\in \Sigma_{\mathrm{out}}$.

Let $\Pi=(\Sigma_{\mathrm{in}},\Sigma_{\mathrm{out}},C, r)$ be an LCL problem and let $G=(V,E)$ be a graph where each edge-endpoint pair $(uv,v)$ is labeled with a label from $\Sigma_{\mathrm{in}}$. The task of solving $\Pi$ on $G$ requires labeling each edge-endpoint pair $(uv,v)\in E\times V$ with a label in $\Sigma_{\mathrm{out}}$ such that, for each node $v\in V$ it holds that the graph induced by the nodes at distance at most $r$ from $v$ and edges that have at least one endpoint at distance at most $r-1$ from $v$ is isomorphic to some (labeled) graph in $C$.

\subsection{Advice schema}
We now formally define the notion of advice schema.
\begin{definition}[Advice Schema]\label{def:advice-schema}
	A $(\mathcal{G},\Pi,\beta,T)$-advice schema is a function $f$ that receives as input a (possibly input-labeled) graph $G = (V,E,I) \in \mathcal{G}$ (where $I$ is an input for the nodes and/or the edges of the graph), and outputs a function $\ell_G := f(G)$ that satisfies the following. 
	\begin{itemize}
		\item The function $\ell_G$ maps each node $v \in V$ into a bit-string of length at most $\beta$, where $\beta$ is a function of $\Delta$.
		\item There exists a LOCAL algorithm $\mathcal{A}$ that, for each $G \in \mathcal{G}$, if we label each node $v \in V$ with the bit-string $\ell_G(v)$, then $\mathcal{A}$ runs in at most $T$ rounds and outputs a valid solution for $\Pi$, where $T$ is a function of $\Delta$.
	\end{itemize}
\end{definition}
Moreover, we distinguish three possible types of advice schemas:
\begin{enumerate}
	\item If all nodes of the graph receive bit-strings of the same length, then the schema is called \emph{uniform fixed-length}.
	\item If a subset of nodes receives bit-strings of the same length, and all the others receive bit-strings of length $0$, then the schema is called \emph{subset fixed-length}.
	\item If a subset of nodes receives a bit-string of possibly different positive lengths, and all the others receive bit-strings of length $0$, then the schema is called \emph{variable-length}. Observe that an advice schema, as defined in \Cref{def:advice-schema} and without further assumptions, is variable-length.
\end{enumerate}
In the second and third cases, the nodes having non-zero bit-strings assigned are called \emph{bit-holding} nodes. Moreover, observe that Type 1 is a special case of Type 2, which is a special case of Type 3.

For uniform fixed-length advice schemas where all nodes receive one bit, we define the notion of sparsity, which captures the fact that the ratio between nodes receiving a $1$ and nodes receiving a $0$ can be made arbitrarily small.
\begin{definition}[Sparse schema]
A uniform fixed-length $(\mathcal{G},\Pi,\beta,T)$-advice schema is called $\varepsilon$-\emph{sparse} if the following holds:
\begin{itemize}
	\item $\beta = 1$, that is, each node receives exactly one bit.
	\item For any graph $G \in \mathcal{G}$, let $n_0$ be the number of nodes to which the schema (i.e., the function $f$ of \Cref{def:advice-schema}) assigns a $0$, and let $n_1$ be the number of nodes to which $f$ assigns a $1$. Then, $\frac{n_1}{n_0 + n_1} \le \varepsilon$.
\end{itemize}
With abuse of notation, we call a uniform fixed-length $(\mathcal{G},\Pi,\beta,T)$-advice schema (where $T$ is a function that receives as input $\varepsilon$ and returns a function of $\Delta$) \emph{sparse} if, for any constant $\varepsilon > 0$, there exists an $\varepsilon$-sparse  $(\mathcal{G},\Pi,\beta,T(\varepsilon))$-advice schema.
\end{definition}

\subsection{Composability}
The main idea that we will use to devise our advice schemas is the following. Given some problem $\Pi$, it may be cumbersome to directly define an advice schema for it. Instead, it may be easier to do this operation gradually. In more detail, many problems can be solved as follows: first, find a solution for some subproblem; then, use the solution for the subproblem in order to solve the problem more easily. As a running example, consider the following problem $\Pi$.
\begin{itemize}
	\item The input is a bipartite graph where all nodes have even degree.
	\item It is required to output a coloring of the edges, say red and blue, such that each node has the same number of red and blue incident edges.
\end{itemize}
Consider the following three problems.
\begin{itemize}
	\item Let $\Pi_\mathrm{v}$ be the problem of computing a $2$-coloring of the nodes, say, black and white.
	\item Let $\Pi_\mathrm{e}$ be the problem of outputting a $2$-coloring of the edges, such that each node has the same number of red and blue incident edges, assuming that we are given as input a $2$-coloring of the nodes and a balanced orientation of the edges, and assuming that all nodes have even degree.
	\item Let $\Pi_\mathrm{o}$ be the problem of orienting all edges such that each node has the same number of incoming and outgoing edges, assuming that all nodes have even degree.
\end{itemize}
Consider the following algorithm. First, solve $\Pi_\mathrm{v}$ and $\Pi_\mathrm{o}$. Then, solve $\Pi_\mathrm{e}$, by coloring \emph{red} the edges oriented from black to white, and by coloring \emph{blue} the edges oriented from white to black. Observe that this algorithm solves $\Pi$.

In other words, we decomposed $\Pi$ into three different subproblems: two of them are ``hard'' problems ($\Pi_\mathrm{v}$ and $\Pi_\mathrm{o}$), for which some advice is needed if we want to solve them fast, and one ($\Pi_\mathrm{e}$) is trivial once we are given as input the solution for the other two. The idea now is to devise two different advice schemas for $\Pi_\mathrm{v}$ and $\Pi_\mathrm{o}$ separately.
\begin{itemize}
	\item For $\Pi_\mathrm{v}$ there is a trivial advice schema:  use $1$ bit to encode the $2$-coloring of the nodes.
	\item For $\Pi_\mathrm{o}$ the advice schema is non-trivial, and we will see how it can be done in \Cref{ssec:balanced-orientation} by using $1$ bit per node.
\end{itemize}
While both schemas use only $1$ bit per node, we cannot directly combine these two schemas into a single schema that uses $1$ bit per node. For this reason, we will devise our schemas in a more high-level way. In particular, for most of the results of this paper, we will not directly provide a uniform fixed-length advice schema that uses one bit per node, but we will start by devising variable-length schemas. Variable-length schemas provide a simpler way to encode information: we can choose a subset of nodes and assign bit-strings to them, without worrying about how these strings will be later encoded by giving one bit to each node. Moreover, the variable-length schemas that we provide satisfy a property called \emph{composability}, which we will formally define later. Intuitively, this property requires the ability to make the set of bit-holding nodes of a variable-length schema to be arbitrarily sparse. For example, for $\Pi_\mathrm{v}$, we do not really need to provide the color of all nodes: we assign $1$ bit to a sparse set of nodes (encoding their color), and to all other nodes we do not assign any bit. The nodes that have no bit assigned can still recover a $2$-coloring by simple propagation. For $\Pi_\mathrm{o}$, things are more complicated, but a composable variable-length schema can be obtained as well.

Once we have a composable schema for all subproblems that are used to solve our problem of interest, we apply, as a black box, a lemma (see \Cref{lem:compose}) to obtain a single variable-length schema for our problem. Then, again as a black box, we convert such a schema into a uniform fixed-length schema that uses a single bit per node (see \Cref{lem:composable-to-1bit}).

Summarizing, many of the results that we provide about schemas that use a single bit per node are based on the following idea:
\begin{itemize}
	\item We first decompose a problem of interest into many subproblems;
	\item We show that, for each of the subproblems, it is possible to devise a variable-length schema that satisfies some desirable properties;
	\item We prove that such properties imply that we can compose many variable-length schemas into a single one that satisfies the same properties (see \Cref{lem:compose});
	\item We prove that the resulting variable-length schema implies a uniform fixed-length schema that uses a single bit per node (see \Cref{lem:composable-to-1bit}).
\end{itemize}
Note that the last two steps are done in a problem-independent way. The conditions that allow to combine multiple variable-length schemas are expressed in \Cref{def:composable}, that, on a high level, states the following. We want our variable-length schema to be tunable as a function of three parameters $\alpha$, $\gamma$, and $c$. The schema needs to satisfy that in each $\alpha$-radius neighborhood there are at most $\gamma$ bit-holding nodes, and that the number of bits held by these nodes is upper bounded by $c \alpha / \gamma^3$. Hence, a composable schema is a collection of schemas, one for each choice of parameters $\alpha$, $\gamma$, and~$c$.
\begin{definition}[Composability]\label{def:composable}
	A $(\mathcal{G},\Pi,\gamma_0, A, T)$-composable advice schema is a collection $\mathcal{S}$ of advice schemas satisfying the following.
	For any constant $c > 0$, any $\gamma \ge \gamma_0$, and any $\alpha \ge A(c,\gamma)$, there exists $\beta \le c \alpha / \gamma^3$ such that:
	\begin{itemize}
		\item The collection $\mathcal{S}$ contains a variable-length $(\mathcal{G},\Pi,\beta,T(\alpha,\Delta))$-advice schema $S$.
		\item For each $G \in \mathcal{G}$, the assignment given by $S$ to the nodes of $G$ satisfies that, in each $\alpha$-radius neighborhood of $G$, there are at most $\gamma_0$ bit-holding nodes.
	\end{itemize}
\end{definition}

\section{Structural results and lower bounds}\label{sec:structural}

Our main goal here is to show that if we were able to solve all LCLs with some constant number of bits of advice per node, it would violate the \emph{Exponential-Time Hypothesis} (ETH) \cite{eth}. However, to do that we need to first show a structural result, which is also of independent interest: algorithms that make use of advice can be made \emph{order-invariant}, i.e., they do not need the numerical values of the identifiers.

\subsection{Order-invariance}\label{ssec:order-invariance}

In general, in an advice schema our advice bits may depend on the numerical values of the identifiers, and the algorithm $\alg$ that solves the problem may make use of the advice bits and the numerical values of the identifiers. However, here we will show that we can essentially for free eliminate the dependency on the numerical values of the identifiers.

A distributed algorithm is called \emph{order-invariant} \cite{what-can-be-computed-locally} if the output remains the same if we change the numerical values of the identifiers but preserve their relative order. Put otherwise, an order-invariant algorithm may make use of the graph structure, local inputs (which in our case includes the advice), and the relative order of the identifiers, but not the numerical values of the identifiers.

\citet*{what-can-be-computed-locally} showed that constant-time distributed algorithms that solve LCL problems can be made order-invariant, and since then order-invariant algorithms have played a key role in many lower-bound results related to constant-time distributed algorithms, see e.g.\ \cite{goos13local-approximation,goos15mep-lb}.

We say that graph problem $\Pi$ is \emph{component-wise-defined} if the following holds: a valid solution for graph $G$ is also a valid solution for a connected component that is isomorphic to $G$. Put otherwise, we can always solve $\Pi$ by splitting the graph into connected components and solving $\Pi$ separately in each component. In essence all graph problems of interest (especially in the distributed setting) are component-wise defined, but one can come up with problems that do not satisfy this property (a trivial example being the task of outputting the number of nodes in the graph).

We say that graph problem $\Pi$ is a \emph{finite-input} problem if the nodes and edges are either unlabeled, or they are labeled with labels from some finite set $\Sigma_{\mathrm{in}}$.

Note that Theorem~\ref{thm:order-invariant} below is applicable to LCL problems (they are component-wise defined, and in any LCL we have a bounded $\Delta$ and bounded alphabets $\Sigma_{\mathrm{in}},\Sigma_{\mathrm{out}}$), but also to many other problems that are not locally checkable. We emphasize that we will assume some bound on the maximum degree $\Delta$, but if we have an encoding schema that works for any $\Delta$, we can apply the following result separately for each $\Delta$.
\begin{theorem}\label{thm:order-invariant}
	Fix a maximum degree $\Delta$ and the number of advice bits $\beta$. Assume that $\Pi$ is a finite-input component-wise defined graph problem, in which the task is to label nodes with labels from some finite set $\Sigma_{\mathrm{out}}$. Assume that we can solve some graph problem $\Pi$ with some distributed algorithm $\alg$ in $T$ rounds using $\beta$ bits of advice. Then we can also solve $\Pi$ with an order-invariant algorithm $\alg'$ in $T$ rounds using $\beta$ bits of advice.
\end{theorem}

\begin{proof}
	Let $E$ be the \emph{encoder} that, given a graph $G$ (with some unique identifiers), produces $\beta$-bit advice strings that $\alg$ can then use to solve $\Pi$.

	In the first steps, we follow the basic idea of \citet*{what-can-be-computed-locally} to manipulate $\alg$. 
	Algorithm $\alg$ is a mapping from labeled radius-$T$ neighborhoods to local outputs; we write here $N^T(v)$ for the radius-$T$ neighborhood of node $v$, and we let $s = \Delta^{T+1}$ be an upper bound on the number of nodes in $N^T(v)$.

	In general, we can encode all information in $N^T(v)$ as follows:
	\begin{enumerate}
		\item The \emph{set $X \subseteq \{ 1,2,\dotsc \}$ of unique identifiers} present in the neighborhood, with $|X| \le s$.
		\item The \emph{structure} $S$ of the neighborhood, which includes the graph topology, the relative order of the identifiers, local inputs, and the advice bits.
	\end{enumerate}
	So we can reinterpret $\alg$ as a function that maps a pair $(X, S)$ to the local output $\alg(X, S) \in \Sigma_{\mathrm{out}}$. But we can equally well interpret $\alg$ as a function $A$ that maps $X$ to a \emph{type} $A(X) = F$, where $F$ is a function that maps $S$ to the local output $F(S) \in \Sigma_{\mathrm{out}}$; we simply let $A(X)(S) = \alg(X, S)$. The second interpretation turns out to be convenient.

	Notice that we can always pad $X$ with additional identifiers (that use values larger than any value in $X$), so that we will have $|X| = s$.

	Then we make the key observation: \emph{there are only finitely many different structures} $S$. This follows from the assumptions that $\Delta$ is a constant, $\beta$ is a constant, $\Pi$ is a finite-input problem, and $T$ is a constant. Furthermore, $\Sigma_{\mathrm{out}}$ is finite, so $F$ is a mapping from a finite set to a finite set, and follows that \emph{there are only finitely many different types} $F$. Let $m$ be the number of possible types, and identify the types with $1, 2, \dots, m$.
	
	With this interpretation, $A$ defines a \emph{coloring} (in the Ramsey-theoretic sense) of all $s$-subsets of natural numbers with $m$ colors, that is, it assigns to each subset $X \subseteq \{1,2,\dotsc\}$ with $|X| = s$ some \emph{color} $A(X) \in \{1,2,\dotsc,m\}$. By applying the infinite multigraph version of Ramsey's theorem, it follows that there exists an infinite set of natural numbers $I$ and a single \emph{canonical type} $F^*$ such that for any $X \subseteq I$ we have $A(X) = F^*$. Set $I$ is called a \emph{monochromatic set}.

	Now let us stop for a moment and digest what we have learned so far: if our unique identifiers came from the set $I$, then we will have $\alg(X,S) = A(X)(S) = F^*(S)$. That is, $\alg$ will then ignore the numerical values of the identifiers and only pay attention to the structure $S$. However, this is a big if; in general our identifiers can be arbitrary, and even adversarial.

	Let us now continue; we will now modify the encoder $E$. We construct a new encoder $E'$ that works as follows. Assume we are given a graph $G$, together with some assignment of unique identifiers. The encoder $E'$ first constructs graph $G_1$ by renumbering the identifiers (but preserving their order) so that all identifiers in $G_1$ come from the monochromatic set $I$.

	Now we would like to apply encoder $E$ to $G_1$, but we cannot. Our encoder may assume that the identifiers in an $n$-node graph come from the set $\{1,2,\dotsc,\poly(n)\}$, while now we have made our identifiers astronomically large. But to fix this we exploit the fact that $\Pi$ is component-wise solvable. We simply construct a new graph $H$ that consists of sufficiently many copies of $G$, such that the first copy is $G_1$, with unique identifiers coming from $I$, while in the other copies we assign identifiers from $\{1,2,\dotsc\} \setminus I$. This way we can arrange things so that $H$ has $N$ nodes, for some (very large) number $N$, and the identifiers are assigned from $\{1,2,\dotsc,N\}$.

	Now $H$ is a valid instance, and we can feed it to the encoder $E$, which will label it with $\beta$-bit labels so that if we apply $\alg$ to $H$ and these labels, we will correctly solve $\Pi$ in each component of $H$. In particular, we will correctly solve $\Pi$ in $G_1$. Moreover, in component $G_1$, algorithm $\alg$ will apply order-invariant algorithm $F^*$ in all neighborhoods, ignoring the numerical values of the identifiers.

	However, we needed an encoding for our \emph{original} graph $G$, with the \emph{original} set of identifiers. To do that, we proceed as follows. We make the above thought experiment, to construct the advice for $H$. Then we simply copy the advice bits from component $G_1$ to the original graph $G$.

	If we now applied $\alg$ to solve $\Pi$ in $G$, it does not necessarily work. However, if we apply $\alg'(X,S) = F^*(S)$ to solve $\Pi$ in $G$, it will behave in exactly the same way as applying $\alg$ in $G_1$. Hence, $\alg'$ will also solve $\Pi$ correctly in $G$. Furthermore, $\alg'$ is by construction order-invariant.
\end{proof}

We note that the encoder $E'$ constructed above is not practical or efficient (even if the original encoder $E$ is). As we will see next, merely knowing that $\alg'$ exists can be sufficient.

\subsection{Hardness assuming the Exponential-Time Hypothesis}\label{ssec:eth}

In \cref{ssec:subexp-growth} we show that in graphs with sub-exponential growth, any LCL can be solved with one bit of advice per node. We now show that one bit does not suffice in general, assuming the Exponential-Time Hypothesis \cite{eth}.

Recall that the Exponential-Time Hypothesis states that there is a positive constant $\delta > 0$ such that 3-SAT cannot be solved in time $O(2^{\delta n})$, where $n$ is the number of variables in the 3-SAT instance. 
\begin{theorem}\label{thm:eth}
	Fix any $\beta$. Assume that for every LCL problem $\Pi$ there is some $T$ such that, on any input, $\Pi$ can be solved in $T$ rounds with $\beta$ bits of advice. Then the Exponential-Time Hypothesis is false.
\end{theorem}

\begin{proof}
	Suppose there is some $\beta$ such that all LCL problems can be solved with $\beta$ bits of advice. We show how to then solve 3-SAT in time $O(2^{\varepsilon n} \cdot n \cdot f(\varepsilon,\beta))$ for an arbitrarily small $\varepsilon > 0$, violating the Exponential-Time Hypothesis.

	Consider some 3-SAT instance $\phi$, with $n$ variables and $m$ clauses. By applying the sparsification lemma by \cite{sparsification-lemma}, we can assume w.l.o.g.\ that $m = O(n)$.

	Now turn $\phi$ into a bipartite graph, with nodes representing variables on one side and nodes representing clauses on the other side. Each clause is connected by edges with the three variables it contains. Each node is labeled by its type (variable or clause) and each edge $vc$ is labeled to indicate whether the variable $v$ is negated in the clause $c$. This results in a bipartite graph with $n + m = O(n)$ nodes, and each clause-node has degree at most $3$. Hence, the total number of edges is at most $3m$.
	
	Variable-nodes may have arbitrarily high degrees, but the sum of their degrees is bounded by $3m$. We replace each variable-node that has degree $d > 3$ by a cycle of $d$ variable-nodes, with the new edges labeled by equality constraints. This results in a graph in which all nodes have degree at most $3$, and the total number of nodes is bounded by $3m + m = O(n)$. Let $G_0$ be the resulting graph, let $n_0 = O(n)$ be the number of nodes in $G_0$, let $\Delta_0 = 3$ be the maximum degree, and let $\ell_0 = O(1)$ be the number of node and edge labels that we used to encode the instance.

	Now it is easy to define an LCL problem $\Pi_0$ such that a solution of $\Pi_0$ in graph $G_0$ can be interpreted as a satisfying assignment of the variables in formula $\phi$, and vice versa.

	Now assume that we have defined an LCL problem $\Pi_i$ and a graph $G_i$ with $n_i$ nodes, maximum degree $\Delta_i$, and $\ell_i$ labels; the base case $i = 0$ was presented above. Define a new LCL problem $\Pi_{i+1}$ and a new graph $G_{i+1}$ as follows. We contract edges in $G_i$ to construct $G_{i+1}$ so that we satisfy two properties: each node in $G_{i+1}$ represents $O(1)$ nodes of $G_i$, but the number of nodes is $n_{i+1} \le n_{i} / 2$. This can be achieved by e.g.\ greedily contracting edges, favoring the edges whose endpoints currently represent the smallest number of nodes of $G_i$.

	We label the nodes of $G_{i+1}$ so that given the input labels of the nodes, we can also recover the original graph $G_{i}$. As each node represents a bounded number of original nodes, and the original graph had maximum degree $\Delta_i$, the new graph will have maximum degree $\Delta_{i+1}$ that only depends on $i$. Also, the number of labels $\ell_{i+1}$ will be bounded by a constant that only depends on $i$. To see that everything can be indeed encoded with constant-size labels, note that each new node $v$ in $G_{i+1}$ can have its own \emph{local numbering} of the original nodes $v'$ that were contracted to $v$, that is, each node $v'$ can be represented as a pair $(v,a)$, where $v$ is a new node and $a$ is a sequence number. This way, the new label of node $v$ can use constant-size triples of the form $(a,b,x)$ to encode that graph $G_i$ had an edge from $(v,a)$ to $(v,b)$ with label $x$, and the new edge label of edge $(u,v)$ can also use constant-size triples of the form $(a,b,x)$ to encode that graph $G_i$ had an edge from $(u,a)$ to $(v,b)$ with label $x$. Finally, the new node label of $v$ will use constant-size pairs of the form $(a,x)$ to encode that the node label of $(v,a)$ in $G_i$ was $x$. This way we can use constant-size node and edge labels in $G_{i+1}$ to encode the structure and node and edge labels of $G_i$.

	Now we can define an LCL problem $\Pi_{i+1}$ such that a valid solution of $\Pi_{i+1}$ in $G_{i+1}$ can be mapped to a valid solution of $\Pi_i$ in $G_i$, and hence eventually to a satisfying assignment of $\phi$, and conversely a satisfying assignment of $\phi$ can be turned into a valid solution of $\Pi_{i+1}$. In essence, the output label of a node in $G_{i+1}$ captures the output labels of all $O(1)$ nodes of $G_i$ that it represents.

	Continuing this way for $\Theta(\log(\beta/\varepsilon))$ steps, we can construct a graph $G_i$ with fewer than $\varepsilon n / \beta$ nodes. Furthermore, there is some LCL problem $\Pi_i$ such that $\phi$ is a yes-instance if and only if there is a valid solution of $\Pi_i$ in $G_i$.

	By assumption, there exists a distributed algorithm $\alg$ that solves $\Pi_i$ with $\beta$ bits of advice per node. Using \cref{thm:order-invariant}, we can also assume that $\alg$ is order-invariant. Hence, $\alg$ is a finite function, and we can compute $\alg$ in constant time (where the constant depends on $i$, which depends on $\beta$ and $\varepsilon$, but is independent of $n$).

	Now, we simply try out all possible strings of advice; there are at most $2^{\varepsilon n}$ such strings. For each advice combination, we try to apply $\alg$ to solve $\Pi_i$ in $G_i$; we simulate $\alg$ at each of the $n_i = O(n)$ nodes. If and only if $\phi$ is satisfiable, we will find an advice string such that $\alg$ succeeds in solving $\Pi_i$. The overall running time is $O(2^{\varepsilon n} \cdot n \cdot f(\varepsilon,\beta))$.
\end{proof}

\makeatletter
\renewcommand{\bibsection}{\section*{\refname}}
\makeatother
\bibliographystyle{plainnat}
\bibliography{refs}

\begin{thebibliography}{57}
\providecommand{\natexlab}[1]{#1}
\providecommand{\url}[1]{\texttt{#1}}
\expandafter\ifx\csname urlstyle\endcsname\relax
  \providecommand{\doi}[1]{doi: #1}\else
  \providecommand{\doi}{doi: \begingroup \urlstyle{rm}\Url}\fi

\bibitem[Ard{\'e}vol~Mart{\'\i}nez et~al.(2023)Ard{\'e}vol~Mart{\'\i}nez, Caoduro, Feuilloley, Narboni, Pournajafi, and Raymond]{lower-bound-constant-size-local-cert}
Virginia Ard{\'e}vol~Mart{\'\i}nez, Marco Caoduro, Laurent Feuilloley, Jonathan Narboni, Pegah Pournajafi, and Jean-Florent Raymond.
\newblock A lower bound for constant-size local certification.
\newblock \emph{Theoretical Computer Science}, 971:\penalty0 114068, 2023.
\newblock \doi{10.1016/J.TCS.2023.114068}.

\bibitem[Balliu et~al.(2018)Balliu, Hirvonen, Korhonen, Lempi{\"a}inen, Olivetti, and Suomela]{balliu18lcl-complexity}
Alkida Balliu, Juho Hirvonen, Janne~H. Korhonen, Tuomo Lempi{\"a}inen, Dennis Olivetti, and Jukka Suomela.
\newblock New classes of distributed time complexity.
\newblock In Ilias Diakonikolas, David Kempe, and Monika Henzinger, editors, \emph{Proceedings of the 50th Annual {ACM} {SIGACT} Symposium on Theory of Computing, {STOC} 2018, Los Angeles, CA, USA, June 25-29, 2018}, pages 1307--1318. {ACM}, 2018.
\newblock \doi{10.1145/3188745.3188860}.

\bibitem[Balliu et~al.(2020)Balliu, Brandt, Olivetti, and Suomela]{balliu20lcl-randomness}
Alkida Balliu, Sebastian Brandt, Dennis Olivetti, and Jukka Suomela.
\newblock How much does randomness help with locally checkable problems?
\newblock In Yuval Emek and Christian Cachin, editors, \emph{{PODC} '20: {ACM} Symposium on Principles of Distributed Computing, Virtual Event, Italy, August 3-7, 2020}, pages 299--308. {ACM}, 2020.
\newblock \doi{10.1145/3382734.3405715}.

\bibitem[Balliu et~al.(2021{\natexlab{a}})Balliu, Brandt, Olivetti, and Suomela]{balliu20almost-global}
Alkida Balliu, Sebastian Brandt, Dennis Olivetti, and Jukka Suomela.
\newblock Almost global problems in the {LOCAL} model.
\newblock \emph{Distributed Computing}, 34\penalty0 (4):\penalty0 259--281, 2021{\natexlab{a}}.
\newblock \doi{10.1007/S00446-020-00375-2}.

\bibitem[Balliu et~al.(2021{\natexlab{b}})Balliu, Censor-Hillel, Maus, Olivetti, and Suomela]{balliu21lcl-congest}
Alkida Balliu, Keren Censor-Hillel, Yannic Maus, Dennis Olivetti, and Jukka Suomela.
\newblock Locally checkable labelings with small messages.
\newblock In Seth Gilbert, editor, \emph{35th International Symposium on Distributed Computing, {DISC} 2021, October 4-8, 2021, Freiburg, Germany (Virtual Conference)}, volume 209 of \emph{LIPIcs}, pages 8:1--8:18. Schloss Dagstuhl - Leibniz-Zentrum f{\"{u}}r Informatik, 2021{\natexlab{b}}.
\newblock \doi{10.4230/LIPICS.DISC.2021.8}.

\bibitem[Barenboim et~al.(2022)Barenboim, Elkin, and Goldenberg]{BarenboimEG18}
Leonid Barenboim, Michael Elkin, and Uri Goldenberg.
\newblock Locally-iterative distributed {$(\Delta+1)$}-coloring and applications.
\newblock \emph{Journal of the ACM}, 69\penalty0 (1):\penalty0 5:1--5:26, 2022.
\newblock \doi{10.1145/3486625}.

\bibitem[Bousquet et~al.(2024)Bousquet, Feuilloley, and Zeitoun]{BFZ24}
Nicolas Bousquet, Laurent Feuilloley, and S{\'e}bastien Zeitoun.
\newblock Local certification of local properties: Tight bounds, trade-offs and new parameters.
\newblock In Olaf Beyersdorff, Mamadou~Moustapha Kant{\'e}, Orna Kupferman, and Daniel Lokshtanov, editors, \emph{41st International Symposium on Theoretical Aspects of Computer Science, {STACS} 2024, March 12-14, 2024, Clermont-Ferrand, France}, volume 289 of \emph{LIPIcs}, pages 21:1--21:18. Schloss Dagstuhl - Leibniz-Zentrum f{\"{u}}r Informatik, 2024.
\newblock \doi{10.4230/LIPICS.STACS.2024.21}.

\bibitem[Boyar et~al.(2017)Boyar, Favrholdt, Kudahl, Larsen, and Mikkelsen]{BoyarFKLMsurvey}
Joan Boyar, Lene~M. Favrholdt, Christian Kudahl, Kim~S. Larsen, and Jesper~W. Mikkelsen.
\newblock Online algorithms with advice: {A} survey.
\newblock \emph{ACM Computing Surveys}, 50\penalty0 (2):\penalty0 19:1--19:34, 2017.
\newblock \doi{10.1145/3056461}.

\bibitem[Brandt et~al.(2016)Brandt, Fischer, Hirvonen, Keller, Lempi{\"a}inen, Rybicki, Suomela, and Uitto]{brandt16lll}
Sebastian Brandt, Orr Fischer, Juho Hirvonen, Barbara Keller, Tuomo Lempi{\"a}inen, Joel Rybicki, Jukka Suomela, and Jara Uitto.
\newblock A lower bound for the distributed {L}ov{\'a}sz local lemma.
\newblock In Daniel Wichs and Yishay Mansour, editors, \emph{Proceedings of the 48th Annual {ACM} {SIGACT} Symposium on Theory of Computing, {STOC} 2016, Cambridge, MA, USA, June 18-21, 2016}, pages 479--488. {ACM}, 2016.
\newblock \doi{10.1145/2897518.2897570}.

\bibitem[Chang and Pettie(2019)]{Chang2019}
Yi-Jun Chang and Seth Pettie.
\newblock A time hierarchy theorem for the {LOCAL} model.
\newblock \emph{SIAM Journal on Computing}, 48\penalty0 (1):\penalty0 33--69, 2019.
\newblock \doi{10.1137/17M1157957}.

\bibitem[Chang et~al.(2019)Chang, Kopelowitz, and Pettie]{chang19exponential}
Yi-Jun Chang, Tsvi Kopelowitz, and Seth Pettie.
\newblock An exponential separation between randomized and deterministic complexity in the {LOCAL} model.
\newblock \emph{SIAM Journal on Computing}, 48\penalty0 (1):\penalty0 122--143, 2019.
\newblock \doi{10.1137/17M1117537}.

\bibitem[Christiansen et~al.(2023)Christiansen, Nowicki, and Rotenberg]{ChristiansenNR23}
Aleksander Bj{\o}rn~Grodt Christiansen, Krzysztof Nowicki, and Eva Rotenberg.
\newblock Improved dynamic colouring of sparse graphs.
\newblock In Barna Saha and Rocco~A. Servedio, editors, \emph{Proceedings of the 55th Annual {ACM} Symposium on Theory of Computing, {STOC} 2023, Orlando, FL, USA, June 20-23, 2023}, pages 1201--1214. {ACM}, 2023.
\newblock \doi{10.1145/3564246.3585111}.

\bibitem[Dereniowski and Pelc(2012)]{dereniowski2012maps-advice}
Dariusz Dereniowski and Andrzej Pelc.
\newblock Drawing maps with advice.
\newblock \emph{Journal of Parallel and Distributed Computing}, 72\penalty0 (2):\penalty0 132--143, 2012.
\newblock \doi{10.1016/J.JPDC.2011.10.004}.

\bibitem[Dobrev et~al.(2012)Dobrev, Kr{\'a}lovic, and Markou]{dobrev2012exploration}
Stefan Dobrev, Rastislav Kr{\'a}lovic, and Euripides Markou.
\newblock Online graph exploration with advice.
\newblock In Guy Even and Magn{\'u}s~M. Halld{\'o}rsson, editors, \emph{Structural Information and Communication Complexity - 19th International Colloquium, {SIROCCO} 2012, Reykjavik, Iceland, June 30-July 2, 2012, Revised Selected Papers}, volume 7355 of \emph{Lecture Notes in Computer Science}, pages 267--278. Springer, 2012.
\newblock \doi{10.1007/978-3-642-31104-8_23}.

\bibitem[Emek et~al.(2009)Emek, Fraigniaud, Korman, and Ros{\'e}n]{EmekFKR09}
Yuval Emek, Pierre Fraigniaud, Amos Korman, and Adi Ros{\'e}n.
\newblock Online computation with advice.
\newblock In Susanne Albers, Alberto Marchetti-Spaccamela, Yossi Matias, Sotiris~E. Nikoletseas, and Wolfgang Thomas, editors, \emph{Automata, Languages and Programming, 36th International Colloquium, {ICALP} 2009, Rhodes, Greece, July 5-12, 2009, Proceedings, Part {I}}, volume 5555 of \emph{Lecture Notes in Computer Science}, pages 427--438. Springer, 2009.
\newblock \doi{10.1007/978-3-642-02927-1_36}.

\bibitem[Erd{\H{o}}s and Lov{\'a}sz(1975)]{lll}
P.~Erd{\H{o}}s and L.~Lov{\'a}sz.
\newblock Problems and results on {$3$}-chromatic hypergraphs and some related questions.
\newblock In \emph{Infinite and finite sets ({C}olloq., {K}eszthely, 1973; dedicated to {P}. {E}rd\H{o}s on his 60th birthday), {V}ols. {I}, {II}, {III}}, volume Vol. 10 of \emph{Colloq. Math. Soc. J\'{a}nos Bolyai}, pages 609--627. North-Holland, Amsterdam-London, 1975.

\bibitem[Feuilloley and Fraigniaud(2017)]{feuilloley2017survey}
Laurent Feuilloley and Pierre Fraigniaud.
\newblock Survey of distributed decision, 2017.

\bibitem[Feuilloley et~al.(2021)Feuilloley, Fraigniaud, Hirvonen, Paz, and Perry]{redundancy-in-distr-proofs}
Laurent Feuilloley, Pierre Fraigniaud, Juho Hirvonen, Ami Paz, and Mor Perry.
\newblock Redundancy in distributed proofs.
\newblock \emph{Distributed Computing}, 34\penalty0 (2):\penalty0 113--132, 2021.
\newblock \doi{10.1007/S00446-020-00386-Z}.

\bibitem[Fischer and Ghaffari(2017)]{fischer17sublogarithmic}
Manuela Fischer and Mohsen Ghaffari.
\newblock Sublogarithmic distributed algorithms for {L}ov{\'a}sz local lemma, and the complexity hierarchy.
\newblock In Andr{\'e}a~W. Richa, editor, \emph{31st International Symposium on Distributed Computing, {DISC} 2017, October 16-20, 2017, Vienna, Austria}, volume~91 of \emph{LIPIcs}, pages 18:1--18:16. Schloss Dagstuhl - Leibniz-Zentrum f{\"{u}}r Informatik, 2017.
\newblock \doi{10.4230/LIPICS.DISC.2017.18}.

\bibitem[Fraigniaud et~al.(2007)Fraigniaud, Korman, and Lebhar]{fraigniaud2007mst-advice}
Pierre Fraigniaud, Amos Korman, and Emmanuelle Lebhar.
\newblock Local {MST} computation with short advice.
\newblock In Phillip~B. Gibbons and Christian Scheideler, editors, \emph{{SPAA} 2007: Proceedings of the 19th Annual {ACM} Symposium on Parallelism in Algorithms and Architectures, San Diego, California, USA, June 9-11, 2007}, pages 154--160. {ACM}, 2007.
\newblock \doi{10.1145/1248377.1248402}.

\bibitem[Fraigniaud et~al.(2008)Fraigniaud, Ilcinkas, and Pelc]{fraigniaud2008tree-advice}
Pierre Fraigniaud, David Ilcinkas, and Andrzej Pelc.
\newblock Tree exploration with advice.
\newblock \emph{Information and Computation}, 206\penalty0 (11):\penalty0 1276--1287, 2008.
\newblock \doi{10.1016/J.IC.2008.07.005}.

\bibitem[Fraigniaud et~al.(2009)Fraigniaud, Gavoille, Ilcinkas, and Pelc]{distr-comp-with-advice}
Pierre Fraigniaud, Cyril Gavoille, David Ilcinkas, and Andrzej Pelc.
\newblock Distributed computing with advice: information sensitivity of graph coloring.
\newblock \emph{Distributed Computing}, 21\penalty0 (6):\penalty0 395--403, 2009.
\newblock \doi{10.1007/S00446-008-0076-Y}.

\bibitem[Fraigniaud et~al.(2010)Fraigniaud, Ilcinkas, and Pelc]{fraigniaud2010advice}
Pierre Fraigniaud, David Ilcinkas, and Andrzej Pelc.
\newblock Communication algorithms with advice.
\newblock \emph{Journal of Computer and System Sciences}, 76\penalty0 (3-4):\penalty0 222--232, 2010.
\newblock \doi{10.1016/J.JCSS.2009.07.002}.

\bibitem[Fraigniaud et~al.(2016)Fraigniaud, Heinrich, and Kosowski]{FHK16}
Pierre Fraigniaud, Marc Heinrich, and Adrian Kosowski.
\newblock Local conflict coloring.
\newblock In Irit Dinur, editor, \emph{{IEEE} 57th Annual Symposium on Foundations of Computer Science, {FOCS} 2016, 9-11 October 2016, Hyatt Regency, New Brunswick, New Jersey, {USA}}, pages 625--634. {IEEE} Computer Society, 2016.
\newblock \doi{10.1109/FOCS.2016.73}.

\bibitem[Fusco and Pelc(2011)]{fusco2011tradeoffs-advice}
Emanuele~G. Fusco and Andrzej Pelc.
\newblock Trade-offs between the size of advice and broadcasting time in trees.
\newblock \emph{Algorithmica}, 60\penalty0 (4):\penalty0 719--734, 2011.
\newblock \doi{10.1007/S00453-009-9361-9}.

\bibitem[Fusco et~al.(2016)Fusco, Pelc, and Petreschi]{fusco2016topology-advice}
Emanuele~G. Fusco, Andrzej Pelc, and Rossella Petreschi.
\newblock Topology recognition with advice.
\newblock \emph{Information and Computation}, 247:\penalty0 254--265, 2016.
\newblock \doi{10.1016/J.IC.2016.01.005}.

\bibitem[Ghaffari and Su(2017)]{ghaffari17distributed}
Mohsen Ghaffari and Hsin-Hao Su.
\newblock Distributed degree splitting, edge coloring, and orientations.
\newblock In Philip~N. Klein, editor, \emph{Proceedings of the Twenty-Eighth Annual {ACM-SIAM} Symposium on Discrete Algorithms, {SODA} 2017, Barcelona, Spain, Hotel Porta Fira, January 16-19}, pages 2505--2523. {SIAM}, 2017.
\newblock \doi{10.1137/1.9781611974782.166}.

\bibitem[Ghaffari et~al.(2018)Ghaffari, Harris, and Kuhn]{Ghaffari2018}
Mohsen Ghaffari, David~G. Harris, and Fabian Kuhn.
\newblock On derandomizing local distributed algorithms.
\newblock In Mikkel Thorup, editor, \emph{59th {IEEE} Annual Symposium on Foundations of Computer Science, {FOCS} 2018, Paris, France, October 7-9, 2018}, pages 662--673. {IEEE} Computer Society, 2018.
\newblock \doi{10.1109/FOCS.2018.00069}.

\bibitem[Ghaffari et~al.(2020{\natexlab{a}})Ghaffari, Grunau, and Jin]{ghaffari_et_al:LIPIcs.DISC.2020.34}
Mohsen Ghaffari, Christoph Grunau, and Ce~Jin.
\newblock Improved {MPC} algorithms for mis, matching, and coloring on trees and beyond.
\newblock In Hagit Attiya, editor, \emph{34th International Symposium on Distributed Computing, {DISC} 2020, October 12-16, 2020, Virtual Conference}, volume 179 of \emph{LIPIcs}, pages 34:1--34:18. Schloss Dagstuhl - Leibniz-Zentrum f{\"{u}}r Informatik, 2020{\natexlab{a}}.
\newblock \doi{10.4230/LIPICS.DISC.2020.34}.

\bibitem[Ghaffari et~al.(2020{\natexlab{b}})Ghaffari, Hirvonen, Kuhn, Maus, Suomela, and Uitto]{degree-splitting}
Mohsen Ghaffari, Juho Hirvonen, Fabian Kuhn, Yannic Maus, Jukka Suomela, and Jara Uitto.
\newblock Improved distributed degree splitting and edge coloring.
\newblock \emph{Distributed Computing}, 33\penalty0 (3-4):\penalty0 293--310, 2020{\natexlab{b}}.
\newblock \doi{10.1007/S00446-018-00346-8}.

\bibitem[Ghaffari et~al.(2021)Ghaffari, Hirvonen, Kuhn, and Maus]{GHKM18}
Mohsen Ghaffari, Juho Hirvonen, Fabian Kuhn, and Yannic Maus.
\newblock Improved distributed {\(\Delta\)}-coloring.
\newblock \emph{Distributed Computing}, 34\penalty0 (4):\penalty0 239--258, 2021.
\newblock \doi{10.1007/S00446-021-00397-4}.

\bibitem[Glacet et~al.(2017)Glacet, Miller, and Pelc]{glacet2017leader-advice}
Christian Glacet, Avery Miller, and Andrzej Pelc.
\newblock Time vs. information tradeoffs for leader election in anonymous trees.
\newblock \emph{ACM Transactions on Algorithms}, 13\penalty0 (3):\penalty0 31:1--31:41, 2017.
\newblock \doi{10.1145/3039870}.

\bibitem[G{\"o}{\"o}s and Suomela(2016)]{goos16lcp}
Mika G{\"o}{\"o}s and Jukka Suomela.
\newblock Locally checkable proofs in distributed computing.
\newblock \emph{Theory of Computing}, 12\penalty0 (1):\penalty0 1--33, 2016.
\newblock \doi{10.4086/TOC.2016.V012A019}.

\bibitem[G{\"o}{\"o}s et~al.(2013)G{\"o}{\"o}s, Hirvonen, and Suomela]{goos13local-approximation}
Mika G{\"o}{\"o}s, Juho Hirvonen, and Jukka Suomela.
\newblock Lower bounds for local approximation.
\newblock \emph{Journal of the ACM}, 60\penalty0 (5):\penalty0 39:1--39:23, 2013.
\newblock \doi{10.1145/2528405}.

\bibitem[G{\"o}{\"o}s et~al.(2017)G{\"o}{\"o}s, Hirvonen, and Suomela]{goos15mep-lb}
Mika G{\"o}{\"o}s, Juho Hirvonen, and Jukka Suomela.
\newblock Linear-in-{$\Delta$} lower bounds in the {LOCAL} model.
\newblock \emph{Distributed Computing}, 30\penalty0 (5):\penalty0 325--338, 2017.
\newblock \doi{10.1007/S00446-015-0245-8}.

\bibitem[Gorain and Pelc(2019)]{gorain2018exploration-advice}
Barun Gorain and Andrzej Pelc.
\newblock Deterministic graph exploration with advice.
\newblock \emph{ACM Transactions on Algorithms}, 15\penalty0 (1):\penalty0 8:1--8:17, 2019.
\newblock \doi{10.1145/3280823}.

\bibitem[Ilcinkas et~al.(2010)Ilcinkas, Kowalski, and Pelc]{ilcinkas2010broadcasting-advice}
David Ilcinkas, Dariusz~R. Kowalski, and Andrzej Pelc.
\newblock Fast radio broadcasting with advice.
\newblock \emph{Theoretical Computer Science}, 411\penalty0 (14-15):\penalty0 1544--1557, 2010.
\newblock \doi{10.1016/J.TCS.2010.01.004}.

\bibitem[Impagliazzo and Paturi(1999)]{eth}
Russell Impagliazzo and Ramamohan Paturi.
\newblock Complexity of k-sat.
\newblock In \emph{Proceedings of the 14th Annual {IEEE} Conference on Computational Complexity, Atlanta, Georgia, USA, May 4-6, 1999}, pages 237--240. {IEEE} Computer Society, 1999.
\newblock \doi{10.1109/CCC.1999.766282}.

\bibitem[Impagliazzo et~al.(2001)Impagliazzo, Paturi, and Zane]{sparsification-lemma}
Russell Impagliazzo, Ramamohan Paturi, and Francis Zane.
\newblock Which problems have strongly exponential complexity?
\newblock \emph{Journal of Computer and System Sciences}, 63\penalty0 (4):\penalty0 512--530, 2001.
\newblock \doi{10.1006/JCSS.2001.1774}.

\bibitem[Komm et~al.(2015)Komm, Kr{\'a}lovic, Kr{\'a}lovic, and Smula]{komm2015treasure-advice}
Dennis Komm, Rastislav Kr{\'a}lovic, Richard Kr{\'a}lovic, and Jasmin Smula.
\newblock Treasure hunt with advice.
\newblock In Christian Scheideler, editor, \emph{Structural Information and Communication Complexity - 22nd International Colloquium, {SIROCCO} 2015, Montserrat, Spain, July 14-16, 2015, Post-Proceedings}, volume 9439 of \emph{Lecture Notes in Computer Science}, pages 328--341. Springer, 2015.
\newblock \doi{10.1007/978-3-319-25258-2_23}.

\bibitem[Korman and Kutten(2006)]{korman06distributed}
Amos Korman and Shay Kutten.
\newblock On distributed verification.
\newblock In Soma Chaudhuri, Samir~R. Das, Himadri~S. Paul, and Srikanta Tirthapura, editors, \emph{Distributed Computing and Networking, 8th International Conference, {ICDCN} 2006, Guwahati, India, December 27-30, 2006}, volume 4308 of \emph{Lecture Notes in Computer Science}, pages 100--114. Springer, 2006.
\newblock \doi{10.1007/11947950_12}.

\bibitem[Korman and Kutten(2007)]{korman07distributed}
Amos Korman and Shay Kutten.
\newblock Distributed verification of minimum spanning trees.
\newblock \emph{Distributed Computing}, 20\penalty0 (4):\penalty0 253--266, 2007.
\newblock \doi{10.1007/S00446-007-0025-1}.

\bibitem[Korman et~al.(2010{\natexlab{a}})Korman, Kutten, and Peleg]{korman10proof}
Amos Korman, Shay Kutten, and David Peleg.
\newblock Proof labeling schemes.
\newblock \emph{Distributed Computing}, 22\penalty0 (4):\penalty0 215--233, 2010{\natexlab{a}}.
\newblock \doi{10.1007/S00446-010-0095-3}.

\bibitem[Korman et~al.(2010{\natexlab{b}})Korman, Peleg, and Rodeh]{korman10constructing}
Amos Korman, David Peleg, and Yoav Rodeh.
\newblock Constructing labeling schemes through universal matrices.
\newblock \emph{Algorithmica}, 57\penalty0 (4):\penalty0 641--652, 2010{\natexlab{b}}.
\newblock \doi{10.1007/S00453-008-9226-7}.

\bibitem[Linial(1992)]{Linial92}
Nathan Linial.
\newblock Locality in distributed graph algorithms.
\newblock \emph{SIAM Journal on Computing}, 21\penalty0 (1):\penalty0 193--201, 1992.
\newblock \doi{10.1137/0221015}.

\bibitem[Maus and Tonoyan(2022)]{MausT22}
Yannic Maus and Tigran Tonoyan.
\newblock Linial for lists.
\newblock \emph{Distributed Computing}, 35\penalty0 (6):\penalty0 533--546, 2022.
\newblock \doi{10.1007/S00446-022-00424-Y}.

\bibitem[Miller and Pelc(2015{\natexlab{a}})]{miller2015rendezvous-advice}
Avery Miller and Andrzej Pelc.
\newblock Fast rendezvous with advice.
\newblock \emph{Theoretical Computer Science}, 608:\penalty0 190--198, 2015{\natexlab{a}}.
\newblock \doi{10.1016/J.TCS.2015.09.025}.

\bibitem[Miller and Pelc(2015{\natexlab{b}})]{miller2015treasure-hunt}
Avery Miller and Andrzej Pelc.
\newblock Tradeoffs between cost and information for rendezvous and treasure hunt.
\newblock \emph{Journal of Parallel and Distributed Computing}, 83:\penalty0 159--167, 2015{\natexlab{b}}.
\newblock \doi{10.1016/J.JPDC.2015.06.004}.

\bibitem[Miller and Pelc(2016)]{miller2016election-advice}
Avery Miller and Andrzej Pelc.
\newblock Election vs. selection: How much advice is needed to find the largest node in a graph?
\newblock In Christian Scheideler and Seth Gilbert, editors, \emph{Proceedings of the 28th {ACM} Symposium on Parallelism in Algorithms and Architectures, {SPAA} 2016, Asilomar State Beach/Pacific Grove, CA, USA, July 11-13, 2016}, pages 377--386. {ACM}, 2016.
\newblock \doi{10.1145/2935764.2935772}.

\bibitem[Mitzenmacher and Vassilvitskii(2020)]{DBLP:books/cu/20/MitzenmacherV20}
Michael Mitzenmacher and Sergei Vassilvitskii.
\newblock Algorithms with predictions.
\newblock In Tim Roughgarden, editor, \emph{Beyond the Worst-Case Analysis of Algorithms}, pages 646--662. Cambridge University Press, 2020.
\newblock \doi{10.1017/9781108637435.037}.

\bibitem[Naor and Stockmeyer(1995)]{what-can-be-computed-locally}
Moni Naor and Larry~J. Stockmeyer.
\newblock What can be computed locally?
\newblock \emph{SIAM Journal on Computing}, 24\penalty0 (6):\penalty0 1259--1277, 1995.
\newblock \doi{10.1137/S0097539793254571}.

\bibitem[Nisse and Soguet(2009)]{nisse2009searching-advice}
Nicolas Nisse and David Soguet.
\newblock Graph searching with advice.
\newblock \emph{Theoretical Computer Science}, 410\penalty0 (14):\penalty0 1307--1318, 2009.
\newblock \doi{10.1016/J.TCS.2008.08.020}.

\bibitem[Panconesi and Srinivasan(1992)]{PS92}
Alessandro Panconesi and Aravind Srinivasan.
\newblock Improved distributed algorithms for coloring and network decomposition problems.
\newblock In S.~Rao Kosaraju, Mike Fellows, Avi Wigderson, and John~A. Ellis, editors, \emph{Proceedings of the 24th Annual {ACM} Symposium on Theory of Computing, May 4-6, 1992, Victoria, British Columbia, Canada}, pages 581--592. {ACM}, 1992.
\newblock \doi{10.1145/129712.129769}.

\bibitem[Parnas and Ron(2007)]{PARNAS2007183}
Michal Parnas and Dana Ron.
\newblock Approximating the minimum vertex cover in sublinear time and a connection to distributed algorithms.
\newblock \emph{Theoretical Computer Science}, 381\penalty0 (1-3):\penalty0 183--196, 2007.
\newblock \doi{10.1016/J.TCS.2007.04.040}.

\bibitem[Rozhon and Ghaffari(2020)]{Rozhon2019}
V{\'a}clav Rozhon and Mohsen Ghaffari.
\newblock Polylogarithmic-time deterministic network decomposition and distributed derandomization.
\newblock In Konstantin Makarychev, Yury Makarychev, Madhur Tulsiani, Gautam Kamath, and Julia Chuzhoy, editors, \emph{Proceedings of the 52nd Annual {ACM} {SIGACT} Symposium on Theory of Computing, {STOC} 2020, Chicago, IL, USA, June 22-26, 2020}, pages 350--363. {ACM}, 2020.
\newblock \doi{10.1145/3357713.3384298}.

\bibitem[Shearer(1985)]{shearer}
James~B. Shearer.
\newblock On a problem of {Spencer}.
\newblock \emph{Combinatorica}, 5\penalty0 (3):\penalty0 241--245, 1985.
\newblock \doi{10.1007/BF02579368}.

\bibitem[Spencer(1977)]{SPENCER197769}
Joel Spencer.
\newblock Asymptotic lower bounds for ramsey functions.
\newblock \emph{Discrete Mathematics}, 20:\penalty0 69--76, 1977.
\newblock \doi{10.1016/0012-365X(77)90044-9}.

\end{thebibliography}

\newpage
\appendix

\section{LCLs on graphs with sub-exponential growth}\label{ssec:subexp-growth}
In this section, we show that on graphs of sub-exponential growth, we can solve any LCL problem in constant time by using $1$ bit per node of advice (we remind the reader that, as per definition of LCLs, the class of graphs that we consider are the ones with constant maximum degree, hence throughout this section we assume $\Delta=O(1)$). 
In more detail, we will prove the following theorem, and devote the rest of the section to proving it.

\begin{theorem}\label{thm:subexp}
	Let $\Pi$ be an LCL problem, and let $\mathcal{G}$ be a family of graphs of maximum degree $\Delta$ of sub-exponential growth in which $\Pi$ is solvable. Then, there exists a uniform fixed-length sparse $(\mathcal{G},\Pi,1,O(1))$-advice schema.
\end{theorem}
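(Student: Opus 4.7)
The strategy is to build a composable advice schema for $\Pi$ on $\mathcal{G}$ in the sense of \Cref{def:composable}, and then apply \Cref{lem:composable-to-1bit} to convert it into the uniform fixed-length $1$-bit sparse schema required by the theorem.

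For the composable schema, I would use a landmark-based encoding. Fix any valid solution $\phi^{\ast}$ of $\Pi$ on $G$ (it exists since $\Pi$ is solvable on $G$). Given $c > 0$ and $\gamma \ge \gamma_0$, choose $\alpha \ge A(c,\gamma)$ large and use the advice to (a)~designate a set $L \subseteq V$ of landmark nodes forming a ruling-set-like structure---pairwise at distance $\Theta(\alpha)$, with every node within distance $O(\alpha)$ of some landmark---and (b)~at each landmark $u \in L$, store a compact description of $\phi^{\ast}$ restricted to the Voronoi-style cluster $C_u$ of nodes closest to $u$ (ties broken by ID). The decoding algorithm has each node, in $O(\alpha)$ rounds, inspect its $O(\alpha)$-ball, identify its responsible landmark, and read off its output label; correctness is immediate from the construction, and the LCL checkability radius being constant ensures consistency across cluster boundaries.

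The subexponential-growth hypothesis enters in the bit accounting. Since $|B(v, r)| \le s(r) = 2^{o(r)}$, we have $\log s(r) = o(r)$, giving enough slack to encode an index into the set of possible cluster labelings in $o(\alpha)$ bits at each landmark---which satisfies the composability bound $\beta \le c\alpha/\gamma^3$ once $\alpha$ is chosen large enough as a function of $c$ and $\gamma$. The sparsity condition (at most $\gamma_0$ bit-holders per $\alpha$-ball) is enforced by the pairwise-distance separation of landmarks and a standard packing argument.

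The main obstacle is reconciling landmark sparsity with the size of per-landmark encodings. A naïve listing of $\phi^{\ast}$ on each cluster takes $s(\alpha)\log|\Sigma_{\mathrm{out}}|$ bits, which can exceed the $c\alpha/\gamma^3$ budget when $s$ grows strictly faster than linearly (still permitted by sub-exponential growth, e.g.\ $s(r) = 2^{\sqrt{r}}$). I expect the clean resolution to be a multi-scale composition via \Cref{lem:compose}: a coarse sparse-landmark schema captures global structure with short per-landmark hints, while finer composable sub-schemas handle cluster-level labelings at a smaller radius $r(\alpha)$ chosen so that $s(r(\alpha)) \le c\alpha/(\gamma^3 \log|\Sigma_{\mathrm{out}}|)$, which exists precisely because $\mathcal{G}$ has sub-exponential growth. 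Verifying the composability bounds at each level carries most of the technical weight; once the overall composable schema is established, the theorem follows mechanically from \Cref{lem:composable-to-1bit}, and tunable $\varepsilon$-sparsity comes from choosing the composable schema's parameters as a function of $\varepsilon$.
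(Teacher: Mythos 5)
There is a genuine gap in the proposal, and it sits exactly where you flag ``the main obstacle.'' You correctly observe that a naive listing of $\phi^{\ast}$ on a cluster costs $s(\alpha)\log|\Sigma_{\mathrm{out}}|$ bits, which blows the $c\alpha/\gamma^3$ budget; but your proposed fix (``multi-scale composition via finer composable sub-schemas handling cluster-level labelings at a smaller radius'') never says \emph{what} those finer schemas actually encode or how the recursion bottoms out. If at every scale you are still encoding the full labeling of some region, the bit count does not go down; if at the bottom you brute-force small pieces independently, you have no mechanism to make adjacent pieces consistent. The missing idea is that one should \emph{not} try to encode the solution on the whole cluster: it suffices to encode $\phi^{\ast}$ restricted to a thin shell around each cluster's boundary (of width equal to the checkability radius $\bar r$ of $\Pi$), and then have the cluster center brute-force any completion consistent with that shell. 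Validity is preserved because any node near the boundary sees, within distance $\bar r$ inside its cluster, only shell nodes whose labels are the encoded $\phi^{\ast}$-values, so the LCL constraints at cluster boundaries are checked against $\phi^{\ast}$ itself. This border-versus-interior reduction is where subexponential growth is genuinely used, and it is quantified by \Cref{lem:subexp-nr-nodes}: one can pick a cluster radius $\alpha$ (from a bounded window) at which the number of interior nodes $|N_{\le\alpha}(v)|$ exceeds the boundary count $|N_{=\alpha+r}(v)|$ by a factor $\Delta^r$, leaving enough interior room to store the boundary labeling one bit per node on a sparse independent set. Without this observation, the parameter $r(\alpha)$ you introduce has no basis, and the argument cannot close.

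A secondary, structural remark: the paper's proof of \Cref{thm:subexp} is a \emph{direct} construction of a $1$-bit uniform fixed-length sparse schema, not a composable schema fed through \Cref{lem:composable-to-1bit}. It interleaves two encodings in the single bit per node---a prefix-coded path pattern that marks cluster centers and their colors, and an independent set of $1$s that carries the boundary labeling---and distinguishes them by local connectivity of the $1$-bits. The composability machinery is used elsewhere in the paper (balanced orientations, $\Delta$-coloring) but is not needed here; in fact, running the construction through \Cref{lem:composable-to-1bit} would be more delicate than the direct encoding because the per-landmark payloads, even after the boundary reduction, are large relative to $\alpha$ (they scale with $|N_{\le\alpha}(v)|$, not with $\alpha$), and the composability definition caps payloads at $c\alpha/\gamma^3$. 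So even with the boundary insight in hand, your chosen route would need additional work to reconcile with \Cref{def:composable}, whereas the direct encoding sidesteps this.
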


We start by giving a formal definition of sub-exponential growth. For a node $v$, we denote with $N_{\le d}(v)$ the set of nodes at distance at most $d$ from $v$, and we denote with $N_{= d}(v)$ the set of nodes at distance exactly $d$ from $v$. 

\begin{definition}[Sub-exponential growth]\label{def:subexp-growth}
	A family $\mathcal{G}$ of graphs is of sub-exponential growth if, for any constant $c>0$, there exists a constant $x_0$ such that, for any $x\ge x_0$, for any $G=(V,E)\in\mathcal{G}$, for any $v\in V$, $|N_{\le x}(v)|\le 2^{c\cdot x}$.
\end{definition}

\subparagraph{High-level ideas.}
On a high level, we would like to exploit that, on graphs with sub-exponential growth, it is possible to somehow compute a clustering such that the number of nodes on the border of each cluster is much less than the number of nodes inside each cluster. Ideally, this property would allow us to encode the solution of the nodes of the border of a cluster by using the nodes inside that cluster. Then, inside each cluster, by brute force, we can complete the partial solution assigned to the border of the cluster, and hence each node would know its own part of the solution, as desired. However, things are not so simple: (1) not  only we need to encode the solution of the nodes that are in the borders of the clusters, but we also need to be able to encode a clustering of the graph, and all this by using only one bit per node; (2) we need to make sure that the decoding process is able to decode the right thing, and this requires that the nodes that encode some information and that are inside two different clusters are far enough from each other. This latter issue forces us to prove some stronger properties of graphs with sub-exponential growth, that is that it is possible to compute a clustering such that the number of nodes that are \emph{very} close to the center of the cluster is at least as large as the number of nodes in the border (\Cref{lem:subexp-nr-nodes}). In more detail, we prove that, for each node $v$, there exists some value $\alpha$ such that, for a cluster $C_v$ centered at $v$, the set $N_{\le \alpha}$ of nodes of $C_v$ that are at distance at most $\alpha$ from $v$ is such that all nodes in $N_{\le \alpha}$ are far enough from the border and they are at least as many as the ones in the border. We will use only nodes in $N_{\le \alpha}$ to encode any useful information for us. In order to encode the clustering we do the following. First, we compute a distance-$d$ coloring of the graph (for some fixed $d$), that is, a coloring of the nodes such that nodes with the same color are at distance strictly larger than $d$. Then, we proceed through color classes in some order and at each phase $i$ we assign nodes to some cluster centered at a node with color $i$ (the nodes that will not be assigned to any cluster in this phase will be processed later). In this case, we say that $i$ is the color of the cluster. After we are done with color-class $i$, we remove the nodes that are part of some cluster and we repeat the procedure by considering the next color on the remaining subgraph. The distance coloring ensures that we avoid any kind of collisions with other centers of the same color.
In order to encode the color of a cluster $C_v$ centered at $v$ we use a path that starts at node $v$ and is contained in $N_{\le \alpha}$ (and hence it is far enough from the border of the cluster). Then, we use an independent set of the nodes in $N_{\le \alpha}$ in order to encode the solution of the border (\Cref{lem:subexp-nr-nodes} ensures that these nodes are enough for our purposes). The fact that we use only nodes in $N_{\le \alpha}$ to encode information ensures that nodes that carry information regarding different clusters are far enough from each other (even if the clusters are adjacent) and hence we avoid wrong decodings. In order to be able to distinguish different encodings inside a cluster (e.g., the color of the cluster, the output of the border, the center of the cluster) we use the following scheme. Consider the connected components induced by the nodes having as input a $1$: the $1$s that form an independent set are the ones that encode the solution of the border; the other $1$s are used to encode the cluster center and its color, where four $1$s encode the center of the cluster, two $1$s encode a $0$ and three $1$s encode a $1$.

We start by proving a technical lemma regarding graphs of sub-exponential growth, which will be useful later.

\begin{lemma}\label{lem:subexp-nr-nodes}
	Let $\mathcal{G}$ be a family of graphs of sub-exponential growth that have degree bounded by $\Delta$. 
	For any integer $r>0$, let $c:=\log(1 + 1/\Delta^{r})/(3r)$, and let $x_r:=\max\{4r, x_0\}$, where $x_0$ is the value given by \Cref{def:subexp-growth} for $\mathcal{G}$, when using parameter $c$. Then, for all $G = (V,E) \in \mathcal{G}$, and all $v \in V$, there exists a value $\alpha$ in $\{x_r,\ldots,2 x_r\}$ such that $|N_{\le \alpha}(v)| \ge \Delta^r |N_{=\alpha+r}(v)|$.
\end{lemma}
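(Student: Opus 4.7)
I would prove the lemma by contradiction: suppose that for every $\alpha \in \{x, x+1, \ldots, 2x\}$ we have $|N_{\le \alpha}(v)| < \Delta^{r} \, |N_{=\alpha+r}(v)|$, and derive a contradiction with the subexponential-growth bound at radius roughly $2x$. The key observation driving the argument is that, for any such $\alpha$,
\[
|N_{\le \alpha+r}(v)| \;\ge\; |N_{\le \alpha}(v)| + |N_{=\alpha+r}(v)| \;>\; |N_{\le \alpha}(v)| \cdot \bigl(1 + 1/\Delta^{r}\bigr),
\]
so under the contradiction hypothesis the ball grows by a multiplicative factor of at least $(1+1/\Delta^{r})$ every time we increase the radius by $r$.

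Next I would iterate this inequality. Setting $K := \lfloor x/r \rfloor$ and $\alpha_i := x + i r$, the hypothesis applies at every $\alpha_i$ with $i \in \{0,1,\ldots,K\}$, because $\alpha_K = x + K r \le 2x$. Chaining the multiplicative steps yields
\[
|N_{\le x + (K+1) r}(v)| \;>\; |N_{\le x}(v)| \cdot \bigl(1 + 1/\Delta^{r}\bigr)^{K+1} \;\ge\; \bigl(1 + 1/\Delta^{r}\bigr)^{K+1}.
\]
Since $r \le x/4 \le x$, the radius on the left-hand side satisfies $x + (K+1) r \le 2x + r \le 3x$, which is at least $x \ge x_0$, so the subexponential-growth bound applies and gives $|N_{\le x + (K+1) r}(v)| \le 2^{c(2x+r)}$.

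Finally, I would combine the two estimates and take logarithms. Using the defining equality $\log(1 + 1/\Delta^{r}) = 3 r c$ and the fact that $K + 1 > x/r$, the chained lower bound becomes $\log |N_{\le x+(K+1)r}(v)| > (K+1)\cdot 3 r c > 3 c x$, while the upper bound gives $\log |N_{\le x+(K+1)r}(v)| \le c(2x + r)$. Putting these together forces $2x + r > 3x$, i.e.\ $r > x$, which contradicts $x \ge 4r$. Hence some $\alpha \in \{x, \ldots, 2x\}$ must satisfy the desired inequality.

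The calculation is short, but the part that requires care is the bookkeeping that makes the constants line up: the choice $c = \log(1 + 1/\Delta^{r})/(3r)$ is tailored so that the geometric growth implied by the hypothesis, compounded $\lceil x/r \rceil$ times, beats the permitted growth $2^{c \cdot O(x)}$ with slack, and the lower bound $x \ge 4r$ is what makes the final inequality strict rather than borderline. I would spend most of my care ensuring that the iteration is applied only at indices where the hypothesis is valid (i.e.\ $\alpha_i \le 2x$), and that the final radius $x + (K+1) r$ used on the growth side stays within a constant factor of $x$ so the subexponential bound is tight enough for the contradiction.
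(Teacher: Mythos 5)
Your proof is correct and takes essentially the same approach as the paper's: assume the desired inequality fails at every radius in the window, chain the resulting geometric growth factor $(1+1/\Delta^r)$ across increments of $r$, and contradict the subexponential bound using $c = \log(1+1/\Delta^r)/(3r)$ together with $x \ge 4r$. The only difference is a minor bookkeeping choice — you iterate $K+1 \approx x/r$ steps and apply the growth bound at radius up to $2x+r$, whereas the paper stops at $k \approx x/(2r)$ steps to keep the radius within $2x$ — but both lead to the same contradiction.
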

\begin{proof}
	Assume, for a contradiction, that, for any $\alpha$ in $\{x_r,\ldots,2 x_r\}$, it holds that $|N_{\le \alpha}(v)| < \Delta^r |N_{=\alpha+r}(v)|$. We prove, by induction, that, under this assumption, $|N_{\le x_r + kr}(v)|\ge (1 + 1/\Delta^r)^k \cdot |N_{\le x_r}(v)|$, for any $k$ satisfying $0\le kr \le x_r$. For the base case $k=0$, we need to prove that  $|N_{\le x_r}(v)|\ge |N_{\le x_r}(v)|$, which trivially holds. Assume that the statement holds for $k$, we show that it holds for $k+1$. We can lower bound $|N_{\le x_r + (k+1)r}(v)|$ with $|N_{\le x_r + kr}(v)| + |N_{= x_r + (k+1)r}(v)|\ge |N_{\le x_r + kr}(v)| +  |N_{\le x_r + kr}(v)|/\Delta^r = |N_{\le x_r + kr}(v)| (1 + 1/\Delta^r) \ge (1 + 1/\Delta^r)^k \cdot |N_{\le x_r}(v)|\cdot (1 + 1/\Delta^r) = (1 + 1/\Delta^r)^{k+1} \cdot |N_{\le x_r}(v)|$, proving the inductive step. Moreover, by the assumption of sub-exponential growth, we have that $|N_{\le x_r + kr}(v)|\le 2^{c(x_r + kr)}$. By combining the obtained bounds, we obtain the following:
	\[
	(1 + 1/\Delta^r)^k \cdot |N_{\le x_r}(v)|\le |N_{\le x_r + kr}(v)|\le 2^{c(x_r + kr)}.
	\]
	Hence, we get a contradiction if 
	\[
	(1 + 1/\Delta^r)^k \cdot |N_{\le x_r}(v)| >  2^{c(x_r + kr)}.
	\]
	Since $|N_{\le x_r}(v)|\ge 1$, it is sufficient to show that there exists a value of $k$ satisfying $0\le kr \le x_r$, such that
	\[
	(1 + 1/\Delta^r)^k >  2^{c(x_r + kr)}, \text{ which is implied by } k>\frac{x_r}{\log(1 + 1/\Delta^{r})/c - r}.
	\]
	Since $c=\log(1 + 1/\Delta^{r})/(3r)$, we get that, in order to get our contradiction, it is sufficient to give an integer $k$ satisfying $k>x_r/(2r)$ and $0\le kr\le x_r$. For this purpose, we take $k:=\lceil \frac{x}{2r}\rceil + 1$. Observe that $k>x_r/(2r)$ is clearly satisfied. Also, $kr\le (\lceil \frac{x_r}{2r}\rceil + 1)r\le (\frac{x_r}{2r} + 2)r=x_r/2 + 2r \le x_r$, where the last inequality holds since $x_r\ge 4r$ (recall that $x_r$ is defined as $\max\{4r, x_0\}$).
\end{proof}

\subparagraph{A clustering of the graph.}
In order to show the desired advice schema, we first compute a clustering of the graph as follows. Let $r$ be a large-enough constant to be fixed later. Let $c$ and $x$ be the values given by \Cref{lem:subexp-nr-nodes} as a function of $r$. Let $G\in\mathcal{G}$. At first, we compute a distance-$(5x)$ coloring on $G$. Since $G$ is a graph of sub-exponential growth, such a distance coloring can be done by using $2^{c\cdot 5x}$ colors (this is because $x\ge x_0$, where $x_0$ is the value given by \Cref{def:subexp-growth} when using parameter $c$).
Then, we process nodes by color classes, in ascending order, and, at each step $i$, we assign some nodes to some clusters. Let $G_i$ be the subgraph of $G$ induced by nodes that, at the end of step $i-1$, do not belong to any cluster yet. Let $v$ be a node of color $i$. We consider two possible cases: either, in $G_{i}$ it holds that $|N_{=2x}(v)|=0$, or in $G_{i}$ it holds that $|N_{=2x}(v)|>0$. For each node in the latter case, we compute a cluster as follows. Let $\alpha_v$ be the value of $\alpha$ given by \Cref{lem:subexp-nr-nodes} for node $v$. We create a cluster containing $v$ and all the nodes of $G_i$, that, in $G_i$, have distance at most $\alpha_v + r$ from $v$. We call $v$ the \emph{center} of this cluster. Observe that, at the end of this cluster-formation procedure, each node $v$ that does not belong to any cluster satisfies that, within distance $2x$, it sees the whole connected component containing $v$ induced by nodes that do not belong to any cluster. 

\subparagraph{Encoding the clustering.}
We provide an assignment of bits to the nodes of $G$ that encodes the clustering. We will later show how to add other bits on top of such an assignment and encode a solution of $\Pi$ in such a way that it is clear which bits encode the clustering and which bits encode the solution of $\Pi$. For each cluster $C$, let $v$ be its center, and let $i$ be the color of $v$. For any $u\in C$, let $d_u$ be the distance of $u$ from $v$. By the construction of $C$, there must exist $u\in C$ such that $d_u=y:=\lfloor x/2\rfloor$. Consider an arbitrary path $P_v=(v_1,\ldots, v_y)$ of length exactly $y$ where $v_1=v$ and $v_y=u$. This is the path that we will use for encoding the color of $v$. Let $B$ be the bit-string representation of $i$, which is of length $\lceil\log i \rceil$.
We modify $B$ by replacing each $0$ with the bit-string $110$, and each $1$ with the bit-string $1110$, obtaining the bit-string $B'$. Then, we define $B''$ as the concatenation of $11110110$ with $B'$ and with $0$. In total, we use at most $4|B|+9=4 \lceil\log i \rceil+9\le 4 \lceil\log (2^{c\cdot 5x})\rceil + 9=20cx+9$ bits. By setting $r$ large enough, we obtain a small-enough $c$ and a large-enough $x$ such that $20cx+9\le \lfloor x/2\rfloor =y=|P_v|$.
Hence, we can assign the $j$th bit of $B''$ to the node $v_j$ of $P_v$. Finally, we assign the bit $0$ to all nodes that did not get any bit from the previous bit-assignment. Note that all nodes within distance $\alpha_v +r$ (in $G_i$) from $v$ that are not part of $P_v$ have their bit set to $0$. 

Let $C_v$ be the set of nodes that belong to the cluster of $v$. We prove a property that will be useful later. 
\begin{lemma}\label{lem:not-neighbors}
	Let $u$ and $w$ be two cluster centers. Let $u'$ be an arbitrary node in $C_u \cap N_{\le \alpha_u}(u)$, and let $w'$ be an arbitrary node in $C_w \cap N_{\le \alpha_w}(w)$. Then, $u'$ and $w'$ are not neighbors in $G$.
\end{lemma}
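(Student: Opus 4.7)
The plan is to split the proof according to the colors $i_u$ and $i_w$ that $u$ and $w$ receive in the distance-$(5x)$ coloring of $G$, and in each case exploit the gap between the interior depth $\alpha_v$ and the full cluster radius $\alpha_v + r$. Without loss of generality assume $i_u \le i_w$, so that $u$ becomes a cluster center no later than $w$. I will interpret $N_{\le \alpha_u}(u)$ and $N_{\le \alpha_w}(w)$ in terms of $G_{i_u}$- and $G_{i_w}$-distances respectively, which is the natural reading given that \Cref{lem:subexp-nr-nodes} controls the geometry inside each $G_i$ and that cluster distances are measured inside $G_i$. Recall also that $\alpha_u, \alpha_w \in \{x,\ldots,2x\}$ and that each $G_i$ is an induced subgraph of $G$.

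In the first case, $i_u = i_w$, the distance-$(5x)$ coloring of $G$ directly gives $\dist_G(u, w) > 5x$. Since $G_{i_u}$ and $G_{i_w}$ are subgraphs of $G$, we have $\dist_G(u, u') \le \dist_{G_{i_u}}(u, u') \le \alpha_u \le 2x$ and similarly $\dist_G(w, w') \le 2x$. The triangle inequality in $G$ then yields $\dist_G(u', w') > 5x - 2x - 2x = x \ge 1$, so $u'$ and $w'$ are not neighbors.

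In the second case, $i_u < i_w$, I would argue by contradiction: assume $u' \sim w'$ in $G$ and derive $w' \in C_u$, which contradicts $w' \in C_w$ together with pairwise disjointness of the clusters (each node joins at most one cluster, since it is removed from subsequent $G_i$'s). Both $u'$ and $w'$ lie in $V(G_{i_u})$: the node $u'$ is only removed at the end of step $i_u$ when joining $C_u$, and $w'$ is placed in $C_w$ at step $i_w > i_u$, so it survives in $G_{i_u}$. Because $G_{i_u}$ is an induced subgraph, the hypothetical edge $\{u', w'\}$ persists in $G_{i_u}$, so
\[
\dist_{G_{i_u}}(u, w') \le \dist_{G_{i_u}}(u, u') + 1 \le \alpha_u + 1 \le \alpha_u + r
\]
(using $r \ge 1$). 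By construction of $C_u$, every node of $G_{i_u}$ within $G_{i_u}$-distance $\alpha_u + r$ of $u$ is placed in $C_u$; hence $w' \in C_u$, the desired contradiction.

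The main subtlety is the bookkeeping across the two kinds of distances: we need the depth bound on $u'$ to be in $G_{i_u}$ so that the single-edge detour to $w'$ still fits inside the cluster radius $\alpha_u + r$, while in the equal-color case we must transfer a $G_{i_u}$-distance bound back to $G$ to combine it with the distance-$(5x)$ coloring. The buffer of width $r \ge 1$ between $\alpha_u$ and $\alpha_u + r$ is precisely what makes the contradiction in the second case go through, and no additional property beyond \Cref{lem:subexp-nr-nodes} and induced-subgraph-ness of $G_i$ is needed.
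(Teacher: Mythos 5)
Your proof is correct and follows essentially the same two-case decomposition as the paper: the equal-color case via the distance-$(5x)$ coloring and the triangle inequality, and the unequal-color case via the observation that $C_u$ absorbs everything within $G_{i_u}$-distance $\alpha_u + r$ of $u$. The only cosmetic difference is that the paper states the resulting $G_{i_u}$-distance lower bound of $r$ directly, whereas you package the same inequality as a contradiction; the underlying reasoning is identical.
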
 
\begin{proof}
	If $u$ and $w$ have the same color, then they are at distance at least $5x$, and hence all nodes in $N_{\le \alpha_u}(u)$ are at distance at least $x$ from nodes in $N_{\le \alpha_w}(w)$, and hence the property holds. Therefore, assume, w.l.o.g., that $u$ has color $i$ and $w$ has color $j$, for $i<j$. Then, in $G_i$, all nodes in $C_w \cap N_{\le \alpha_w}(w)$ are at distance at least $r >0$ from all nodes in $C_u \cap N_{\le \alpha_u}(u)$, since the cluster $C_u$ contains all nodes within distance $\alpha_u + r$ from $u$ in $G_i$. This implies that they are at distance strictly greater than zero also in $G$.
\end{proof}
\begin{corollary}\label{lem:paths-not-neighbors}
	For any two cluster centers $u$ and $w$, it must hold that, in $G$, there is no edge connecting a node in $P_u$ with a node in $P_w$.
\end{corollary}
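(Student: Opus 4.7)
The plan is to derive this as a direct consequence of \Cref{lem:not-neighbors}. The key observation needed is that every node on the encoding path $P_v$ of a cluster center $v$ actually lies in the restricted set $C_v \cap N_{\le \alpha_v}(v)$ to which \Cref{lem:not-neighbors} applies. Once this inclusion is established, the corollary follows by picking an arbitrary node $u' \in P_u$ and $w' \in P_w$, noting that $u' \in C_u \cap N_{\le \alpha_u}(u)$ and $w' \in C_w \cap N_{\le \alpha_w}(w)$, and invoking \Cref{lem:not-neighbors} to conclude that $u'$ and $w'$ are not neighbors; since $u'$ and $w'$ were arbitrary, no edge of $G$ can connect $P_u$ to $P_w$.

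The only thing to check is the inclusion $P_v \subseteq C_v \cap N_{\le \alpha_v}(v)$. By construction, $P_v = (v_1, \ldots, v_y)$ is a path with $v_1 = v$ and $y = \lfloor x/2 \rfloor$, so every node $v_j$ on $P_v$ satisfies $\dist_G(v, v_j) \le y - 1 < x$. Since $\alpha_v$ is chosen from $\{x, \ldots, 2x\}$ (as given by \Cref{lem:subexp-nr-nodes}), we have $\dist_G(v, v_j) < x \le \alpha_v$, so $v_j \in N_{\le \alpha_v}(v)$. Moreover, the path $P_v$ was selected from within the cluster $C_v$, so $v_j \in C_v$ as well. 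This verifies the needed inclusion, and the corollary then falls out of \Cref{lem:not-neighbors} with no further work. There is essentially no obstacle here: the statement is a packaging of \Cref{lem:not-neighbors} specialized to the encoding paths, and the work was already done in designing $P_v$ to lie well inside $N_{\le \alpha_v}(v)$ precisely to enable this kind of non-interference between clusters.
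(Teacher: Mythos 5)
Your proof is correct and matches the paper's argument exactly: the paper also derives the corollary by noting that every node of $P_v$ lies within distance $y < \alpha_v$ of the cluster center $v$, so the non-neighboring guarantee of \Cref{lem:not-neighbors} applies directly. You have just spelled out the inclusion $P_v \subseteq C_v \cap N_{\le \alpha_v}(v)$ in a bit more detail than the paper's one-line justification.
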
 
\begin{proof}
	The proof follows from \Cref{lem:not-neighbors} and by the fact that each path contains only nodes within distance $y<\alpha$ from the center of its cluster.
\end{proof}

\subparagraph{Decoding the clustering.}
We show that nodes can decode the clustering in $2^{O(x)}=O(1)$ rounds, by using a recursive procedure that spends $O(x)$ rounds for each color. 
Before diving into showing this, we observe that, for a family of graphs $\mathcal{G}$, nodes can compute the appropriate values of $r$, $c$, $x_0$ and $x$ needed for the clustering, and hence we assume these values to be hard-coded in the algorithm.

Assume that we already decoded the clusters having centers of color $1,\ldots, i-1$. We now show how to decode clusters having centers of color $i$. Each node gathers its $2x$-radius neighborhood. In this way, each node $v$ knows which of the nodes within distance $2x$ from $v$ are part of some cluster with center of color at most $i-1$, and which nodes within distance $2x$ do not belong to any cluster yet. Hence, each node $v$ knows which nodes within distance $2x$ are part of $G_i$. In the following, all computation will be done with regard to $G_i$. 

Let $S$ be the set of nodes $v$ of $G_i$ satisfying the following: 
\begin{itemize}
	\item $|N_{=2x}(v)|>0$; 
	\item for any $0\le j\le y$, there is at most one node in $N_{=j}(v)$ that has its bit set to $1$;
	\item for any $y+1\le j\le x$, nodes in $N_{=j}(v)$ have their bit set to $0$;
	\item there exists a path $P=(v_1,\ldots,v_y)$ satisfying that, for all $j$, the node $v_j$ is at distance exactly $j-1$ from $v$, and that contains all nodes in $N_{\le y}(v)$ that have their bits set to $1$;
	\item the bit-string corresponding to nodes $v_1,\ldots,v_8$ is equal to $11110110$;
	\item nodes ${v_9,\ldots, v_y}$ form a bit-string of the form $(110|1110)^* \,0 \,0^*$.
\end{itemize}
By the encoding of the clusters, it is clear that all centers of color $i$ must satisfy these properties. Consider the set $S'\subseteq S$ of nodes satisfying that, by decoding the bit-string on the path, they obtain color $i$. We prove that nodes in $S'$ are all and only the centers of clusters of color $i$. Recall that, when we encoded clusters having a center $v$ of color $i$, we assigned a $0$ to all nodes within distance $\alpha_v +r$ (in $G_i$) from $v$ that are not part of $P_v$. Hence, it is clear that, if a center $v$ has color $i$, then $v\in S'$. Let $v$ be an arbitrary node in $S'$, and let $P=(v_1,\ldots,v_y)$ be a path satisfying the properties above. Then we show that $v$ must be a center of color $i$. By \Cref{lem:paths-not-neighbors}, nodes belonging to different paths form different connected components, and hence, if four nodes form a connected component containing four $1$s, then one of these nodes must be the center of a cluster. This implies that one node among $v_1,\ldots,v_4$ must be the center of a cluster. By the above properties, we get that no node among $v_2, v_3, v_4$ can be in $S'$, and hence $v_1=v$ is indeed a center of a cluster. Also, by the above properties, there are no other centers within distance $x$ from $v$. We therefore get that the bit-string encoded on $P$ is exactly the color of $v$, and hence that the color of $v$ is $i$.

Each node $v$ in $S'$ can compute $\alpha_v$ by exploring its $2x$-radius neighborhood, and hence by spending additional $\alpha_v+r=O(x)$ rounds, it can inform all nodes that belong to the cluster of $v$.

\subparagraph{\boldmath Encoding a solution for $\Pi$.}
In the following, we modify the bit assignment that encodes the clustering in order to also encode a solution for $\Pi$. First of all, recall that the assignment for the clustering satisfies that each connected component induced by nodes having their bit set to $1$ has size at least $2$. In order to avoid ambiguity when decoding, we assign $1$s to an independent set of nodes that are not neighbors of nodes that have already a $1$ assigned.

Let $\ell$ be a function mapping each node-edge pair of $G$ into a label of $\Pi$, such that the labeling obtained by $\ell$ is a solution for $\Pi$. We process clusters by colors in ascending order. Let $C_v$ be a cluster centered at $v$ with color $i$. In the following, all computation is done w.r.t.\ $G_i$. Let $\bar{r}$ be the checkability radius of $\Pi$. Let $S_v$ be the set of nodes within distance $\bar{r}$ (in $G_i$) from at least one node in $N_{=\alpha_v + r}(v)$. Note that $|S_v| \le N_{=\alpha_v + r}(v) \Delta^{\bar{r}}$. Hence, by \Cref{lem:subexp-nr-nodes}, $|S_v| \le |N_{\le \alpha_v}(v)| \Delta^{\bar{r}} / \Delta^r$. Consider the labeling given by $\ell$ to all node-edge pairs incident to the nodes in $S_v$. We can encode such a labeling by using a bit-string of length $k|N_{\le \alpha_v}(v)| \Delta^{\bar{r}+1} / \Delta^r$, where $k$ is a constant that depends on the alphabet of $\Pi$. Let $B$ be such a bit-string. Let $Y$ be the set containing the nodes of $N_{\le \alpha_v}(v)$ that have their bit set to $1$, plus all their neighbors. Let $Z = N_{\le \alpha_v}(v) \setminus Y$.
The nodes that have their bits set to $1$ are at most $y = \lfloor x/2 \rfloor$, and each such node has at most $\Delta -1$ neighbors that have a $0$. 
Recall that $v$ has at least one node at distance exactly $2x$ (as otherwise $v$ would not be a cluster center), and that $N_{\le \alpha_v}(v)$ contains all nodes within distance $\alpha_v \ge x$ from $v$. 
We get that there exists an injective function mapping nodes that have a $1$ to nodes in $Z$. Thus, $|Z| \ge |Y| / \Delta$, and hence $|N_{\le \alpha_v}(v)| = |Z| + |Y| \ge |Y| / \Delta + |Y| = |Y|(1 + 1/\Delta)$, which implies $|Y| \le |N_{\le \alpha_v}(v)|  / (1+1/\Delta)$. Thus, we get that $|Z| \ge |N_{\le \alpha_v}(v)| - |Y| \ge |N_{\le \alpha_v}(v)| / (\Delta+1)$. Hence, any independent set in $Z$ has size at least $|N_{\le \alpha_v}(v)| / (\Delta+1)^2$. By picking $r$ large enough, we have that
\[
|B| \le k|N_{\le \alpha_v}(v)| \Delta^{\bar{r}+1} / \Delta^r \le |N_{\le \alpha_v}(v)| / (\Delta+1)^2.
\]

We consider the nodes in $Z$ in order of their IDs, and we greedily compute a \MIS{} $Z'$. Consider the nodes in $Z'$ sorted by their IDs. We assign to the $j$th node of $Z'$ the $j$th bit of $B$.
By \Cref{lem:not-neighbors}, the nodes that received a $1$ for the encoding of the solution of $\Pi$ form an independent set in $G$.

\subparagraph{\boldmath Decoding a solution for $\Pi$.}
At first, nodes decode the clustering by ignoring the $1$s that form an independent set. Then, clusters are processed in ascending order of their color. Each cluster center $v$ can compute $Z$, and hence $Z'$, and sort the nodes of $Z$ by their IDs. In this way, $v$ can recover the bit-string $B$. The solution encoded in $B$ is assigned to the node-edge pairs of the nodes in $S_v$. Then, the nodes in $C_v$ can complete the solution inside the cluster by brute force by e.g., have $v$ gather the whole cluster, compute a solution (that respects the partial one) for all nodes in the cluster of $v$, and send the computed solution to all the nodes in the cluster (note that, by the construction, a solution for the cluster that agrees with the partially assigned solution exists). Finally, nodes that do not belong to any cluster can complete a solution by brute force because they are in connected components of constant diameter (again, by the construction, a valid completion exists). In total, this procedure takes $2^{O(x)}=O(1)$ rounds.

\subparagraph{Sparsity.}
We now show that we can make the ratio between nodes holding a $1$ and those holding a $0$ an arbitrarily small constant. There are two places in the schema where we assign $1$s, and those are the following.
\begin{itemize}
	\item When we encode the color of the center of the clusters: in this case, the number of $1$s in a cluster $C_v$ is upper bounded by $h=20 c x + 9$, and there are at least $y-h$ nodes in $P_v$ holding $0$s; by choosing $c$ small enough, we can obtain any desired ratio between the number of $1$s and the number of $0$s.
	
	\item When we encode the solution for $\Pi$: in this case, by picking $r$ large enough, we can make the encoded bit-string $B$ arbitrarily small compared to the size of $Z'$, that is, the nodes that encode $B$; note that the nodes in $Z'$ that do not take part in the encoding of $B$ have a $0$ assigned.
\end{itemize}

\section{Balanced orientations}\label{ssec:balanced-orientation}
In this section, we consider the problem of computing balanced orientations, that is, an orientation of the edges such that:
\begin{itemize}
	\item If a node has even degree, the number of incoming edges is the same as the number of outgoing edges.
	\item If a node has odd degree, then the number of incoming edges and the number of outgoing edges differ by $1$.
\end{itemize}
We start by considering the case in which all nodes have even degree. Note that, in this case, a balanced orientation always exists (though, without advice, the problem of computing such an orientation requires $\Omega(n)$ rounds, e.g., in a cycle). Later, we will show how to extend this result to the more general case. Finally, we will show that balanced orientations can be used to solve variants of the edge coloring problem.
We start by proving the following lemma.
\begin{lemma}\label{lem:balanced-orientation}
	Let $\mathcal{G}_\Delta$ be the family of graphs of maximum degree $\Delta$, where it is also satisfied that all nodes have an even degree. Let $\Pi_\Delta$ be the problem of orienting the edges of a given graph $G \in \mathcal{G}_\Delta$ such that each node $v$ satisfies $\deg_{\mathrm{in}}(v) = \deg_{\mathrm{out}}(v)$, where $\deg_{\mathrm{in}}(v)$ and $\deg_{\mathrm{out}}(v)$ are, respectively, the number of incoming and outgoing edges of $v$. Then, there exists a $(\mathcal{G}_\Delta, \Pi_\Delta, \gamma_0,A,T)$-composable advice schema, where $\gamma_0 = O(1)$, $A(c,\gamma) = \Theta(\gamma^3)$, and $T(\alpha,\Delta) = \Delta^{O(\alpha)}$.
\end{lemma}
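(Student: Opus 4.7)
My plan is to build the composable schema by exploiting the structural fact that a graph with all-even degrees admits an Eulerian orientation, and that every Eulerian orientation arises from (i)~a decomposition of $E$ into edge-disjoint closed trails (an \emph{Eulerian partition}) together with (ii)~a choice of direction for each trail. I will split the schema along exactly this structure: the Eulerian partition will be obtained canonically from the identifiers and will require no advice, while the direction bits will be stored at a sparse set of waypoint nodes using the composability framework of \Cref{def:composable}.

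\textbf{Step 1: canonical Eulerian partition.} For each vertex $v$ of even degree $2k$, I will sort its incident edges by the identifier of the far endpoint and define a \emph{transition system} at $v$ by pairing the edges as $\{(1,2),(3,4),\ldots,(2k-1,2k)\}$. Following these transitions from any half-edge traces out a closed trail, and the collection of all such trails gives an Eulerian partition $\mathcal{C}=\{C_1,\ldots,C_m\}$ of $E$ that is determined by $G$ and the identifiers alone. A node can recover its local pairing and identify which trails pass through it in time proportional to its degree.

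\textbf{Step 2: sparse direction encoding.} For each trail $C_i$ I will assign one direction bit. To make this bit locally available everywhere on $C_i$ within $T(\alpha,\Delta)$ rounds, I will invoke a composable sub-schema (e.g.\ a ruling set in an appropriate power graph) to obtain a set $W_i$ of \emph{waypoints} along $C_i$ whose successive $C_i$-distances are $\Theta(\alpha)$. Each waypoint stores the direction of its trail plus $O(\log\Delta)$ bits naming the trail, for disambiguation. Any node on $C_i$ can then trace canonical transitions for at most $O(\alpha)$ steps to reach a waypoint of $C_i$ and read off the direction, yielding $T(\alpha,\Delta)=\Delta^{O(\alpha)}$.

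\textbf{Main obstacle.} The hard part will be meeting the sparsity condition of \Cref{def:composable}: at most $\gamma_0=O(1)$ bit-holding nodes in any $\alpha$-ball, each holding at most $\beta\leq c\alpha/\gamma^{3}$ bits, while still covering every trail that crosses the ball. Since arbitrarily many trails may thread through a single $\alpha$-ball, a naive ``one waypoint per trail per ball'' choice violates sparsity. I plan to resolve this in two stages. First, any trail lying entirely inside an $\alpha/2$-ball is oriented by a canonical local rule (e.g.\ starting from the lowest-identifier vertex of the trail) and carries no advice, which eliminates all short trails from the waypoint count. Second, the direction bits for the remaining \emph{long} trails crossing a common cluster are aggregated at a single landmark node obtained from a composable clustering sub-schema. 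A careful counting—bounding the number of long trails crossing any $\alpha$-ball in terms of the cut size of its $O(\alpha)$-boundary and hence by a function linear in $\alpha$—should deliver an $O(\alpha)$-bit string per landmark and the stated parameters $\gamma_0=O(1)$, $A(c,\gamma)=\Theta(\gamma^{3})$.
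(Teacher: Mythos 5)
Your Step~1 matches the paper's construction exactly: the paper's virtual graph $G'$ (each degree-$2d$ node split into $d$ degree-$2$ copies by a fixed pairing of incident edges) is precisely the transition system you describe, and the resulting cycle decomposition is your Eulerian partition. Handling short trails by a canonical local rule without advice is also in the paper. The divergence---and the gap---is in Step~2 and your proposed fix to the obstacle.

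Your plan places waypoints spaced $\Theta(\alpha)$ apart \emph{along every long trail}, so that a node can read its trail's direction by following the trail for $O(\alpha)$ steps. But a single $\alpha$-ball of $G$ can meet $\Delta^{\Theta(\alpha)}$ distinct trails (the boundary of an $\alpha$-ball has up to $\Theta(\Delta^{\alpha+1})$ edges, not $O(\alpha)$ as you assert, and a node of degree $2d$ alone carries $d$ trail-copies). Aggregating one bit per long trail at a landmark in the ball would therefore require $\Delta^{\Omega(\alpha)}$ bits at that landmark, while \Cref{def:composable} caps the budget at $c\alpha/\gamma^3$. The claim that trails can be named with $O(\log\Delta)$ bits has the same issue: there can be $\Theta(n)$ trails. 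So the sparsity constraint cannot be met by any scheme that tries to make every trail's direction readable within $O(\alpha)$ hops.

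The paper escapes this by \emph{not} doing that. It allows the decoding walk along a trail to have length up to $r=\Delta^{O(\alpha)}$ (this is exactly why $T(\alpha,\Delta)=\Delta^{O(\alpha)}$ in the statement, a slack your plan does not exploit). It then places a \emph{single} pair of adjacent marked nodes somewhere on each long cycle and proves, via the Lov\'asz Local Lemma applied to random shifts of a greedy ruling set along the cycles, that these marked pairs can be chosen so that (i) any two are $\ge 3\alpha$ apart in $G$ (giving $\gamma_0=O(1)$ bit-holders per $\alpha$-ball) and (ii) no two copies of the same original node are both marked. Your proposal identifies the right obstacle but misses the two ingredients that resolve it: trading locality of the decoding walk for sparsity of the advice, and the probabilistic (LLL) placement argument that makes the distance constraints compatible across the many cycles threading a region.
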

By applying \Cref{lem:composable-to-1bit}, which allows us to convert the composable advice schema of \Cref{lem:balanced-orientation} into a fixed-length one, we obtain the following.
\begin{corollary}\label{cor:balanced-orientation-even}
	Let $\mathcal{G}_\Delta$ be the family of graphs of maximum degree $\Delta$, where it is also satisfied that all nodes have an even degree. Let $\Pi_\Delta$ be the problem of orienting the edges of a given graph $G \in \mathcal{G}_\Delta$ such that each node $v$ satisfies $\deg_{\mathrm{in}}(v) = \deg_{\mathrm{out}}(v)$.
	Then, there exists a uniform fixed-length sparse $(\mathcal{G}_\Delta, \Pi_\Delta,1,\Delta^{O(1)})$-advice schema.
\end{corollary}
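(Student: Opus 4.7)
The plan is to derive this corollary as a direct invocation of the conversion machinery developed in the composability framework. Concretely, I would apply \Cref{lem:composable-to-1bit} with \Cref{lem:balanced-orientation} as the input: the latter provides a $(\mathcal{G}_\Delta,\Pi_\Delta,\gamma_0,A,T)$-composable advice schema with $\gamma_0 = O(1)$, $A(c,\gamma) = \Theta(\gamma^3)$, and $T(\alpha,\Delta) = \Delta^{O(\alpha)}$, and the former (as indicated in \Cref{def:composable} and the surrounding discussion on composability) takes any composable schema with these parameters and produces a uniform fixed-length schema assigning a single bit per node, at the cost of an $O(1)$ blow-up in the number of rounds that depends on the choice of $\alpha$ and $\gamma$ used to invoke the composable family.

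The main steps I would carry out are as follows. First, I would fix the desired sparsity parameter $\varepsilon > 0$. Second, I would choose parameters $\gamma \ge \gamma_0$ and $c > 0$ in the composable schema so that the ratio between the total bit-length held by bit-holding nodes and the volume of their neighborhood becomes at most $\varepsilon$, as guaranteed by the condition $\beta \le c\alpha/\gamma^3$ in \Cref{def:composable}. Third, I would set $\alpha = A(c,\gamma) = \Theta(\gamma^3)$ and invoke the variable-length schema from \Cref{lem:balanced-orientation} at these parameters. Fourth, I would feed this variable-length schema into \Cref{lem:composable-to-1bit} to extract a uniform fixed-length advice schema in which every node receives exactly one bit and which is $\varepsilon$-sparse. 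Since $\varepsilon$ was arbitrary, this yields a sparse schema in the sense of the definition.

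For the round complexity, the variable-length schema runs in $T(\alpha,\Delta) = \Delta^{O(\alpha)}$ rounds. The conversion given by \Cref{lem:composable-to-1bit} adds only an additive constant overhead that depends on $\alpha$ and $\gamma$ (independent of $\Delta$ in a polynomial sense), so the final round complexity remains $\Delta^{O(1)}$, where the hidden constant depends on the chosen sparsity $\varepsilon$. This matches the bound claimed in the corollary statement.

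The only non-routine part of this plan is verifying that \Cref{lem:composable-to-1bit} indeed applies verbatim to the parameters produced by \Cref{lem:balanced-orientation}; in particular, that the shape of $A(c,\gamma) = \Theta(\gamma^3)$ is compatible with the requirement that the bit-budget $c\alpha/\gamma^3$ be realizable as a composable family. This is precisely the design choice behind \Cref{def:composable}: the cubic dependence on $\gamma$ is calibrated so that composable schemas can always be re-encoded with a single bit per node while keeping the set of ``1''-labeled nodes sparse. Hence no additional technical work beyond the direct application is required, and the corollary follows.
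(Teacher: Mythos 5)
Your proposal is correct and follows exactly the same route as the paper: the corollary is obtained by feeding the composable schema from \Cref{lem:balanced-orientation} into the generic conversion of \Cref{lem:composable-to-1bit}, and the parameter bookkeeping you describe (choosing $\gamma$, $c$, and $\alpha = A(c,\gamma) = \Theta(\gamma^3)$ as constants depending only on the target sparsity $\varepsilon$, yielding $T' = O(\alpha + \Delta^{O(\alpha)}) = \Delta^{O(1)}$) matches the paper's intended instantiation. The paper's own proof is a one-sentence invocation of the same two lemmas, so there is nothing missing in your argument.
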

We devote the rest of the section to prove \Cref{lem:balanced-orientation}.
Let $G$ be a graph in $\mathcal{G}_\Delta$.
We start with a simple observation. The balanced orientation problem can be easily solved in constant time if we are given an oracle for consistently orienting cycles. In fact, consider the following algorithm:
\begin{enumerate}
	\item Starting from $G$, nodes construct a virtual graph $G'$ where all nodes have degree $2$, as follows. Each node $v$ of degree $2d$ makes $d$ copies of itself. The copy number $i \in \{1,\ldots,d\}$ is incident to the $(2 i - 1)$-th and $2i$-th edges of $v$, taken in some arbitrary fixed order (e.g., by sorting the neighbors of $v$ by their IDs). Note that $G'$ is a collection of cycles.
	\item Use the oracle to consistently orient all cycles of $G'$.
	\item The orientation of the edges of $G'$ induces an orientation of the edges of $G$. Since each node of $G'$ has exactly one outgoing and one incoming edge, then each node $v$ of $G$ of degree $2d$ obtains exactly $d$ outgoing edges and $d$ incoming edges.
\end{enumerate}
Let $\beta = \gamma_0 = 2$. We show that, for any $\gamma \ge \gamma_0$, for any constant $c$, and for any $\alpha \ge \max\{\gamma^3 \beta / c, \gamma^3 \beta\}$, there exists a variable-length $(\mathcal{G}_\Delta, \Pi_\Delta,\beta, \Delta^{O(\alpha)} )$-advice schema satisfying that in each $\alpha$-radius neighborhood there are at most $\gamma_0$ bit-holding nodes. Since this condition satisfies the requirements of \Cref{def:composable}, we would obtain \Cref{lem:balanced-orientation}.

Let $G''$ be the graph obtained from $G'$ by removing all cycles of length at most $r$, for some parameter $r$ to be fixed later. 
Let $f : V(G'') \rightarrow V(G)$ be the function defined as follows: for each node $v$ in $G''$ that is a copy of node $u$ in $G$, $f$ maps $v$ to $u$.
We prove that there exists a subset $S''$ of the nodes of $G''$ satisfying the following properties, where $S := \{ f(v) \mid v \in S'' \}$.
\begin{enumerate}
	\item $S''$ is an $r$-dominating set of $G''$, that is, all nodes that are part of $G''$ are within distance $r$ (in $G''$) from a node of $S''$.
	\item If $v'',u'' \in S''$, then $\dist_G(f(v''),f(u'')) \ge 3\alpha$.
\end{enumerate}
If such a subset $S''$ exists, then we can encode an orientation of $G''$ (and hence produce an advice schema) as follows. Let $v'' \in S''$. By construction, $v''$ has degree $2$. Let $u''$ be an arbitrary neighbor of $v''$. Let $v$ and $u$ be, respectively, the preimages of $v''$ and $u''$ in $G$. Note that $v$ and $u$ are neighbors in $G$. We store $2$ bits on $v''$ and $1$ bit on $u''$ as follows.
\begin{itemize}
	\item The first bit of $v''$ and the first bit of $u''$ are set to $1$.
	\item The second bit of $v''$ is set to $1$ if the cycle is oriented from $v''$ to $u''$, and to $0$ otherwise.
\end{itemize}
Observe that bit-holding nodes that correspond to different nodes in $S$ are at distance at least $3\alpha - 2 \ge 2\alpha + \gamma^3 \beta - 2 \ge 2 \alpha + 1$, and hence each $\alpha$-radius neighborhood contains at most $2$ bit-holding nodes, as required.
A proper orientation for the cycles of $G'$ can be recovered as follows.
\begin{itemize}
	\item Nodes compute $G'$ (without communication), and spend $r$ rounds to check which of the cycles that pass through them have length at most $r$. 
	\item The edges of cycles of length at most $r$ can be consistently oriented without any advice. For example, a rule can be to find the node with the largest ID in the cycle, orient outgoing the edge that connects it with its neighbor that has larger ID, and then orient the rest of the edges consistently.
	\item For each cycle of $G'$ of length strictly larger than $r$ (and hence that it belongs to $G''$), we can find a pair of neighboring nodes in $G$ within distance $r$ that have both images in $G''$. Their bits encode the orientation of the cycle.
\end{itemize}
Thus, we obtain a balanced orientation for $G$ by spending $O(r)$ rounds. An example is depicted in \Cref{fig:orientations}. 
\begin{figure}
	(a)\includegraphics[width=0.6\textwidth,angle=90]{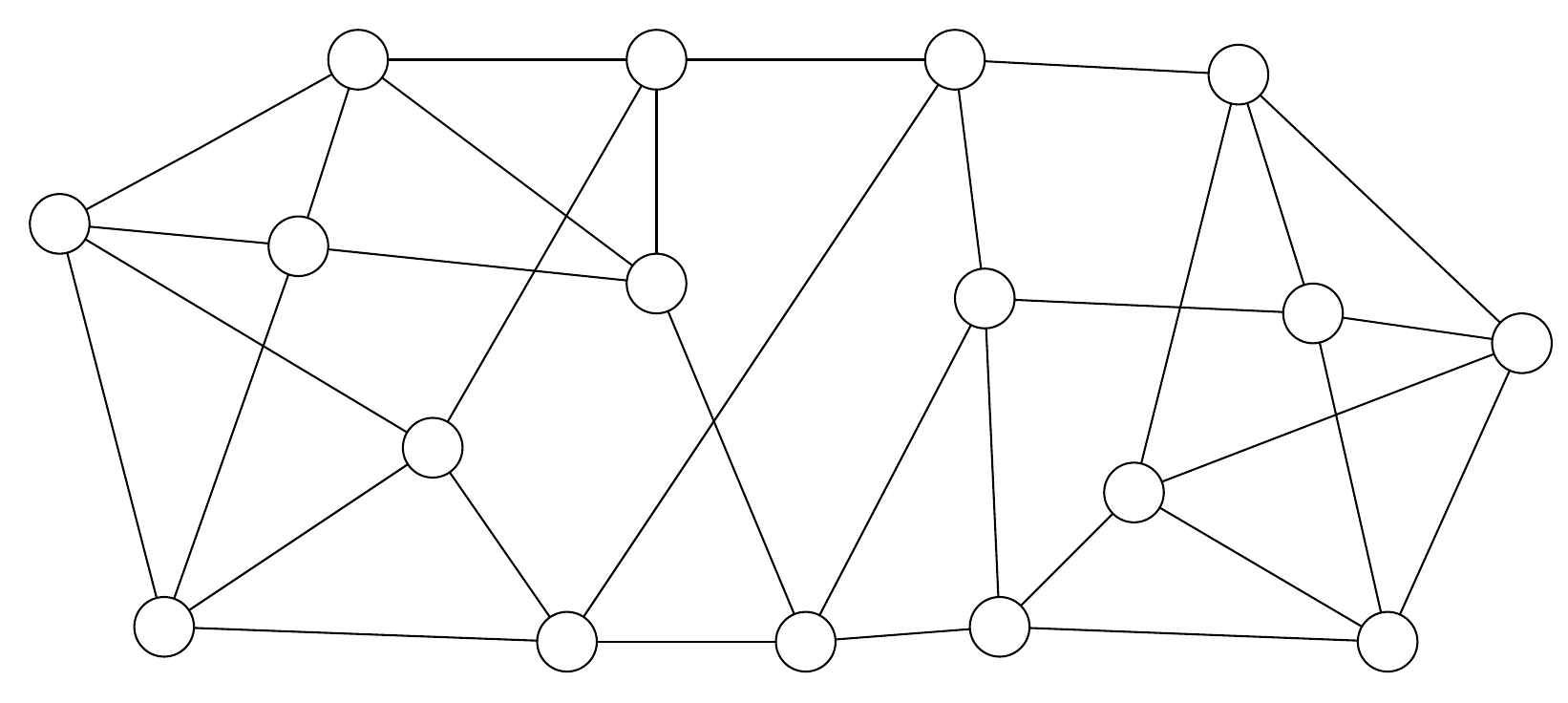}%
	\hspace{\stretch{1}}%
	(b)\includegraphics[width=0.6\textwidth,angle=90]{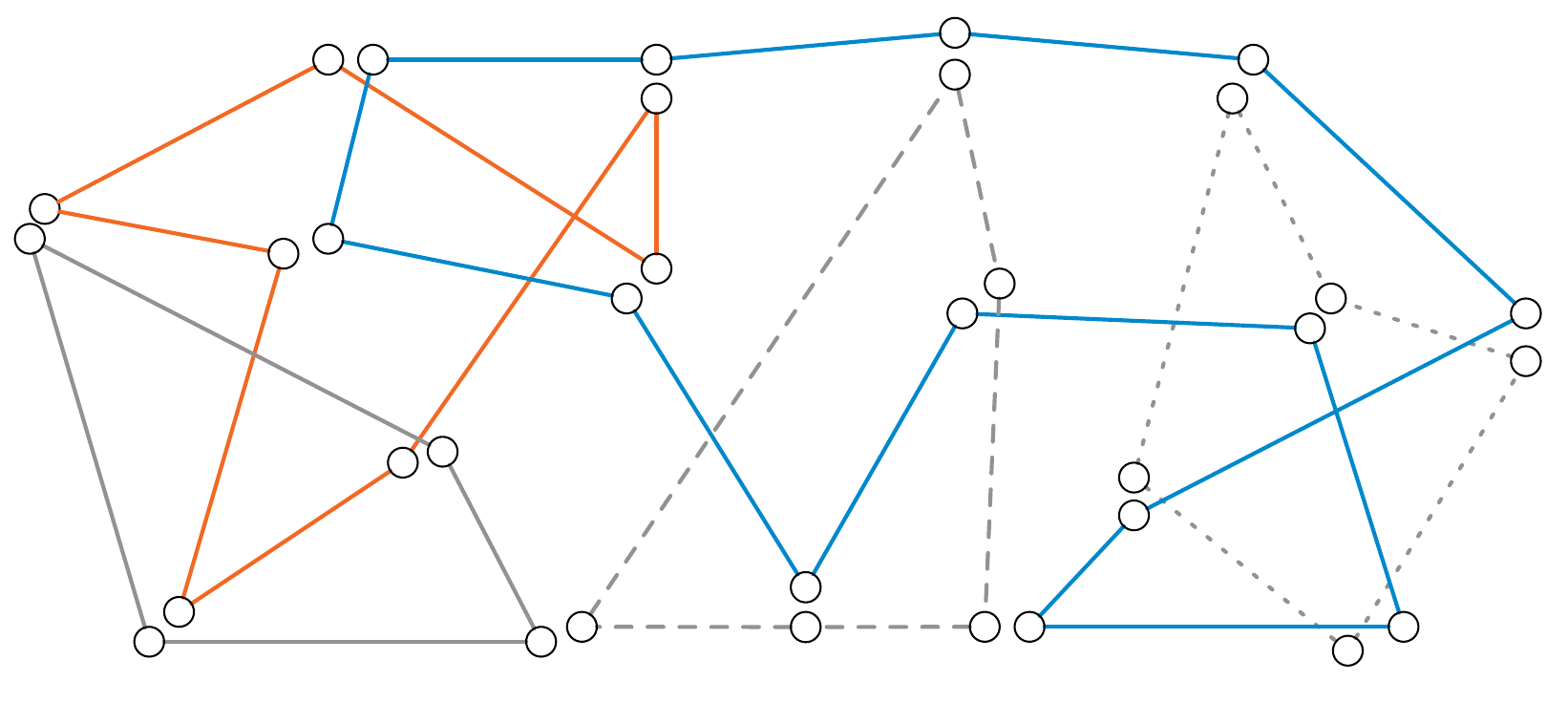}%
	\hspace{\stretch{1}}%
	(c)\includegraphics[width=0.6\textwidth,angle=90]{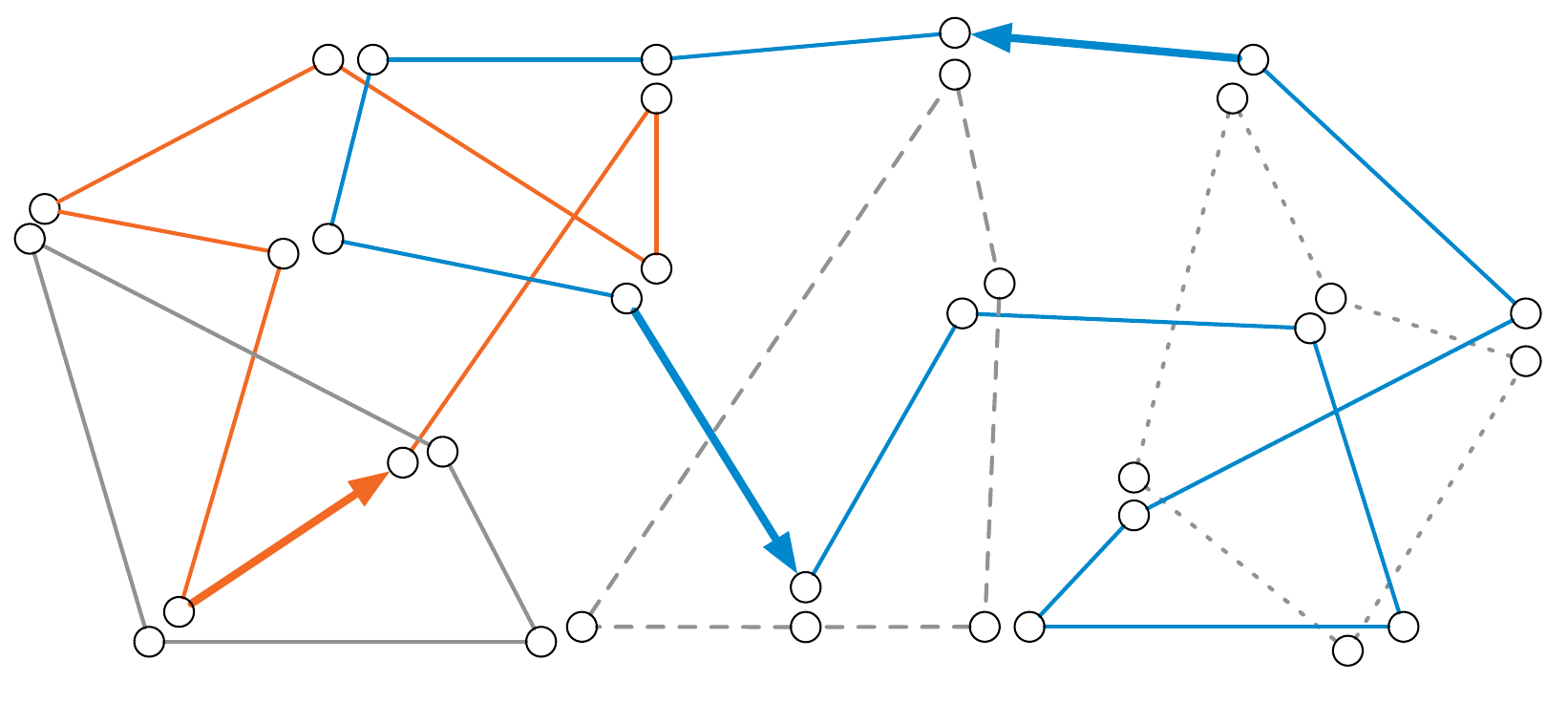}
	\caption{(a)~The original graph $G$. (b)~The graph $G'$, obtained by decomposing $G$ into cycles. By removing the three gray cycles (solid, dashed, and dotted), which are considered short and can be oriented without advice, we obtain $G''$.  (c)~For each directed edge, exactly one node belongs to $S''$. Such nodes receive advice encoding the orientation of these edges. We use the Lovász Local Lemma to prove that it is possible to choose $S''$ such that their corresponding images in $G$ are far enough from each other, and such that all nodes can recover the orientation of all the cycles they belong to by inspecting $G$ up to a bounded distance. }	\label{fig:orientations}
\end{figure}

The only remaining thing to be done is to prove the existence of $S''$ when $r = \Delta^{O(\alpha)}$. Let $\bar{r} = \varepsilon r$, for a small-enough constant $\varepsilon$ to be fixed later.
We start by computing a set $\bar{S}$ of nodes of $G''$ that satisfies the following.
\begin{itemize}
	\item All nodes of $G''$ are within distance $2 \bar{r}$ from a node in $\bar{S}$.
	\item The distance of any pair of distinct nodes in $\bar{S}$ is at least $2 \bar{r} + 1$.
\end{itemize}
Note that such a set can be computed greedily.
For small-enough $\varepsilon$, this set clearly satisfies Property 1, but it may not satisfy Property 2 and Property 3.
We now use the Lovász Local Lemma to show that the nodes of $\bar{S}$ can be moved along the cycles (to a distance of at most $\bar{r}$) in order to satisfy Property 2 and Property 3. Note that, for small-enough $\varepsilon$, Property 1 remains satisfied.

To each node $\bar{v}$ of $\bar{S}$ we assign a random variable, with value in $\{1,\ldots,\bar{r}\}$, that indicates how much we shift each node towards an arbitrary fixed direction. To each node $v$ of $G$ we assign an event that is bad when $v$ does not satisfy Property 2 or Property 3 after the shifting. Let $d$ be an upper bound on the number of events that depend on the same event, where two events are independent if they do not depend on at least one common variable. Observe the following.
\begin{itemize}
	\item The number of events that can depend on the same variable is bounded by $\Delta^{3\alpha+1} \bar{r}$. In fact, after shifting, a node can land in $\bar{r}$ possible different positions, and in each position, it affects at most $\Delta^{3\alpha+1}$ nodes of $G$, since in each $3\alpha$-radius neighborhood there are at most $\Delta^{3\alpha+1}$ nodes.
	\item The number of variables that affect the same event is bounded by $\frac{\Delta}{2} \Delta^{3\alpha+1}$. In fact, if we consider a node $v$ of $G$, in its $3\alpha$-radius neighborhood there are at most $\Delta^{3\alpha+1}$ nodes. For each node $u$ among these, there are at most $\Delta/2$ paths that pass through $u$ in $G'$. For each path $P$ that passes through $u$, since nodes of $\bar{S}$ are at distance at least $2 \bar{r} + 1$, and each node in $\bar{S}$ is shifted by at most $\bar{r}$, only one node of $\bar{S}$ that lies on $P$ can reach $u$ after being shifted.
\end{itemize}
We can upper bound $d$ as $\Delta^{3\alpha+1} \bar{r} \cdot \frac{\Delta}{2} \Delta^{3\alpha+1} \le \bar{r} \Delta^{6 \alpha + 3}$.
Moreover, we can upper bound the probability that a bad event happens on a node $v$ as follows. As mentioned before, the number of nodes that can be shifted into the $3\alpha$-radius neighborhood of $v$ is upper bounded by $\Delta^{3\alpha+1} \Delta/2$. The probability that a shifted node lands in the $3\alpha$-radius neighborhood of a node can be upper bounded by $\frac{\Delta^{3\alpha+1}}{\bar{r}}$, since a shifted node could land in $\bar{r}$ possible nodes, but there are at most $\Delta^{3\alpha+1}$ nodes in the $3\alpha$-radius neighborhood of $v$. A bad event happens on $v$ when at least two nodes are shifted in the $3\alpha$-radius neighborhood of $v$. Hence, by a union bound over all possible pairs, we can upper bound the failure probability on $v$ as \[p \le \binom{\Delta^{3\alpha+1} \Delta/2}{2} \left(\frac{\Delta^{3\alpha+1}}{\bar{r}}\right)^2 \le \frac{\Delta^{12 \alpha + 6}}{\bar{r}^2}.\] 
Note that $p d \le \Delta^{18\alpha + 9}  / \bar{r}$. Hence, we can pick $r \in \Delta^{O(\alpha)}$ such that $e p d \le e \Delta^{18\alpha + 9}  / \bar{r} < 1$. By the Lovász Local Lemma, this implies that an assignment of shifts satisfying the requirements exists.

\subparagraph{Extension to all degrees.} 
We can extend the proof of \Cref{lem:balanced-orientation} to the case in which nodes are allowed to have an odd degree and obtain an orientation in which each node $v$ satisfies $|\deg_{\mathrm{in}}(v) - \deg_{\mathrm{out}}(v)| \le 1$, as follows. Nodes can compute a virtual graph $G'$ that is a collection of paths and cycles, such that each node in $G$ is the endpoint on at most one path passing through it. Then, nodes can compute $G''$ by discarding paths that are shorter than $r$. After this, we use the same advice schema as in \Cref{lem:balanced-orientation} to obtain consistently oriented paths. By mapping the orientation of the paths to an orientation of the edges of $G$, we obtain $|\deg_{\mathrm{in}}(v) - \deg_{\mathrm{out}}(v)| \le 1$, as required.
Thus, we obtain the following corollary.
\begin{corollary}\label{cor:balanced-orientation-all}
	Let $\mathcal{G}_\Delta$ be the family of graphs of maximum degree $\Delta$. Let $\Pi_\Delta$ be the problem of orienting the edges of a given graph $G \in \mathcal{G}_\Delta$ such that each node $v$ satisfies $|\deg_{\mathrm{in}}(v) - \deg_{\mathrm{out}}(v)| \le 1$, where $\deg_{\mathrm{in}}(v)$ and $\deg_{\mathrm{out}}(v)$ are, respectively, the number of incoming and outgoing edges of $v$. Then, there exists a $(\mathcal{G}_\Delta, \Pi_\Delta, \gamma_0,A,T)$-composable advice schema, where $\gamma_0 = O(1)$, $A(c,\gamma) = \Theta(\gamma^3)$, and $T(\alpha,\Delta) = \Delta^{O(\alpha)}$.
\end{corollary}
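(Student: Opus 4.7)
The plan is to reduce the general case to the even-degree case of Lemma~\ref{lem:balanced-orientation} by constructing a virtual graph $G'$ that is a disjoint union of paths and cycles. Each node $v \in V(G)$ of degree $d$ is split into $\lceil d/2 \rceil$ copies: sort the incident edges by neighbor ID, pair them consecutively, and give each pair to a distinct degree-$2$ copy of $v$; when $d$ is odd, assign the leftover edge to a single degree-$1$ copy. Every vertex of $G'$ has degree at most $2$, with degree-$1$ vertices arising exactly from odd-degree vertices of $G$, so $G'$ decomposes into cycles and paths whose path-endpoints are degree-$1$ copies of odd-degree nodes in $G$. Crucially, the construction is ID-determined, so nodes reconstruct $G'$ without communication.

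Next, I would let $G''$ be obtained from $G'$ by discarding every component of length at most $r = \Delta^{O(\alpha)}$ (the same threshold as in Lemma~\ref{lem:balanced-orientation}). Short components are handled without any advice: a short cycle is oriented by the ID rule already used in Lemma~\ref{lem:balanced-orientation}, and a short path is oriented from its lower-ID endpoint toward its higher-ID endpoint. Both rules take $O(r)$ rounds.

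For the long components, I would reuse the encoding of Lemma~\ref{lem:balanced-orientation} essentially verbatim. Greedily pick $\bar S \subseteq V(G'')$ that is $2\bar r$-dominating with pairwise distances strictly greater than $2\bar r$, for $\bar r = \varepsilon r$. Then assign each $\bar v \in \bar S$ an independent random shift in $\{1,\ldots,\bar r\}$ along its component, and let $S''$ be the resulting shifted set. The bad events are that two shifted copies are preimages of nodes within distance $3\alpha$ in $G$, or that two copies of the same original vertex both land in $S''$. Since $\bar r \ll r$, every shift stays within a single long component, so paths and cycles contribute identically to the dependency degree $d$ and bad-event probability $p$ used in Lemma~\ref{lem:balanced-orientation}; the Lovász Local Lemma then yields the desired $S''$. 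For each $v'' \in S''$ I place two bits on $v''$ and one bit on an arbitrary $G''$-neighbor $u''$, with the second bit of $v''$ encoding the chosen orientation of the component. Decoding reads these bits, orients short components locally by IDs, and lifts the orientation of $G'$ to $G$. A per-copy count finishes the argument: every degree-$2$ copy contributes one incoming and one outgoing edge to its original vertex, while the at most one degree-$1$ copy of an odd-degree vertex contributes a single asymmetry, yielding $|\deg_{\mathrm{in}}(v)-\deg_{\mathrm{out}}(v)| \le 1$. The composability parameters $\gamma_0 = O(1)$, $A(c,\gamma) = \Theta(\gamma^3)$, and $T(\alpha,\Delta) = \Delta^{O(\alpha)}$ transfer unchanged from Lemma~\ref{lem:balanced-orientation}.

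The main obstacle I would anticipate is making sure the LLL analysis really does go through once paths are allowed. The two specific points to verify are (i) that long paths, like long cycles, admit $\bar S$-nodes at pairwise distance at least $2\bar r + 1$ without edge-effects causing them to be forced too close to an endpoint (true because each long component has length at least $r \gg \bar r$), and (ii) that a shift never crosses an endpoint, so the set of events depending on a given random variable has the same cardinality as in the cycle case. Both follow from $r \ge \bar r/\varepsilon$, so no new multiplicative factors appear in $d$ or $p$ and the same choice of $r = \Delta^{O(\alpha)}$ that worked in Lemma~\ref{lem:balanced-orientation} suffices here.
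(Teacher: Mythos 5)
Your proposal follows essentially the same route as the paper: build a virtual graph $G'$ of paths and cycles by splitting each node into degree-$2$ copies (with a single degree-$1$ copy for odd-degree nodes), discard short components, reuse the LLL-based advice placement of \Cref{lem:balanced-orientation} on the long components, orient short components by an ID rule, and lift the orientation back to $G$. The paper's write-up for this corollary is only a short sketch, and you fill in precisely the details it omits --- in particular the explicit construction of $G'$ and the check that the LLL shifting argument survives the presence of path endpoints --- so your version is a correct and somewhat more careful rendition of the intended argument.
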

\begin{corollary}
	Let $\mathcal{G}_\Delta$ be the family of graphs of maximum degree $\Delta$. Let $\Pi_\Delta$ be the problem of orienting the edges of a given graph $G \in \mathcal{G}_\Delta$ such that each node $v$ satisfies $|\deg_{\mathrm{in}}(v) - \deg_{\mathrm{out}}(v)| \le 1$.
	Then, there exists a uniform fixed-length sparse $(\mathcal{G}_\Delta, \Pi_\Delta,1,\Delta^{O(1)})$-advice schema.
\end{corollary}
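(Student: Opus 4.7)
The plan is to derive this corollary directly from Corollary \ref{cor:balanced-orientation-all} by applying the generic composable-to-one-bit conversion (Lemma \ref{lem:composable-to-1bit}), mirroring exactly how Corollary \ref{cor:balanced-orientation-even} was obtained from Lemma \ref{lem:balanced-orientation} in the even-degree case.

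First, I would invoke Corollary \ref{cor:balanced-orientation-all}, which already supplies a $(\mathcal{G}_\Delta,\Pi_\Delta,\gamma_0,A,T)$-composable advice schema with $\gamma_0 = O(1)$, $A(c,\gamma) = \Theta(\gamma^3)$, and $T(\alpha,\Delta) = \Delta^{O(\alpha)}$ for the almost-balanced orientation problem (allowing odd-degree nodes and requiring only $|\deg_{\mathrm{in}}(v) - \deg_{\mathrm{out}}(v)| \le 1$). Next, I would feed this composable schema into Lemma \ref{lem:composable-to-1bit} as a black box. The conversion instantiates the schema with constant choices of $\gamma \ge \gamma_0$ and $c > 0$, both independent of $n$ and $\Delta$; since $A(c,\gamma) = \Theta(\gamma^3)$, this pins $\alpha$ to a constant, and therefore the running time $T(\alpha,\Delta) = \Delta^{O(\alpha)}$ collapses to $\Delta^{O(1)}$. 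The output is a uniform fixed-length $(\mathcal{G}_\Delta,\Pi_\Delta,1,\Delta^{O(1)})$-advice schema, which by the guarantees of Lemma \ref{lem:composable-to-1bit} is sparse as well.

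The main obstacle for this corollary is essentially absent, because the genuinely new ingredient required to handle odd-degree nodes has already been absorbed into the proof of Corollary \ref{cor:balanced-orientation-all}: one passes from $G$ to a virtual graph $G'$ whose connected components are paths and cycles (each node of $G$ being the endpoint of at most one path), discards the short components, and then reuses the Lovász-Local-Lemma-based shift argument of Lemma \ref{lem:balanced-orientation} to place the bit-holding pairs on the remaining long paths and cycles. Mapping the resulting consistent orientation of $G'$ back to $G$ yields $|\deg_{\mathrm{in}}(v) - \deg_{\mathrm{out}}(v)| \le 1$ at every node, exactly as required. Hence, once Corollary \ref{cor:balanced-orientation-all} is in hand, deriving the stated uniform fixed-length sparse schema is purely a mechanical application of the composability-to-one-bit conversion.
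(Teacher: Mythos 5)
Your proposal is correct and follows exactly the paper's route: invoke Corollary~\ref{cor:balanced-orientation-all} for the composable schema on arbitrary-degree graphs, then apply Lemma~\ref{lem:composable-to-1bit} as a black box, which fixes $\alpha = A(c,\gamma)$ to a constant and hence collapses $T(\alpha,\Delta) = \Delta^{O(\alpha)}$ to $\Delta^{O(1)}$. Your observation that the odd-degree handling (paths in the virtual graph $G'$, plus the Lov\'asz-Local-Lemma shifting argument) is already absorbed into the proof of Corollary~\ref{cor:balanced-orientation-all} is also accurate, so the derivation of the stated corollary is indeed a mechanical application of the conversion lemma, just as for Corollary~\ref{cor:balanced-orientation-even}.
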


\subparagraph{Extension to splittings on bipartite graphs.}
Let $\mathcal{G}_\Delta$ be the family of graphs maximum degree $\Delta$, where all nodes have even degree. Let $\Pi$ be the problem of computing an edge coloring of $G$ with $2$ colors, say red and blue, such that, for each node $v$ of degree $d$, $v$ has the same number of incident red and blue edges.  This problem is called \emph{splitting}. We show that, for the problem $\Pi$, there exists a composable schema.

Observe that, if we are given a balanced orientation and a $2$-coloring of the nodes, then we can compute a splitting by coloring red the edges outgoing from white nodes and blue the edges outgoing from black nodes. Moreover, if we modify the schema for balanced orientations so that, on each bit-holding node, we add one bit that denotes the color of the nodes, then the nodes are also able to recover a $2$-coloring of the graph. Thus, we obtain the following.
\begin{corollary}\label{cor:splitting}
Let $\mathcal{G}_\Delta$ be the family of bipartite graphs maximum degree $\Delta$, where all nodes have even degree.
	Let $\Pi_\Delta$ be the splitting problem on graphs of $\mathcal{G}_\Delta$.
	Then, there exists a $(\mathcal{G}_\Delta, \Pi_\Delta, \gamma_0,A,T)$-composable advice schema, where $\gamma_0 = O(1)$, $A(c,\gamma) = \Theta(\gamma^3)$, and $T(\alpha,\Delta) = \Delta^{O(\alpha)}$.
\end{corollary}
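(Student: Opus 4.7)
The plan is to apply the composability framework by decomposing splitting into balanced orientation plus bipartite $2$-coloring, exactly as in the running example from the composability section. Given a balanced orientation of $G \in \mathcal{G}_\Delta$ together with a proper $2$-coloring $\chi\colon V \to \{\text{black}, \text{white}\}$, I color each edge red if it is oriented from its white endpoint to its black endpoint, and blue otherwise. At a white node $v$ of even degree $d$, the balanced orientation gives $d/2$ outgoing edges (each oriented white to black, hence red) and $d/2$ incoming edges, each of which is outgoing from a black node (hence blue); the black case is symmetric. So every node sees $d/2$ edges of each color, which is precisely the splitting condition. This last step is a zero-round, zero-bit reduction.

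Next I would design a composable advice schema for the bipartite $2$-coloring problem on $\mathcal{G}_\Delta$. Given $c > 0$, $\gamma \ge 1$, and $\alpha \ge \gamma^3/c$, let $S \subseteq V$ be any maximal set of nodes whose pairwise distances in $G$ are strictly greater than $2\alpha$. Fix an arbitrary proper $2$-coloring of each connected component of $G$ (which exists by bipartiteness) and assign to every node of $S$ a single advice bit encoding its color under this fixed labeling. By maximality of $S$, every node lies within distance $2\alpha$ of some $S$-node, and every connected component contains at least one $S$-node (otherwise one could be added to $S$). The distance condition ensures that each $\alpha$-radius neighborhood contains at most one bit-holder, so $\gamma_0 = 1$ and $\beta = 1 \le c\alpha/\gamma^3$. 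Decoding: each node $v$ collects its $2\alpha$-neighborhood, locates the nearest $S$-node $s$, and outputs the bit stored at $s$ XOR-ed with $\dist_G(v,s) \bmod 2$; correctness follows because the bipartite coloring alternates along every path. This yields a composable schema for bipartite $2$-coloring with $A(c,\gamma) = \Theta(\gamma^3)$ and $T(\alpha,\Delta) = O(\alpha)$.

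Finally, I would invoke the composability lemma (\Cref{lem:compose}) to glue together three ingredients: the composable balanced-orientation schema from \Cref{lem:balanced-orientation}, the bipartite $2$-coloring schema constructed above, and the trivial zero-round reduction from (balanced orientation, $2$-coloring) to splitting. The resulting composable schema for $\Pi_\Delta$ inherits the bottleneck parameters of the balanced-orientation schema, namely $\gamma_0 = O(1)$, $A(c,\gamma) = \Theta(\gamma^3)$, and $T(\alpha,\Delta) = \Delta^{O(\alpha)}$, which is exactly what the corollary claims.

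I do not expect any substantive technical obstacle here: the argument is essentially bookkeeping within the composability framework, using \Cref{lem:balanced-orientation} and \Cref{lem:compose} as black boxes, with balanced orientation as the time bottleneck. The single point that requires care is ensuring that every connected component of $G$ contains at least one $S$-node in the $2$-coloring schema, so that the color is decodable everywhere; this is immediate from the maximality of $S$.
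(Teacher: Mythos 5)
Your proof is correct and uses the same core decomposition as the paper's: balanced orientation plus bipartite $2$-coloring, combined as ``color red the edges oriented white-to-black and blue the others.'' The difference is purely in how the $2$-coloring advice is delivered. The paper takes a shortcut: it never builds a separate composable schema for $2$-coloring, but instead modifies the balanced-orientation schema so that each of its bit-holding nodes carries one extra bit recording that node's side of the bipartition, and then decodes the $2$-coloring by propagating parity from those already-sparse markers. This avoids the (mild) overhead of a formal invocation of \Cref{lem:compose} and keeps the set of bit-holders identical to the orientation schema's. Your approach is more modular: you construct a standalone composable $2$-coloring schema (maximal $(2\alpha+1)$-separated set $S$, one bit per $S$-node, decode by parity of distance to the nearest $S$-node) and then compose it with \Cref{lem:balanced-orientation} and the trivial zero-advice reduction via \Cref{lem:compose}. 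Both routes give $\gamma_0 = O(1)$, $A(c,\gamma) = \Theta(\gamma^3)$, and $T(\alpha,\Delta) = \Delta^{O(\alpha)}$, since the orientation schema dominates the running-time bound and the composition lemma only blows up $\gamma_0$ and $A$ by constant factors. Your construction buys a reusable $2$-coloring schema and exercises the composability machinery exactly as the paper's running example in the preliminaries suggests; the paper's buys brevity. No gaps.
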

\begin{corollary}
Let $\mathcal{G}_\Delta$ be the family of bipartite graphs maximum degree $\Delta$, where all nodes have even degree.
	Let $\Pi_\Delta$ be the splitting problem on graphs of $\mathcal{G}_\Delta$.
	Then, there exists a uniform fixed-length sparse $(\mathcal{G}_\Delta, \Pi_\Delta,1,\Delta^{O(1)})$-advice schema.
\end{corollary}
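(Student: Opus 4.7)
The plan is essentially an invocation of the black-box conversion lemma referenced earlier in the paper. By the preceding \Cref{cor:splitting}, we already have a $(\mathcal{G}_\Delta,\Pi_\Delta,\gamma_0,A,T)$-composable advice schema for the splitting problem, with $\gamma_0 = O(1)$, $A(c,\gamma) = \Theta(\gamma^3)$, and $T(\alpha,\Delta) = \Delta^{O(\alpha)}$. So the only thing left to do is to translate this composable schema into a uniform fixed-length sparse schema using one bit per node. This is exactly what \Cref{lem:composable-to-1bit} is designed to accomplish, applied in a fully problem-independent manner.

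Concretely, I would first invoke \Cref{cor:splitting} to obtain the composable schema, noting that all relevant parameters lie in the regimes required by \Cref{lem:composable-to-1bit} (in particular $A(c,\gamma) = \Theta(\gamma^3)$ matches the bit-budget of $c\alpha/\gamma^3$ demanded by \Cref{def:composable}, and $T(\alpha,\Delta) = \Delta^{O(\alpha)}$ is polynomial in $\Delta$ for each fixed $\alpha$). Then I would feed this schema into \Cref{lem:composable-to-1bit} to produce, for every $\varepsilon > 0$, an $\varepsilon$-sparse uniform fixed-length schema in which each node receives exactly one bit and the decoding algorithm runs in $\Delta^{O(1)}$ rounds (where the hidden constant depends on $\varepsilon$, as is standard for sparse schemas). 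Since this holds for every constant $\varepsilon > 0$, by the sparse-schema convention introduced after \Cref{def:composable}, the resulting schema is sparse in the sense required by the corollary statement.

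There is no real obstacle here: the preceding corollary already did all the combinatorial work (the LLL argument for balanced orientations, plus the one-bit augmentation that simultaneously encodes a $2$-coloring on bit-holding nodes), and \Cref{lem:composable-to-1bit} encapsulates the remaining encoding/decoding machinery needed to pack variable-length advice into a single bit per node while preserving sparsity. The statement is therefore an immediate corollary, and the proof I would write is just one or two sentences applying these two results in sequence.
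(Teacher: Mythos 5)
Your proposal is correct and matches the paper's approach exactly: the paper obtains this corollary by applying \Cref{lem:composable-to-1bit} to the composable schema of \Cref{cor:splitting}, which is precisely what you describe. Your observation that $T' = O(\alpha + T(\alpha,\Delta)) = \Delta^{O(1)}$ for constant $\alpha = A(c,\gamma)$ is the right justification for the claimed runtime.
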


\subparagraph{\boldmath Extension to $\Delta$-edge coloring on bipartite $\Delta$-regular graphs, when $\Delta$ is a power of $2$.}
Let $\mathcal{G}_\Delta$ be the family of bipartite $\Delta$-regular graphs, where $\Delta$ is a power of $2$. Let $\Pi$ be the problem of computing a $\Delta$-edge coloring. We show that, by recursively using the schema for splitting, we obtain the following corollaries.
\begin{corollary}
	Let $\mathcal{G}_\Delta$ be the family of bipartite $\Delta$-regular graphs, where $\Delta$ is a power of $2$. Let $\Pi$ be the problem of computing a $\Delta$-edge coloring.
	Then, there exists a $(\mathcal{G}_\Delta, \Pi_\Delta, \gamma_0,A,T)$-composable advice schema, where $\gamma_0 = O(\Delta)$, $A(c,\gamma) = \Theta((
	\Delta \gamma)^3 \log \Delta)$, and $T(\alpha,\Delta) = \Delta^{O(\alpha)}$.
\end{corollary}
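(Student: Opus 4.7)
The plan is to exploit the recursive structure of $\Delta$-edge coloring when $\Delta = 2^k$: a $\Delta$-edge coloring of a $\Delta$-regular bipartite graph can be obtained by iteratively applying the splitting schema from \Cref{cor:splitting}. First, I would observe that a single splitting of a $\Delta$-regular bipartite graph partitions its edges into two bipartite $(\Delta/2)$-regular subgraphs (each vertex retains exactly $\Delta/2$ red and $\Delta/2$ blue incident edges). Recursing on each of these subgraphs yields a binary tree of subgraphs of depth $k = \log \Delta$, whose leaves are $\Delta$ perfect matchings; labelling each leaf with a distinct color gives the desired $\Delta$-edge coloring. Crucially, at every internal level $i \in \{0,1,\dots,k-1\}$ the current subgraphs are $(2^{k-i})$-regular, hence have even degree, so the precondition of \Cref{cor:splitting} is satisfied at every recursive call.

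Next, I would combine the advice schemas for all $\sum_{i=0}^{k-1} 2^i = \Delta - 1$ splitting instances into a single composable schema for $\Delta$-edge coloring. Each individual splitting instance comes from \Cref{cor:splitting} with parameters $\gamma_0 = O(1)$, $A(c,\gamma) = \Theta(\gamma^3)$, and $T(\alpha,\Delta) = \Delta^{O(\alpha)}$. I would then invoke the composition lemma (\Cref{lem:compose}) iteratively, using the solution of the level-$i$ splittings as the input/oracle for the level-$(i+1)$ splittings. Since each level-$i$ subgraph is itself a bipartite graph of maximum degree at most $\Delta$ embedded in $G$, the composable schema of \Cref{cor:splitting} can be applied in parallel to all $2^i$ subgraphs at level $i$, with the bit-holding nodes of different parallel instances being disjoint.

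Finally, I would verify the claimed parameters. The total number of bit-holding nodes in any $\alpha$-radius neighborhood is at most $(\Delta - 1)\cdot O(1) = O(\Delta)$, which gives $\gamma_0 = O(\Delta)$. For $A$, each application of the composition lemma must rescale the neighborhood parameter to preserve sparsity at the next level; starting from $A(c,\gamma) = \Theta(\gamma^3)$ for a single splitting, composing $\log \Delta$ sequential levels while handling up to $\Delta$ parallel instances in each level produces a blow-up of order $\Delta^3 \log \Delta$ in the radius requirement, yielding $A(c,\gamma) = \Theta((\Delta \gamma)^3 \log \Delta)$. The running time bound $T(\alpha,\Delta) = \Delta^{O(\alpha)}$ follows because each of the $\log \Delta$ levels uses a schema running in $\Delta^{O(\alpha)}$ rounds, and $\log \Delta \cdot \Delta^{O(\alpha)} = \Delta^{O(\alpha)}$, while the depth of the recursion tree only contributes polylogarithmic overhead absorbed into the exponent.

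The main obstacle I anticipate is bookkeeping the parameter compounding through the composition lemma: one must confirm that the $\Delta$ parallel instances at each level can be composed without multiplying the $\gamma_0$ parameter multiplicatively per level (only additively across levels) and that the cubic dependence on $\gamma$ in $A$ is preserved when passing from a single composable schema to the $\log \Delta$-fold recursion. If one instead naively multiplied $\gamma_0$ at every recursion step, the resulting $\gamma_0$ would be $\Omega(2^{\log \Delta}) = \Omega(\Delta)$ but the radius blow-up could become polynomial in $\Delta$ beyond the stated $\Theta((\Delta\gamma)^3 \log \Delta)$, so the careful argument lies in showing that the bit-holders at different levels can be kept in disjoint sparse sets and that \Cref{lem:compose} accumulates the radius requirement in the claimed way.
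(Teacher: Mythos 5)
Your proposal is correct and follows essentially the same route as the paper: decompose $\Delta$-edge coloring into $\Delta - 1$ splitting subproblems arranged in a binary tree of depth $\log\Delta$, then apply \Cref{lem:compose} with $k = O(\Delta)$ to the schema from \Cref{cor:splitting}. The concern you raise about iterative parameter compounding dissolves once you note that \Cref{lem:compose} is invoked a single time over the entire DAG of subproblems, so $\gamma_0$ accumulates additively ($\sum_i \gamma_i = O(\Delta)$) and the stated bound $A(c,\gamma) = \Theta((\Delta\gamma)^3\log\Delta)$ is read off directly from the lemma's formula with $k = O(\Delta)$ and $\log k = \Theta(\log\Delta)$.
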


\begin{corollary}
	Let $\mathcal{G}_\Delta$ be the family of bipartite $\Delta$-regular graphs, where $\Delta$ is a power of $2$. Let $\Pi$ be the problem of computing a $\Delta$-edge coloring.
	Then, there exists a uniform fixed-length composable $(\mathcal{G}_\Delta, \Pi,1,f(\Delta))$-advice schema for some computable function $f$.
\end{corollary}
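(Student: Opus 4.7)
The plan is to obtain the $\Delta$-edge coloring by iterating the splitting schema of \Cref{cor:splitting} a logarithmic number of times, and then apply the standard composability machinery. Starting from the bipartite $\Delta$-regular graph $G$ with $\Delta = 2^k$, we invoke \Cref{cor:splitting} once on $G$ to split its edges into two color classes, each of which forms a bipartite $(\Delta/2)$-regular subgraph (bipartiteness is inherited, and $(\Delta/2)$-regularity is guaranteed by the definition of splitting together with $\Delta$ being even). We recurse on each resulting subgraph: at level $i \in \{0,1,\dots,k-1\}$ we have $2^i$ edge-disjoint bipartite $(\Delta/2^i)$-regular subgraphs of $G$, to each of which we apply \Cref{cor:splitting}. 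After $k = \log_2 \Delta$ levels every color class is a bipartite $1$-regular graph, i.e., a perfect matching, so assigning to each edge the sequence of bits indicating which half of each split it belonged to gives a proper $\Delta$-edge coloring.

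The total number of individual splitting invocations is $1 + 2 + \cdots + 2^{k-1} = \Delta - 1$. For each invocation, \Cref{cor:splitting} supplies a composable schema with $\gamma_0 = O(1)$, $A(c,\gamma) = \Theta(\gamma^3)$, and $T(\alpha,\Delta) = \Delta^{O(\alpha)}$. The natural composition order is level-by-level: level $i+1$ is solved by an algorithm that takes the level-$i$ coloring as input (to know which subgraph each edge lives in) and calls \Cref{cor:splitting} independently on each of the $2^i$ subgraphs. This is exactly the setting covered by \Cref{lem:compose}, so iterating its conclusion $\Delta - 1$ times yields a single composable schema for $\Pi$; tracking the parameter growth through the composition bookkeeping gives $\gamma_0 = O(\Delta)$ (one bit-holder budget per split, with $\Delta - 1$ splits), $A(c,\gamma) = \Theta((\Delta\gamma)^3 \log \Delta)$, and $T(\alpha,\Delta) = \Delta^{O(\alpha)}$---exactly the bounds asserted in the preceding (composable) corollary. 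Finally we feed this composable schema into \Cref{lem:composable-to-1bit}, which converts any composable schema into a uniform fixed-length schema with a single bit per node. This produces the desired $(\mathcal{G}_\Delta, \Pi, 1, f(\Delta))$-advice schema for some computable function $f$.

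The main obstacle I anticipate is purely organizational: verifying that the subgraphs on which we recurse always lie in the class to which \Cref{cor:splitting} applies (bipartite, even-degree), and that the composition at each level is well-defined in the sense required by \Cref{lem:compose}---namely, that given the outputs of all previous splittings, the input to the next splitting is recoverable in $O(1)$ rounds by each node. Bipartiteness is preserved by edge-subgraphs of a bipartite graph, and regularity with even $\Delta/2^i$ is preserved by splittings because the schema of \Cref{cor:splitting} delivers an exact balance at every node. The input to the level-$(i+1)$ splitting consists only of the level-$i$ edge colors together with local degrees, all of which are already visible in a single round once level $i$ is solved; hence the composition hypothesis of \Cref{lem:compose} is met. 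Parameter growth through $\log_2 \Delta$ compositions is polynomial in $\Delta$, so $f(\Delta)$ remains a computable (indeed $\Delta^{O(1)}$) function of $\Delta$.
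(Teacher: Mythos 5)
Your proposal is correct and takes essentially the same approach as the paper: recursively apply the splitting schema of \Cref{cor:splitting} for $\log_2 \Delta$ levels (giving $O(\Delta)$ total splits), compose them via \Cref{lem:compose}, and then apply \Cref{lem:composable-to-1bit} to obtain the single-bit schema. The paper frames the intermediate problems slightly differently---as $\Pi_i$, the $2^i$-edge-coloring with $\Delta/2^i$-regular color classes---but this is exactly the level-by-level structure you describe, and the parameter bookkeeping matches.
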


For each $1 \le i \le \log \Delta$, we define the problem $\Pi_i$ as the problem of computing an edge coloring of $G$ with $2^i$ colors, such that each color-class induces a $\Delta/2^i$-regular subgraph. Observe that, by assumption, for each $i$, $\Delta/2^i$ is an integer. 
Note that $\Pi_1$ is exactly the splitting problem, and solving $\Pi_{i+1}$ given a solution for $\Pi_i$ is exactly the problem of computing a splitting for each subgraph given by the solution for $\Pi_i$. Observe that $\Pi_{\log \Delta}$ is exactly the problem of computing a $\Delta$-edge coloring.
Thus, we can solve $\Pi_1$ given advice for splitting, and we can then solve $\Pi_2$ given a solution for $\Pi_1$ and two separate advices for splitting, one for each subgraph given by $\Pi$. In total, in order to solve $\Pi_{\log \Delta}$, we thus need $O(\Delta)$ advices for splitting. Thus, we obtain the claimed results by applying \Cref{lem:compose} with $k = O(\Delta)$ on the schema given by \Cref{cor:splitting}.

\section{\boldmath\texorpdfstring{$\Delta$}{Delta}-coloring of \texorpdfstring{$\Delta$}{Delta}-colorable graphs}\label{ssec:delta-col}

Let $G$ be a graph with maximum degree $\Delta$. We show that any such graph that is $\Delta$-colorable can be colored with $\Delta$ colors in $O(\log \Delta \cdot \sqrt{\Delta \log \Delta})$ rounds using $1$ bit of initial advice. More precisely, we design a composable advice schema for the $\Delta$-coloring problem, and then we apply \Cref{lem:composable-to-1bit} to obtain a schema that uses $1$ bit per node.

\begin{theorem}\label{thm:delta_coloring_scheme}
	Let $\mathcal{G}_\Delta$ be the family of $\Delta$-colorable graphs. Let $\Pi_\Delta$ be the $\Delta$-vertex coloring problem. Then, there exists a $(\mathcal{G}_\Delta,\Pi_\Delta,\gamma_0,A,T)$-composable advice schema, where $\gamma_0 = O(1)$ and $A(c,\gamma) = \Theta(\gamma^3)$ and $T(\alpha, \Delta) = \max(O(\alpha^2 \log \Delta), O(\alpha \log \Delta \cdot \sqrt{\Delta \log \Delta}))$.
\end{theorem}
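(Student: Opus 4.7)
The plan is to decompose the $\Delta$-coloring problem into three phases, construct a composable schema for each, and then invoke \Cref{lem:compose} to combine them. Let $\Pi_1$ be the task of producing a proper $O(\Delta^2)$-coloring, $\Pi_2$ the refinement from an $O(\Delta^2)$-coloring to a $(\Delta+1)$-coloring, and $\Pi_3$ the final step of turning a $(\Delta+1)$-coloring into a $\Delta$-coloring in $\Delta$-colorable graphs. The composition of composable schemas for these three yields the claimed parameters.

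For $\Pi_1$ I would reuse the LLL-based clustering machinery from \Cref{ssec:balanced-orientation}. Pick a maximal set of nodes at pairwise distance roughly $\alpha$, then apply an LLL shift so that in every $\alpha$-ball only $\gamma_0 = O(1)$ ``cluster centers'' survive. At each center I would store $O(\log \Delta)$ bits that anchor a greedy $O(\Delta^2)$-coloring of its $\alpha$-neighborhood (for instance, a permutation of colors to resolve ID-based ties at the center). Decoding replays the greedy rule inside each ball in $O(\alpha^2 \log \Delta)$ rounds, accounting for the first branch of $T(\alpha, \Delta)$. The per-center budget of $O(\log \Delta)$ bits fits into $c\alpha/\gamma^3$ once $\alpha \ge A(c,\gamma) = \Theta(\gamma^3)$.

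For $\Pi_2$ no further advice is needed: I would invoke the deterministic algorithm of \cite{FHK16,BarenboimEG18,MausT22}, which refines an $O(\Delta^2)$-coloring into a $(\Delta+1)$-coloring in $O(\sqrt{\Delta \log \Delta})$ additional rounds. This is trivially a composable schema with no bit-holders. For $\Pi_3$ I would adapt the Panconesi--Srinivasan reduction: in a $\Delta$-colorable graph, if the current $(\Delta+1)$-coloring uses all $\Delta+1$ colors then some vertex $v$ has two non-adjacent neighbors $u_1, u_2$, and a Kempe chain anchored at $u_1$ allows $u_1$ and $u_2$ to share a color, freeing $v$. Because such chains can be long, I would first cluster the graph at radius $\alpha$ using the same LLL machinery as $\Pi_1$ and store, at each cluster center, $O(\log \Delta)$ bits identifying the pivot vertex inside the cluster together with the two colors of its Kempe chain. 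Iterating this swap and re-applying $\Pi_2$ a logarithmic number of times within each cluster contributes the $O(\alpha \log \Delta \cdot \sqrt{\Delta \log \Delta})$ branch of $T(\alpha, \Delta)$.

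The main technical obstacle lies in $\Pi_3$: Kempe chains are inherently global, so I must argue that, for a suitably chosen clustering, every chain needed in the reduction either closes inside a single cluster or can be safely truncated at the boundary using the $O(\log \Delta)$ bits stored at the two adjacent centers. The $\Delta$-colorability hypothesis is essential here, since it guarantees that globally compatible swaps exist; the composable cluster structure then lets us stitch together the local swaps without central coordination. The composability parameters $\gamma_0 = O(1)$ and $A(c,\gamma) = \Theta(\gamma^3)$ carry over verbatim from the LLL analysis of \Cref{ssec:balanced-orientation}, and \Cref{lem:compose} assembles the three stages into the composable schema promised by the theorem.
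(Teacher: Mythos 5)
Your three-phase decomposition ($O(\Delta^2)$-coloring, then $(\Delta+1)$-coloring without advice, then $\Delta$-coloring) matches the paper exactly, and the middle phase is handled identically. But the mechanisms you propose for phases $1$ and $3$ diverge substantially from the paper's, and both contain genuine gaps.

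For $\Pi_1$, your plan to put cluster centers at mutual distance roughly $\alpha$ and store $O(\log\Delta)$ bits at each to ``anchor a greedy $O(\Delta^2)$-coloring'' does not resolve the central difficulty: a greedy coloring needs a consistent global vertex order, and without it, adjacent $\alpha$-balls will produce conflicting colors at their shared boundary. The paper instead builds a genuine network decomposition with cluster \emph{radius} $\Theta(\alpha^2\log\Delta)$ (not $\alpha$), properly colors the \emph{cluster graph} with up to $\Delta^{O(\alpha^2\log\Delta)}$ colors, and gives each node the pair (local color inside its cluster, color of its cluster). Encoding the cluster color can require up to $\Theta(\alpha^2\log^2\Delta)$ bits, which the paper distributes over an $\alpha$-independent set of internal cluster vertices after a careful ``refining process'' (\Cref{lem:volumes}) that guarantees high-degree clusters have enough interior. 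Your $O(\log\Delta)$ bits per center cannot encode a cluster color in the high-degree case, and the $O(\alpha^2\log\Delta)$ runtime does not emerge from an $\alpha$-radius ball. Note also that the LLL-shift technique from \Cref{ssec:balanced-orientation} is specific to the cycle/path decomposition $G'$, and the paper's \Cref{lem:problem:initial_coloring} does not use LLL at all (it uses ruling sets plus refinement).

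For $\Pi_3$, your Kempe-chain plan is not the paper's argument, and you explicitly flag its central obstacle (global chains not confined to clusters) without resolving it. The paper's \Cref{lem:reduce_to_delta} instead follows the PS92/GHKM18 route: uncolor the class $\Delta+1$, take a ruling set $R$ on the uncolored vertices, build BFS shortest-path trees toward $R$, plant ``relay vertices'' every $\Theta(\alpha\log\Delta)$ hops storing their own final colors in the advice, and then process layers of the trees using the deterministic $(\mathrm{deg}+1)$-list-coloring primitive (\Cref{thm:list_coloring}) in place of a randomized symmetry breaker. The final subproblem (\Cref{problem:fix_root_colors}) recolors the ruling-set roots themselves using the ``shift colors along a short path to an augmenting vertex'' guarantee of \Cref{lem:simple_dist_brooks}, with the advice marking the augmenting vertex and the path. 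This replaces your Kempe chains with a structure that is local by construction and avoids the truncation-at-boundary problem entirely. In short: your skeleton is right, but both advice-bearing phases need the paper's more elaborate encodings, and the Kempe-chain gap you identify is exactly what the PS92-style shift/relay machinery is there to avoid.
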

\begin{corollary}
	Let $\mathcal{G}_\Delta$ be the family of $\Delta$-colorable graphs. Let $\Pi_\Delta$ be the $\Delta$-vertex coloring problem. Then, there exists a uniform fixed-length $(\mathcal{G}_\Delta,\Pi_\Delta,1,T(\alpha, \Delta))$-advice schema with $T(\alpha, \Delta) = \max(O(\log \Delta), O(\log \Delta \cdot \sqrt{\Delta \log \Delta}))$.
\end{corollary}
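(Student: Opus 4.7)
The plan is to derive this corollary as an essentially immediate consequence of Theorem \ref{thm:delta_coloring_scheme} together with the generic black-box conversion Lemma \ref{lem:composable-to-1bit}, which was developed in the composability framework for exactly this purpose: to turn any composable schema into a uniform fixed-length schema that stores one bit per node. No new combinatorial idea is required; the work is entirely parameter-tracking.

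First, I would invoke Theorem \ref{thm:delta_coloring_scheme} to obtain a $(\mathcal{G}_\Delta,\Pi_\Delta,\gamma_0,A,T)$-composable advice schema for the $\Delta$-coloring problem, with the parameters stated there: $\gamma_0 = O(1)$, $A(c,\gamma) = \Theta(\gamma^3)$, and running time $T(\alpha,\Delta) = \max\bigl(O(\alpha^2 \log \Delta),\, O(\alpha \log \Delta \cdot \sqrt{\Delta \log \Delta})\bigr)$. Then I would apply Lemma \ref{lem:composable-to-1bit} to this schema as a black box, which by its very statement yields a uniform fixed-length schema in which every node holds exactly one bit of advice. This step is the one that collapses the per-bit-holding-node budget $\beta \le c\alpha/\gamma^3$ guaranteed by Definition \ref{def:composable} down to a single uniformly distributed bit; the encoding mechanism is internal to that lemma and, since we are permitted to assume earlier results, need not be reopened here.

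The only calculation to perform is to track what happens to the running time under the conversion. Because $\gamma_0 = O(1)$ and $A(c,\gamma) = \Theta(\gamma^3)$ are independent of $\Delta$, the conversion of Lemma \ref{lem:composable-to-1bit} can be invoked with a constant choice of the sparsity-controlling parameter $\gamma$, which in turn fixes $\alpha$ to a constant (depending only on $c$ and $\gamma$, not on $\Delta$). Substituting a constant $\alpha$ into the $T(\alpha,\Delta)$ provided by the theorem turns $\alpha^2$ and $\alpha$ into constants, leaving precisely the bound $\max(O(\log \Delta),\, O(\log \Delta \cdot \sqrt{\Delta \log \Delta}))$ asserted by the corollary.

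The only point that merits a moment's care — and what I would expect to be the sole potential obstacle — is confirming that Lemma \ref{lem:composable-to-1bit} does not introduce any additional multiplicative overhead in $\Delta$ when it propagates the single bit throughout each $\alpha$-radius neighborhood. Assuming the standard guarantee of that lemma, namely that the conversion incurs only an $\alpha$-round propagation cost on top of the underlying schema's running time, the derivation of the corollary is complete.
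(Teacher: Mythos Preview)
Your proposal is correct and matches the paper's intended derivation: the corollary is stated immediately after Theorem~\ref{thm:delta_coloring_scheme} without proof precisely because it follows by applying Lemma~\ref{lem:composable-to-1bit} as a black box, exactly as you do. Your concern at the end is resolved by the statement of Lemma~\ref{lem:composable-to-1bit} itself, which guarantees $T' = O(\alpha + T(\alpha,\Delta))$ with only an additive $O(\alpha)$ overhead; since $\gamma_0 = O(1)$ and $A(c,\gamma) = \Theta(\gamma^3)$ give $\alpha = O(1)$, no $\Delta$-dependent factor is introduced.
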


The remaining part of this section is dedicated to proving \cref{thm:delta_coloring_scheme}. The formal definition of advice schema and composability are presented in \Cref{def:advice-schema} and \Cref{def:composable}, respectively. To recall, in order to prove that a composable schema exists, for any given $\gamma \ge \gamma_0$ (and hence in our case a large-enough constant), and for any given $\alpha \ge A(c,\gamma)$, we need to provide a schema of distributing advice bits to the vertices of the graph such that, in any ball of radius $\alpha$ there are at most $\gamma_0$ bit-holding nodes, and bit-holding nodes must have at most $c \alpha / \gamma^3$ bits each. Moreover, given such a bit assignment, there must exist an algorithm that can solve $\Delta$-coloring in $\max(O(\alpha^2 \log \Delta), O(\alpha \log \Delta \cdot \sqrt{\Delta \log \Delta})$ time.

To that end, we split the problem of $\Delta$-coloring into three subproblems, one of them can be solved with a distributed algorithm of bounded locality (i.e., with a runtime that solely depends on $\Delta$), and for the two remaining ones we design composable schemas. The three independent parts of our algorithm are as follows.
\begin{enumerate}
\item First, the algorithm computes an $O(\Delta^2)$-vertex coloring in $O(\log \Delta)$ time, using an algorithm based on network decomposition which leverages some additional advice.
\item Then, the algorithm reduces the number of colors to $\Delta +1$, by using a standard distributed coloring algorithm, which runs in $O(\sqrt{\Delta \log \Delta})$ rounds when an initial $O(\Delta^2)$-coloring is given \cite{FHK16,BarenboimEG18,MausT22}.
\item Finally, the algorithm reduces the number of colors from $\Delta+1$ to $\Delta$, inspired by distributed algorithms from \cite{PS92, GHKM18}, leveraging the advice bits to terminate in just $O(\log \Delta \cdot \sqrt{\Delta \log \Delta})$ rounds. 
\end{enumerate}
The second step is a simple application of distributed algorithms with no advice, while the first and third algorithms rely on some given advice. Below, we formally state the problems to be solved in steps 1 and 3,  and then we design composable schemas for them. Later, we will compose the schemas into a single one by applying \Cref{lem:compose}, obtaining \Cref{thm:delta_coloring_scheme}.

\subsection{\texorpdfstring{\boldmath $O(\Delta^2)$}{O(Delta²)}-vertex coloring}\label{subsection:initial_coloring}
In this section, we consider the following problem.
\begin{problem}[$O(\Delta^2)$-coloring] \label{problem:initial_coloring}
	Let $G = (V, E)$ be a graph with maximum degree $\Delta$. The goal is to find an assignment of colors $C: V \rightarrow \{1, 2, \dots, O(\Delta^2) \}$ such that there are no two vertices $u, v$ such that $\{u, v\} \in E \wedge C(u) = C(v)$.
\end{problem}
\noindent For this problem, we prove the following lemma.
\begin{lemma}\label{lem:problem:initial_coloring}
	Let $\Pi_\Delta$ be the $O(\Delta^2)$-coloring problem. Then, there exists a $(\mathcal{G}_\Delta,\Pi_\Delta,\gamma_0,A,T)$-com\-pos\-able advice schema, where $\gamma_0 = O(1)$, $A(c,\gamma) = \Theta(\gamma^3)$, and $T(\alpha,\Delta) = O(\alpha^2 \log \Delta)$.
\end{lemma}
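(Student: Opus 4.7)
The plan is to build a composable schema via a simple network-decomposition template: use the advice to carve $G$ into small-diameter clusters together with an $O(\log\Delta)$-coloring of the resulting cluster graph, and then process clusters by color class, greedily assigning each node a color from an $O(\Delta^2)$-palette while avoiding the $O(\Delta)$ colors already chosen by neighbors in earlier clusters. Because $\Delta^2$ dominates $\Delta$, greedy choice always succeeds, so the heavy lifting is producing the decomposition with the right sparsity from the advice.

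\textbf{Step 1: Placing cluster centers.} First I would designate, through the advice, a sparse set $S$ of cluster centers satisfying (i) any two centers are at distance $\ge \alpha$ in $G$, and (ii) every node is within distance $O(\alpha)$ of some center. Condition (i) automatically gives $O(1)$ centers per $\alpha$-ball as required for composability with $\gamma_0 = O(1)$. The existence of such an $S$ can be argued exactly as in the balanced-orientation proof of \Cref{lem:balanced-orientation}: start with a maximal $2\bar r$-separated set $\bar S$ (a $\bar r$-ruling set), for $\bar r = \Theta(\alpha)$, and then apply the Lovász Local Lemma to perturb each element of $\bar S$ within a window of radius $\bar r$ so that separation at the coarser scale $\alpha$ is achieved without destroying domination. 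Counting bad events and dependencies in $\alpha$-balls instead of on cycles gives the required $\alpha \ge A(c,\gamma) = \Theta(\gamma^3)$ threshold.

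\textbf{Step 2: Encoding the cluster coloring.} Form the cluster graph $H$ whose vertices are the centers and whose edges connect two centers whenever their $O(\alpha)$-radius clusters touch in $G$. Since only $O(1)$ centers lie in any $\alpha$-ball, $H$ has constant maximum degree and hence admits a proper $O(\log\Delta)$-coloring (in fact a constant one) which can be fixed centrally at schema-construction time. At each bit-holding center, I would store a short string encoding (a) the fact that it is a center and (b) its color in this $O(\log\Delta)$-coloring, costing $O(\log\log\Delta)$ bits. This fits inside the budget $c\alpha/\gamma^3$ as long as $\alpha$ is chosen at least $A(c,\gamma) = \Theta(\gamma^3)$ large enough.

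\textbf{Step 3: Decompression algorithm and runtime.} At runtime, each node finds its nearest center (and hence its cluster and cluster-color) in $O(\alpha)$ rounds by BFS. Then the algorithm iterates through cluster colors $1,\dots,O(\log\Delta)$: in phase $i$, each cluster of color $i$ has its center gather the cluster and the relevant boundary information from already-processed neighboring clusters ($O(\alpha)$ rounds), brute-force extend the existing partial coloring by greedily picking colors from $\{1,\dots,O(\Delta^2)\}$, and broadcast the assignment back ($O(\alpha)$ more rounds). Because each gather/broadcast costs $O(\alpha)$ and because coordinating with already-colored neighbors across cluster boundaries may require an additional $O(\alpha)$ factor of propagation per phase, the total runtime is bounded by $O(\alpha^2 \log\Delta)$, matching the claim.

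\textbf{Main obstacle.} The technical heart of the argument is Step 1, i.e.\ producing a sparse, well-separated dominating set via an LLL argument whose event/dependency counts are compatible with the $\gamma_0=O(1)$ and $A(c,\gamma)=\Theta(\gamma^3)$ targets. The rest (Steps 2 and 3) is essentially a routine decomposition-plus-greedy coloring, leveraging the slack in the $O(\Delta^2)$-palette; the novelty is that the decomposition is provided by the advice rather than computed distributedly.
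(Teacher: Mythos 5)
There is a genuine gap in Step 2, and it breaks the proposal. You assert that because only $O(1)$ centers lie in any $\alpha$-ball, the cluster graph $H$ has constant maximum degree and therefore admits a coloring with $O(\log\Delta)$ (``in fact a constant'') number of colors, encodable in $O(\log\log\Delta)$ bits per center. This is false. Two clusters are adjacent in $H$ whenever their $O(\alpha)$-radius regions share an edge in $G$; a single cluster of radius $O(\alpha)$ can have up to $\Delta^{O(\alpha)}$ boundary vertices, each potentially touching a different neighboring cluster, so $H$ can have degree $\Delta^{\Theta(\alpha)}$. A proper coloring of $H$ then needs $\Delta^{\Theta(\alpha)}$ colors, i.e.\ $\Theta(\alpha\log\Delta)$ bits per cluster, which is neither $O(\log\log\Delta)$ nor within the per-node budget $c\alpha/\gamma^3$ when $\Delta$ is large (recall $A(c,\gamma)=\Theta(\gamma^3)$ is required to be independent of $\Delta$).

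Overcoming precisely this obstacle is the technical content of the paper's proof, and it requires machinery your sketch does not have. The paper takes clusters of radius $\Theta(\alpha^2\log\Delta)$ (much larger than the $\alpha$-separation of bit-holding vertices), so that each cluster contains a long enough $\alpha$-independent set of interior vertices to spread the cluster-color bits one bit per vertex; it bounds the number of bits per cluster not by the worst-case cluster degree but by bucketing clusters by their actual degree (bucket $i$ contains clusters of degree in $[\Delta^{i-1},\Delta^i)$, needing only $(i+1)\log\Delta$ bits); it runs a refining process to guarantee that high-degree clusters actually possess enough interior vertices far from the boundary to store their bits; and it finishes with a Linial-style reduction from $\Delta^{O(\alpha^2\log\Delta)}$ colors down to $O(\Delta^2)$. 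The stated $T(\alpha,\Delta)=O(\alpha^2\log\Delta)$ runtime comes directly from this larger cluster radius. In your Step 3 the $\alpha^2$ factor is invoked without real justification (the phrase about ``coordinating... may require an additional $O(\alpha)$ factor'' is hand-waving), because in fact nothing in your cluster structure produces a second $\alpha$ factor; it only arises when the clusters themselves have radius $\Theta(\alpha^2\log\Delta)$. A minor additional remark: the LLL in your Step 1 is unnecessary here; a greedy $(\alpha,\alpha)$-ruling set already gives $\alpha$-separation together with domination, which is exactly what the paper uses (with $r=100\alpha^2\log\Delta$). The LLL shifting argument you borrowed from the balanced-orientation section addresses a different difficulty that does not arise in the coloring setting.
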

\begin{proof}
We fix $\beta = \gamma_0 = 3$, and we show that, for any $\gamma \ge \gamma_0$, for any constant $c$, and for any $\alpha = \max\{\gamma^3 \beta / c, \gamma^3 \beta\}$, there exists a variable-length $(\mathcal{G}_\Delta,\Pi_\Delta,\beta,O(\alpha^2 \log \Delta))$-advice schema satisfying that in each $\alpha$-radius neighborhood there are at most $\gamma_0$ bit-holding nodes. 

The algorithm roughly follows the approach that can be used to color a graph via a network decomposition, which can be described as follows. Let us consider a partition of vertices into disjoint clusters, each cluster having only $\Delta^{O(\alpha^2 \log\Delta)}$ adjacent clusters and a small diameter. In order to find a proper $\Delta^{O(\alpha^2 \log\Delta)}$ coloring, we can use a proper $\Delta+1$ coloring of vertices in each cluster, and a proper $\Delta^{O(\alpha^2 \log\Delta)}$ coloring of the cluster graph (i.e., a coloring of the graph where each cluster is contracted to a point, or, in other words, an assignment of colors to clusters such that adjacent clusters have different colors). The color of a vertex $v$ in a cluster $C$ is the tuple $($color of $v$ in $C$, color of $C)$. The first part of the tuple ensures that neighboring vertices in the same cluster have different colors, and the second part ensures that vertices in two adjacent clusters have different colors. The total number of colors we use is still $\Delta^{O(\alpha^2 \log\Delta)}$. The number of colors we use can be then reduced to $O(\Delta^2)$ by using a few rounds of Linial's coloring algorithm \cite{Linial92}.

\begin{lemma}\label{lem:linial_step}\cite{Linial92}
Let us consider a graph with maximum degree $\Delta$ and some coloring of its vertices with colors $1, 2, \dots, c$. There exists $2$ round distributed LOCAL algorithm that finds a proper $O(\Delta^2)$ coloring, if $c \in \poly(\Delta)$, or an $O(\Delta^2 \log c)$ coloring otherwise.
\end{lemma}

Given that the total number of colors before we start using the reduction described in \cref{lem:linial_step} is $\Delta^{O(\alpha^2 \log\Delta)}$, we need $O(\log^* \alpha)$ rounds to reduce the number of colors down to $O(\Delta^2)$. 

In the remaining part of the proof of \cref{lem:problem:initial_coloring} we show that we can define an adequate clustering, a distribution of advice bits, and an algorithm that using these advice bits finds a proper $\Delta^{O(\alpha^2 \log\Delta)}$ coloring of the cluster graph.

In the following, the degree of a cluster is the number of edges with exactly one endpoint in the cluster.
The idea of the advice schema of this section is similar to the idea behind the schema for graphs with sub-exponential growth. Essentially, we show that we can define a clustering with clusters that have sufficiently many internal vertices to encode the color of the cluster.  However, to provide such a clustering, we rely on a slightly different idea, namely: we split all clusters into $O(\alpha^2 \log \Delta)$ buckets, and let the $i$th bucket contain every cluster whose degree is in the interval $[\Delta^{i-1}, \Delta^i)$. Each bucket receives its own disjoint palette of colors; the $i$th bucket receives a palette of size $\Delta^i$. The clusters from the $i$th bucket will be colored only with those colors, and as such the number of bits describing such a color is at most $(i+1) \log \Delta$. For any $\Delta >1$, the total number of colors in bucket $i$ and all smaller buckets is $\sum_{j=1}^{i} \Delta^j < \Delta^{i+1}$. 

In the following, by $\alpha$-independent set we denote a set of nodes that are at pairwise distance at least $\alpha$.
Generally, our goal is to design the clustering in a way that guarantees that each cluster assigned to bucket $i$ contains an $\alpha$-independent set of nodes that are capable of storing all the required $(i+1) \log \Delta$ bits. However, we also need to have some special marker denoting cluster centers, and we need to take into account that vertices from the $\alpha$-independent set cannot be too close to the border of the cluster. Nevertheless, these are some low-level details that can be easily taken care of by using clusters with radius strictly larger than $\alpha+1$.

To define the clustering, we proceed as follows. Let $r = 100 \alpha^2 \log \Delta$. We pick an $(r,r)$-ruling set $I$ from $G$. Then, we assign each vertex from $G$ to the closest vertex from $I$ (breaking ties in an arbitrary consistent manner, e.g., by joining the cluster of the smallest ID neighbor). That concludes the first part of our construction. Currently, we have clusters satisfying that all nodes that are within distance $r/2 - 1 = 50 \alpha^2 \log \Delta - 1$ from a cluster center $v$ are guaranteed to belong to the cluster of $v$, and such that the degree of the clusters is bounded from above by $\Delta ^{r+1}$. However, for our advice schema, we need the clusters with degrees larger than $\Delta^{10 \alpha}$ to have at least $100 \alpha^2 \log^2 \Delta+2$ internal vertices that are at least $\alpha$ hops away from the border of the cluster. 

To find such a clustering without the high-degree low-volume clusters, we remove some of the vertices from $I$, which we call the refining process. To that end, let us consider clustering $C$ we just built around the nodes from $I$. Let $I'$ be the set of centers of clusters from $C$. In our refining process, we gradually remove the clusters from $C$:
\begin{itemize}
\item initially $I'$ is equal to $I$, but it stops being true as soon as we start removing clusters from $C$,
\item during the process, $C$ is only a partial clustering, i.e., not all vertices belong to a cluster.
\end{itemize}
We deem a cluster to be \emph{broken} when it has degree at least $\Delta^{10 \alpha}$ but does not have at least $\Delta^{9 \alpha-2}$ vertices of distance at least $\alpha$ from its border. In the refining process, we consider all broken clusters one by one, and we `repair' each broken cluster by destroying its neighbors, and possibly destroying other nearby clusters. Whenever we consider a broken cluster $B$ that is still in $C$, we remove from $C$ all other clusters ($\neq B$) that contain at least one vertex whose distance from the border of $B$ is smaller than $2\alpha+1$. Thus, the set $I'$ still contains the center $i_B$ of said broken cluster, but has lost one or more other cluster centers.

In the end, we are left with a set of clusters that have survived the refining process. Furthermore, we have a set of vertices that no longer belong to any cluster as their cluster was removed from $C$ by the refining process. Those vertices join the closest high-degree cluster that is still present in $C$, and we later call this part of the refining process the re-partition step.

\begin{lemma}\label{lem:volumes}
In the new clustering, broken clusters that survived the refining process have at least $\Delta^{9 \alpha-2}$ nodes that are at least $\alpha$ hops from their new border.
\end{lemma}
\begin{proof}
For any broken cluster $B$ that is not removed from $C$, the refining process removed all clusters (in particular all high-degree clusters) with vertices at a distance at most $2\alpha+2$ from the border of $B$. As such, the re-partition step assigns to $B$ all vertices that are at distance at most $\alpha$ from B's border, as any other cluster is at least $\alpha+1$ hops away, and $B$ is at most $\alpha$ hops away. Thus, all vertices that were originally in $B$ are now at least $\alpha$ hops away from their new border. In conclusion, $B$ contains at least $\Delta^{10\alpha}$ nodes that are at least $\alpha$ hops away from their new border.
\end{proof}

On a high level, our advice schema uses marker $111$ to denote vertices in $I$, marker $11$ to denote each cluster center (vertices from $I'$) and then assigns $1$ bit on a sparse set of vertices of the cluster, to encode the color of the cluster.
First, we describe the distribution of bit-holding vertices inside clusters. Then, we describe the algorithm that using this advice can recover the color of each cluster in 
$O(\alpha^2 \log \Delta)$ rounds.

\subparagraph{\boldmath Clusters with degree at most $\Delta^{10 \alpha}$.} 
If the degree of a cluster is smaller than $\Delta^{10 \alpha}$, we need $10 \alpha \log \Delta$ bits of advice to encode the color of this cluster. If the cluster has a radius strictly smaller than $r/2 - 1$, then it means that the whole graph is contained in the cluster, and hence no advice is needed to compute a coloring. Hence, assume that the radius is at least $r/2-1$. Consider a BFS tree starting from the center $v$ of the cluster. By assumption, there exists a path that lies on the BFS tree, that does not contain $v$ and any of its neighbors, that does not contain nodes that have edges connected to nodes outside the cluster and all nodes at distance at most $\alpha$ from them, and that contains at least $r/2 - 4 - \alpha$ nodes. By picking an $(\alpha,\alpha)$-ruling set on such a path, we obtain a set of size at least $(r/2-4 - \alpha) / (2\alpha+1) = (50\alpha^2 \log \Delta - 4 - \alpha) / (2 \alpha+1) > 10 \alpha \log \Delta$ vertices that can be used in our advice schema. We sort these nodes by their IDs, and we assign to them, in order, the bits of the color of the cluster.

\subparagraph{\boldmath Clusters with degree larger than $\Delta^{10 \alpha}$.} 
By \cref{lem:volumes} we know that all clusters of degree larger than $\Delta^{10 \alpha}$ have at least $\Delta^{9 \alpha-2}$ internal vertices that are $\alpha$ hops away from their border. They either had it in the clustering around $I$, or they are clusters that survived the refining process and \cref{lem:volumes} guarantees those internal vertices exist.   

In such a large set of vertices, there exists an $\alpha$-independent set of size at least $\Delta^{9 \alpha-2} / \Delta^{\alpha+1} \allowbreak > \Delta^{7\alpha}$, for $\Delta>1$.  Such an independent set can be found e.g. by a greedy algorithm. Adding a constraint that vertices cannot be too close to the center of the cluster takes away $1$ vertex from the set (the greedy algorithm used for construction can start from choosing the center). Therefore, we have at least $\Delta^{7\alpha}$ vertices, which for $\Delta>1$ and sufficiently large $\alpha$ is larger than $50 \alpha^2 \log^2 \Delta+2$.

\subparagraph{Decoding algorithm.}
Firstly, the algorithm identifies the clustering -- all vertices with advice indicating that they belong to $I$ initiate a BFS computation, each vertex $v$ joins cluster of the initiator $i$ of the first BFS that reached $v$ (breaking ties in favor of the smallest neighbor ID).
This gives the initial partition into clusters. 
Then, the clusters act upon the advice denoting whether they are in $I'$ - if they do not belong to $I'$, their vertices join the closest high-degree cluster.

That results in the clustering as used by our advice schema. Then, the vertex can identify all bit-holding vertices, and read bits in the same order that was used by our encoding schema. This allows a center to identify a binary representation of the color of the cluster in a proper coloring of the cluster graph.

Finally, as mentioned, to compute a coloring of vertices, each cluster-center can compute a coloring of the vertices in its cluster (it knows the topology of the cluster), and broadcast the computed colors to all vertices in the cluster. Overall, the time needed for computation is proportional to the radius of the clusters, which is $O(\alpha^2 \log \Delta)$.
\end{proof}

\subsection{\texorpdfstring{\boldmath $\Delta$}{Delta}-coloring}\label{ssec:delta_coloring}

Below, we present a composable advice schema that given $\Delta+1$ coloring of $\Delta$-colorable graph, finds a $\Delta$-coloring.

\begin{problem} \label{problem:reduce_to_delta}
We are given a $\Delta$-colorable graph $G$ with maximum degree $\Delta$,  with a proper $\Delta+1$ coloring $C$. The goal is to compute a proper $\Delta$-coloring of $G$.
\end{problem}

\begin{lemma}\label{lem:reduce_to_delta}
	Let $\mathcal{G}_{\Delta}$ be the set of all $\Delta$-colorable graphs with vertices colored according to some proper $\Delta+1$ coloring. Let $\Pi_\Delta$ be the $\Delta$-vertex coloring problem, given that on the input vertices are given a color from some proper $\Delta+1$ coloring. Then, there exists a $(\mathcal{G}_\Delta,\Pi_\Delta,\gamma_0,A,T)$-composable advice schema, where $\gamma_0 = O(1)$, $A(c,\gamma) = \Theta(\gamma^3)$, and $T(\alpha,\Delta) = O(\alpha \log \Delta \cdot \sqrt{\Delta \log \Delta})$.
\end{lemma}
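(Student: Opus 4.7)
I will follow the same composable-schema template used in \Cref{lem:problem:initial_coloring}. Fix $\gamma_0 = O(1)$ and a constant bit-budget per bit-holding node, and show that for every $\gamma \ge \gamma_0$, every $c > 0$, and every $\alpha \ge \Theta(\gamma^3)$ there is a variable-length schema that places at most $\gamma_0$ bit-holders in every $\alpha$-ball (each carrying at most $c\alpha/\gamma^3$ bits) and whose decoder runs in $O(\alpha \log \Delta \cdot \sqrt{\Delta\log\Delta})$ rounds. \Cref{lem:composable-to-1bit} then turns this into the $1$-bit schema in the corollary.

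First I would build a cluster decomposition exactly as in \Cref{lem:problem:initial_coloring}: take an $(r,r)$-ruling set with $r = \Theta(\alpha)$, form the Voronoi clusters around it, and run the same ``refining'' step so that each cluster of large degree contains an $\alpha$-independent set of internal vertices that are at least $\alpha$ hops away from the cluster border. The $\{11,111\}$-marker trick from \Cref{lem:problem:initial_coloring} is reused to encode the clustering itself; on top of that I will attach a short advice string to each cluster center, stored on the internal $\alpha$-independent set. The key role of the $\alpha$-buffer is that any local recoloring operation the algorithm performs inside one cluster is guaranteed not to reach another cluster, which will let the clusters be processed in parallel without conflicts.

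The content of the advice for a cluster $C$ is a description of how to transform the given $(\Delta+1)$-coloring restricted to $C$ into a proper $\Delta$-coloring that agrees with the global $\Delta$-coloring chosen by the schema function. Following the Panconesi--Srinivasan strategy \cite{PS92} and the subsequent improvements in \cite{GHKM18}, this transformation proceeds in $O(\log \Delta)$ phases: in each phase, for every vertex still colored $\Delta+1$, one either assigns it a color missing from its neighborhood or performs a short Kempe-chain swap to free such a color; the $\Delta$-colorability of $G$ guarantees that this can be done. Each phase uses, as a subroutine, a deterministic $(\Delta+1)$-coloring of a derived subgraph, which costs $O(\sqrt{\Delta\log\Delta})$ rounds via \cite{FHK16,BarenboimEG18,MausT22}. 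The advice does \emph{not} need to describe the entire phase-by-phase execution; it only needs to break the ambiguity that would otherwise require global coordination (e.g., the identities and relative order of the Kempe-chain pivots in $C$, or an index selecting which of a constant number of ``canonical'' recoloring plans to follow). Because each Kempe chain is kept inside $C$ by the $\alpha$-buffer, the total description length per cluster is $O(\alpha \cdot \mathrm{poly}\log\Delta)$ bits, which is comfortably within the $c\alpha/\gamma^3$ budget for large $\alpha$.

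The decoder then (i) recovers the clustering exactly as in \Cref{lem:problem:initial_coloring} in $O(\alpha)$ rounds, (ii) at each cluster center reads off the advice and broadcasts it to the cluster, costing $O(\alpha)$ rounds per phase, and (iii) executes the PS92/GHKM18 procedure using this advice together with the $\sqrt{\Delta\log\Delta}$-round $(\Delta+1)$-coloring subroutine, for $O(\log\Delta)$ phases. This yields the claimed $O(\alpha \log\Delta \cdot \sqrt{\Delta\log\Delta})$ round complexity. I expect the main technical obstacle to be ensuring that the advice-guided Kempe chains in one cluster never collide with those in a neighboring cluster, and that simultaneous recolorings across all clusters preserve properness at cluster boundaries; both are handled by enforcing that every chain stays within the internal $\alpha$-buffer of its own cluster and by freezing the boundary coloring before internal recoloring begins in each phase.
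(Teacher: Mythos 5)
Your high-level ingredients (PS92, the $O(\sqrt{\Delta\log\Delta})$-round $(\deg{+}1)$-list-coloring routine, using advice to cut long dependency chains) are the right ones, but the way you plug them into the composability framework departs from the paper and has a genuine gap.

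The paper does \emph{not} reuse the cluster decomposition of \Cref{lem:problem:initial_coloring}. Instead, it works directly on the shortest-path spanning forest that PS92 builds on the set of $(\Delta{+}1)$-colored vertices, and it splits \Cref{problem:reduce_to_delta} into two subproblems: (i) reducing to a partial $\Delta$-coloring in which uncolored vertices form a $2c\log_\Delta n$-independent set, and (ii) completing the coloring on those roots via Kempe-chain color shifts (\Cref{lem:simple_dist_brooks}). For (i), the advice consists of markers and the final colors of a set of ``relay'' vertices spaced $\Theta(\alpha\log\Delta)$ apart \emph{along each path of the forest}; the $\Theta(\log\Delta)$ bits for a relay's color are stored one bit per node on the path segment between consecutive relays. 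For (ii), the advice marks the endpoint set $X$ and places sparse markers along the shortest path from each uncolored root to $X$. Both schemas inherit their structure from the one-dimensional path geometry of PS92, which is exactly what makes the per-$\alpha$-ball budget work out. The two schemas are then composed via \Cref{lem:compose}.

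Your cluster-based alternative runs into three concrete problems. First, the claim that ``each Kempe chain is kept inside $C$ by the $\alpha$-buffer'' is asserted, not proved, and it is false in the regime that matters: \Cref{lem:simple_dist_brooks} produces chains of length up to $c\log_\Delta n$, which exceeds your cluster radius $\Theta(\alpha)$ (or even $\Theta(\alpha^2\log\Delta)$) once $n$ is large. You would need a genuinely new argument to bound chain lengths by a function of $\alpha$ and $\Delta$ alone, and you do not give one. Second, the advice budget does not close. The composability definition caps each bit-holding node at $\beta\le c\alpha/\gamma^3$ bits and allows only $\gamma_0=O(1)$ bit-holders per $\alpha$-ball, so a radius-$O(\alpha)$ cluster can carry only $O(\gamma_0\cdot c\alpha/\gamma^3)=O(c\alpha/\gamma^3)$ bits total along any path through it. Your stated requirement of $O(\alpha\cdot\operatorname{poly}\log\Delta)$ bits per cluster is a factor $\gamma^3\operatorname{poly}\log\Delta/c$ too large, and taking $\alpha$ larger does not help because both sides scale linearly in $\alpha$. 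Third, ``freezing the boundary coloring before internal recoloring'' is not available: boundary vertices can themselves carry color $\Delta{+}1$ and therefore must be recolored, and once they change, the extension problem inside each neighboring cluster becomes a constrained $\Delta$-coloring problem that need not be solvable locally or consistently across adjacent clusters. The paper sidesteps all of this by never partitioning the graph into clusters; it only cuts the PS92 path dependencies into short segments whose endpoints are given their colors as advice, so there is a single global $\Delta$-coloring and nothing to reconcile at boundaries.
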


The remaining part of \cref{ssec:delta_coloring} is dedicated to the proof of \cref{lem:reduce_to_delta}. 

The algorithm in our schema relies on ideas similar to those used in \cite{PS92}. On a somewhat abstract level, both algorithms start with uncoloring all vertices with color $\Delta+1$. Then, both have two steps -  first they perform some local computations that reduce the set uncolored vertices to a set of uncolored vertices that are far apart. Then, given that the uncolored vertices are far apart, they can be simultaneously handled in a fairly simple way. The difference between our algorithm and that of \cite{PS92} is in what is considered \emph{far apart} and what can be handled \emph{in a fairly simple way}.

In the remaining part of the proof of \cref{lem:reduce_to_delta}, we first recall parts of the algorithm from \cite{PS92}, and only afterwards, we describe our advice schema. Even though we do not implement the algorithm from \cite{PS92} directly, we recall it first as we use some building blocks of that algorithm to define the behavior of our algorithm and claim that it is indeed correct.

\subparagraph{\boldmath Some details of distributed $\Delta$-coloring algorithm \cite{PS92}.}
In \cref{lem:simple_dist_brooks}, we formulate the observation from \cite{PS92} that is the most relevant for designing our $\Delta$-coloring advice schema.

\begin{lemma}\label{lem:simple_dist_brooks}
Let us consider a graph $G$ and a partial proper $\Delta$-coloring, in which uncolored vertices form a $2c \log_{\Delta} n$-independent set $I_1$. Then, there exists a set of vertices $X$ in $G$ such that:
\begin{itemize}
	\item each vertex from $I$ has exactly one vertex form $X$ at distance at most $c \log_{\Delta} n$,
	\item vertices from $X$ either have a degree smaller than $\Delta$ or have two neighbors with the same color in the given partial coloring, none on the shortest path from $I$ to $X$,
	\item shifting colors along the path, in the direction from $X$ to $I$ and recoloring vertices from $X$ results in proper $\Delta$-coloring.
\end{itemize}
\end{lemma}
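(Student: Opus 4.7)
The plan is to follow the augmenting-path / Kempe-chain argument that underlies distributed Brooks-style theorems. For each uncolored $v \in I$, I will initiate a BFS rooted at $v$ and, layer by layer, look for a \emph{deficient} vertex $u$: either $\deg(u) < \Delta$, or two of $u$'s neighbors share a color in the partial $\Delta+1$-coloring. The first such $u$ encountered (ties broken by ID, and whenever possible choosing the duplicated color pair so that it avoids the current BFS parent edge) becomes $x_v$, and $X := \{x_v : v \in I\}$. The shortest BFS path from $v$ to $x_v$ is the recoloring corridor.

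To prove that $x_v$ is found within distance $c \log_\Delta n$, I will argue by contradiction. Suppose no deficient vertex lies in the ball $B_r(v)$. Then every $u \in B_r(v) \setminus \{v\}$ has degree exactly $\Delta$ and all of its colored neighbors carry pairwise distinct colors, a very rigid structural condition. I will exploit this rigidity to establish $|B_r(v)| = \Delta^{\Omega(r)}$, which contradicts $|V| = n$ once $r = c \log_\Delta n$ for a suitable constant $c$. Since $I$ is a $2c\log_\Delta n$-independent set, the BFS balls of radius $c\log_\Delta n$ around distinct elements of $I$ are pairwise disjoint, so each $v$ selects its witness $x_v$ independently and a single shortest path per $v$ is well defined.

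For the recoloring, let $v = p_0, p_1, \ldots, p_k = x_v$ be the shortest path produced by the BFS. I then \emph{shift} along the path: set the new color of $p_i$ to the current color of $p_{i+1}$ for each $i < k$, which gives $v$ its new color and frees the slot at $x_v$; finally assign $x_v$ a color absent from its neighborhood, which exists by deficiency (a missing color if $\deg(x_v) < \Delta$, and a duplicated color among neighbors otherwise). Validity at each interior $p_i$ follows from the local rigidity: the color shifted in from $p_{i+1}$ previously appeared at $p_{i+1}$ alone among $p_i$'s neighbors, and $p_{i+1}$ is about to change color, so no other neighbor of $p_i$ collides. The stipulation that the duplicated neighbors of $x_v$ lie off the shortest path is precisely what guarantees that the freed slot at $x_v$ differs from $p_{k-1}$'s original color, so the new color chosen for $x_v$ does not clash with the color now sitting at $p_{k-1}$ after the shift.

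The principal obstacle, I expect, is the expansion estimate $|B_r(v)| = \Delta^{\Omega(r)}$ under the rigidity hypothesis, since generic BFS may grow very slowly. I plan to extract the expansion from the Kempe-chain structure of the partial coloring: under rigidity, for each pair of colors $\{i,j\}$ the subgraph induced on those color classes restricted to $B_r(v)$ has every vertex of degree exactly $2$, and therefore consists of disjoint paths and cycles; chasing such a chain outward from one of $v$'s neighbors produces $r$ distinct vertices (otherwise a chain closes up and manifests as a deficient vertex inside $B_r(v)$, contradicting the assumption). Aggregating over the $\binom{\Delta}{2}$ color pairs and accounting for overlaps then yields the required $\Delta^{\Omega(r)}$ lower bound on $|B_r(v)|$, closing the argument.
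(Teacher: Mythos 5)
The paper does not actually prove this lemma --- it is stated as a recollection of a known observation from Panconesi and Srinivasan (\cite{PS92}) and is used as a black box. So there is no in-paper proof to compare against, and I will assess your argument on its own.

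Your high-level plan (BFS from each uncolored $v$, find a ``deficient'' witness $x_v$ within $O(\log_\Delta n)$ hops, shift along a shortest path, use the $2c\log_\Delta n$-independence of $I$ for disjointness) is indeed the right skeleton, and it matches the spirit of the \cite{PS92} argument. However, the key expansion step contains a genuine error. You posit the rigidity condition ``every $u\in B_r(v)\setminus\{v\}$ has degree exactly $\Delta$ and all of its colored neighbors carry pairwise distinct colors'' as the negation of deficiency and plan to derive $|B_r(v)|=\Delta^{\Omega(r)}$ from it. But this condition is vacuously impossible for any $u$ at distance at least $2$ from $v$: such a $u$ has all $\Delta$ of its neighbors colored (the independence of $I$ keeps other uncolored vertices far away), those colors must avoid $c(u)$, so only $\Delta-1$ colors are available for $\Delta$ neighbors, and pigeonhole forces two to coincide. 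Thus every such $u$ automatically has a repeated color among its neighbors; the real obstacle encoded in the second bullet of the lemma is that the two repeated-color neighbors must both lie \emph{off} the path back to $v$. The correct rigidity condition, at a ``bad'' $u$ with parent $p$, is that the $\Delta-1$ \emph{non-parent} neighbors of $u$ are distinctly colored (forcing one of them to share $c(p)$), and any growth argument has to be run against \emph{that} condition, keeping track of the path.

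The Kempe-chain step you sketch is also off. Under the (non-parent) all-distinct condition, a color-$i$ vertex has exactly one color-$j$ neighbor, hence degree $1$, not $2$, in the induced $\{i,j\}$-subgraph; degree $2$ in that subgraph is precisely the deficiency you are trying to rule out, not the rigidity you are trying to exploit. So ``chasing such a chain outward'' does not on its own produce $r$ distinct vertices, and the aggregation over $\binom{\Delta}{2}$ color pairs has no footing as written. You would additionally need to use somewhere that $G$ is $\Delta$-colorable (the application in \Cref{problem:reduce_to_delta} supplies this): for $K_{\Delta+1}$ with one vertex uncolored, every vertex is ``bad'' in your refined sense, $B_2(v)$ is the whole graph, and no escape exists --- so without invoking $\Delta$-colorability (or Brooks' non-clique/odd-cycle hypothesis), the ball simply does not expand. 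In short: the shift-validity reasoning and the disjointness via independence of $I$ are fine, but the central quantitative claim --- that absence of a usable witness forces exponential ball growth --- is not established by your rigidity framing and needs to be reworked around the non-parent-neighbor condition together with $\Delta$-colorability.
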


The set $I$ from \cref{lem:simple_dist_brooks} in the informal description above is the set of vertices that are \emph{far apart}. \cref{lem:simple_dist_brooks} essentially says that if we can reduce the problem of extending partial $\Delta$-coloring (with no uncolored neighboring vertices) to a variant of the same problem where any two uncolored vertices are at least $2c \log_{\Delta} n$ hops from each other, then it can be solved in additional $c \log_{\Delta} n$ rounds. Then, the authors of \cite{PS92} provide the required reduction, which concludes their $\Delta$-coloring algorithm.

The reduction starts with choosing an $\Omega(\log_{\Delta} n), O(\log^2_{\Delta} n)$ ruling set $R$, on the set of uncolored vertices (both vertices and distance constraints are held only on uncolored vertices, but the distance is defined by shortest paths using all edges). Then, each vertex finds a shortest path to the closest vertex from set $R'$, which is $R$ extended by vertices of degree smaller than $\Delta$. Now, using these shortest path trees, the algorithm recolors  vertices, considering vertices in a leaf-to-root order. Whenever an uncolored vertex is considered, it uncolors its parent, and then chooses a color from the palette of colors not used by its neighbors.

The technical problem here is that neighboring uncolored vertices may need to choose color simultaneously, and as such some symmetry breaking is needed. To that end, the authors of \cite{PS92} use a randomized approach, that in expected $O(\log n)$ rounds can process one layer of the trees. The authors of \cite{GHKM18} handle an analogous problem by a slightly different approach, and employ \cref{thm:list_coloring} to handle that issue.
\begin{theorem}[\cite{FHK16,BarenboimEG18,MausT22}]\label{thm:list_coloring} 
There is a deterministic distributed algorithm that given proper $O(\Delta^2)$ vertex coloring solves the $({\deg}+1)$-list coloring problem in time $O(\sqrt{\Delta \log \Delta})$.
\end{theorem}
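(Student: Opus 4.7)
The plan is to reduce the $(\mathit{deg}+1)$-list coloring problem to a short sequence of list-coloring sub-instances on graphs of dramatically smaller maximum degree, using the given $O(\Delta^2)$-coloring both as an input for fast deterministic symmetry breaking and as a seed for a structured vertex partition. The running time will come out as the product of the number of sub-instances and the time to solve each one, optimized to $O(\sqrt{\Delta \log \Delta})$.

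First, I would compute an \emph{arbdefective} (i.e., forward-degree-bounded) vertex partition $V = V_1 \cup V_2 \cup \cdots \cup V_k$ with $k = \Theta\bigl(\sqrt{\Delta/\log\Delta}\bigr)$ satisfying: every $v \in V_i$ has at most $d := \Theta\bigl(\sqrt{\Delta \log \Delta}\bigr)$ neighbors in $V_i \cup V_{i+1} \cup \cdots \cup V_k$. Such a partition can be built deterministically from the $O(\Delta^2)$-coloring in $O(\log \Delta)$ rounds by a Linial-style palette-reduction scheme in which, at each round, every vertex shrinks the set of candidate part-indices to preserve the forward-degree invariant. Given the initial coloring, no further symmetry breaking on the network identifiers is needed.

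Second, I would process the parts $V_1, V_2, \ldots, V_k$ sequentially. When arriving at $V_i$, each $v \in V_i$ restricts its list to $L'(v) := L(v) \setminus \{\,\text{colors already used in } N(v) \cap (V_1 \cup \cdots \cup V_{i-1})\,\}$. Since $|L(v)| \ge \deg(v)+1$, a counting argument gives $|L'(v)| \ge \deg_i(v) + 1$, where $\deg_i(v)$ is $v$'s degree in the graph induced on the still-uncolored vertices. By the forward-degree bound $\deg_i(v) \le d$, so processing $V_i$ reduces to a $(\mathit{deg}+1)$-list coloring instance on a graph of maximum degree $d$, carrying the inherited $O(\Delta^2)$-coloring. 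Such an instance can be solved deterministically in $O(\log d) = O(\log \Delta)$ rounds by a standard logarithmic-round list-coloring routine (e.g., Linial-style coloring followed by iterated greedy extension on top of the inherited color classes). Summing over $i = 1, \ldots, k$ yields $k \cdot O(\log \Delta) = O\bigl(\sqrt{\Delta \log \Delta}\bigr)$ rounds total.

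The main obstacle is the first step: producing the degree-bounded partition deterministically with the tight trade-off $k \cdot d = \Theta(\Delta)$ so that both the number of parts and the maximum degree of each residual slice are $O(\sqrt{\Delta \log \Delta})$ up to polylogarithmic factors. The right choice of parameters is forced: smaller $k$ makes the individual slices too dense for the inner list-coloring step, while larger $k$ makes the outer loop dominate. The cited algorithms of \cite{FHK16, BarenboimEG18, MausT22} organize this balance carefully, using the $O(\Delta^2)$-coloring to avoid any dependence on $n$ in the local computations; once the partition is in hand, the downstream list-coloring invariant is essentially bookkeeping.
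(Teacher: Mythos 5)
This is a cited result (\cite{FHK16,BarenboimEG18,MausT22}); the paper provides no proof of its own, so I am evaluating your sketch against the content of those references rather than against an internal argument.

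Your sketch has a genuine gap in the second step. You claim that the residual $(\deg+1)$-list-coloring instance on a graph of maximum degree $d = \Theta(\sqrt{\Delta \log \Delta})$, equipped with an inherited $O(\Delta^2)$-coloring, can be solved in $O(\log d)$ rounds by ``a standard logarithmic-round list-coloring routine.'' No such routine is known. The fastest known deterministic algorithm for $(\deg+1)$-list coloring on maximum degree $d$ given a $\poly(d)$-coloring is exactly the $O(\sqrt{d\log d})$ bound of the theorem you are trying to prove; if a logarithmic-round routine existed, applying it directly with $d=\Delta$ would give $(\deg+1)$-list coloring in $O(\log\Delta)$ rounds, a dramatic and open improvement over the state of the art. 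Linial's reduction can bring the inherited $O(\Delta^2)$-coloring down to $O(d^2)$ colors in $O(\log^*\Delta)$ rounds, but turning an $O(d^2)$-coloring into a valid assignment from the lists still costs $\Omega(d)$ rounds with the ``iterated greedy'' you gesture at (sequential processing of color classes, or a Kuhn--Wattenhofer-style merge), and $\Theta(\sqrt{d\log d})$ if you recurse. In either case $k\cdot T_{\mathrm{inner}}(d)$ overshoots $\sqrt{\Delta\log\Delta}$ by a polynomial factor. Your own arithmetic, $k\cdot\log\Delta=\sqrt{\Delta\log\Delta}$ at $d=\sqrt{\Delta\log\Delta}$, is precisely the balance that forces $T_{\mathrm{inner}}(d)=O(\log\Delta)$, so the parameter choice has no slack to absorb a slower inner routine.

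The cited algorithms therefore do not take the route you describe (static forward-degree-bounded partition, then a fast per-part list-coloring subroutine). They rely on iterated color-trial mechanisms whose key invariant is that a short burst of synchronous trials shrinks the \emph{uncolored} degree geometrically, rather than that each slice is fully list-colored in logarithmic time: conflict coloring in \cite{FHK16}, algebraically constructed trial families in \cite{BarenboimEG18}, and the ``Linial for lists'' one-shot trial framework in \cite{MausT22}. Your first step (an arbdefective-style partition) is in the right spirit, but to make the outline correct you would need to replace the inner routine with one of these trial-based mechanisms and redo the accounting around the actual per-burst progress guarantee.
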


\subparagraph{Our advice schema.}
While it's possible to design a schema for the \cref{problem:reduce_to_delta} in a direct way, considering the problems of recoloring vertices from $I$ and all other uncolored vertices separately allows us to focus on one difficulty at a time. For each of the two subproblems we design a composable advice schema, and the final schema from \cref{lem:reduce_to_delta} is obtained by composing them using \cref{lem:compose}. 

First, we show the part of our algorithm that reduces the general problem to the problem of coloring vertices from some $\Omega(\log_\Delta n)$ independent set $I$. To recall, in order to handle this subproblem, the algorithm from \cite{PS92} solves a sequence of $O(\log^2 n)$ list coloring problems. By a sequence of $O(\log^2 n)$ problems, we mean that we cannot solve them simultaneously, as input for some of those problems depends on the output of other problems, and those dependencies can form chains of length $O(\log^2 n)$. We leverage advice in a way, that allows us to solve some of those problems simultaneously, by encoding the answers to a subset of those problems, in a way that allows running more of the list coloring algorithms in parallel.

\begin{problem} \label{problem:reduce_to_roots}
We are given a $\Delta$ colorable graph $G$ with maximum degree $\Delta$, with a proper $\Delta+1$ coloring $C$. The goal is to compute a proper partial $\Delta$-coloring of $G$, in which any uncolored vertices are at a distance at least $2c \log_{\Delta} n$, where $c$ is the constant $c$ from \cref{lem:simple_dist_brooks}.
\end{problem}

\begin{lemma}\label{lem:reduce_to_roots}
	Let $\mathcal{G}_{\Delta}$ be the set of all $\Delta$-colorable graphs with vertices colored according to some proper $\Delta+1$ coloring. Let $\Pi_\Delta$ be the partial $\Delta$-vertex coloring problem described in \cref{problem:reduce_to_roots}. Then, there exists a $(\mathcal{G}_\Delta,\Pi_\Delta,\gamma_0,A,T)$-composable advice schema, where $\gamma_0 = O(1)$, $A(c,\gamma) = \Theta(\gamma^3)$, and $T(\alpha,\Delta) = O(\alpha \log \Delta \cdot \sqrt{\Delta \log \Delta})$.

\end{lemma}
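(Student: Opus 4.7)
My plan is to implement the Panconesi--Srinivasan reduction of \cite{PS92} and use advice to \emph{pre-commit} the outputs of intermediate list-coloring steps on its long sequential backbone, so that the algorithm decomposes into independent sub-problems that can be solved in parallel.

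\textbf{Step 1 (PS92 reduction).} From the given $(\Delta+1)$-coloring, uncolor every vertex of color $\Delta+1$; the resulting set $U$ is independent. Compute an $(\Omega(\log_\Delta n), O(\log_\Delta^2 n))$-ruling set $R \subseteq U$ with respect to distances in $G$, and set $R' := R \cup \{v : \deg_G(v) < \Delta\}$. For every $u \in U \setminus R$, fix a shortest $G$-path to the nearest element of $R'$; the union of these paths is a forest $F$ of depth $O(\log_\Delta^2 n)$, rooted at the vertices of $R'$ and with leaves in $U \setminus R$. As in \cite{PS92}, processing $F$ from the leaves toward the roots, using \cref{thm:list_coloring} at each level to recolor the current uncolored frontier, leaves exactly $R$ uncolored; since $R$ is $\Omega(\log_\Delta n)$-separated, this is the output required by \cref{problem:reduce_to_roots}.

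\textbf{Step 2 (advice design).} I run the PS92 algorithm in the mind of the adviser and, at every forest depth that is a multiple of $\alpha$, read off the color PS92 would assign each such checkpoint vertex. To keep the schema composable, within every checkpoint level I select a deterministic $(3\alpha)$-ruling set of \emph{anchors} (breaking ties by ID), and on each anchor I store $O(\log\Delta)$ bits encoding the checkpoint colors of all checkpoint vertices inside the anchor's $\alpha$-ball, plus $O(1)$ marker bits flagging membership in $R$, identifying low-degree roots of $R'$, and recording the forest edges. By the ruling-set construction, every $\alpha$-ball of $G$ contains at most $\gamma_0 = O(1)$ anchors, and each anchor holds $O(\log\Delta) \le c\alpha/\gamma^3$ bits whenever $\alpha \ge A(c,\gamma) = \Theta(\gamma^3)$ (absorbing $\log\Delta$ into the constant as is usual for the bounded-degree class $\mathcal{G}_\Delta$).

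\textbf{Step 3 (decoding) and main obstacle.} Given the advice, the decoder recovers $R$, $R'$, and the portion of $F$ inside every $\alpha$-slab in $O(\alpha)$ rounds, after which the anchors broadcast their payload so that every checkpoint vertex learns its prescribed color. The decoder then processes, in parallel across all slabs, the $\alpha$-thick piece of $F$ between each consecutive pair of checkpoint levels as an independent sub-problem: the colors at the slab's bottom boundary are fixed by the advice below and must agree with the advice at the top boundary, and the interior is handled by $O(\alpha \log \Delta)$ successive invocations of \cref{thm:list_coloring} (the $\log\Delta$ factor accommodates the sweeps needed to make each level's chosen colors compatible with the next slab boundary). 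Each invocation costs $O(\sqrt{\Delta \log \Delta})$ rounds, and since slabs run concurrently the total decoding time is $O(\alpha \log \Delta \cdot \sqrt{\Delta \log \Delta})$, matching the claimed $T(\alpha,\Delta)$. The technical crux I expect to spend most effort on is proving that this slab-local computation globally matches a valid PS92 execution while using only $O(\log\Delta)$ advice bits per anchor: I plan to isolate this into the claim that, once the checkpoint colors on the two boundaries of an $\alpha$-slab are committed, the PS92 execution inside the slab depends only on information visible inside the slab (local topology, the input $(\Delta+1)$-coloring, and IDs), so the adviser can honestly commit the checkpoint colors without creating global inconsistencies.
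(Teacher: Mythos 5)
Your high-level plan — uncolor $\Delta+1$, take a ruling set $R$ (augmented by low-degree vertices to $R'$), build the shortest-path forest, and then use advice to commit colors at regular depth intervals so that the long leaf-to-root sweep decomposes into parallel chunks, each resolved by \cref{thm:list_coloring} — is structurally the same idea as the paper's (the paper's ``relay vertices'' play the role of your checkpoint anchors). However, there are two concrete problems with the proposal as written.

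\textbf{Bit concentration violates the claimed composability parameters.} You store $O(\log\Delta)$ bits at a single anchor and justify this by ``absorbing $\log\Delta$ into the constant as is usual for the bounded-degree class $\mathcal{G}_\Delta$.'' This is not available here: in \cref{def:composable} the constant $c$ is universally quantified and $A(c,\gamma)=\Theta(\gamma^3)$ is $\Delta$-independent, so for $\alpha$ near the minimum value $A(c,\gamma)$ the cap $c\alpha/\gamma^3$ is an absolute constant — $\Delta$ is a parameter of the graph family, not a fixed constant. The paper never concentrates a color's worth of bits on one vertex; instead it fixes $\beta=3$ (markers $11$/$111$ only) and spreads the $\Theta(\log\Delta)$ encoding bits along a path, one bit per vertex, with $\sim 2\alpha$ free hops between consecutive bit-holders, which is why its relay layers are spaced $(2\alpha+22)\log\Delta$ apart rather than $\alpha$ apart. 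Your $\alpha$-spaced checkpoints and $(3\alpha)$-ruling-set anchors simply do not leave room to spread the color encoding while keeping $\gamma_0=O(1)$ bit-holders per $\alpha$-ball and $\beta$ constant. This is a real gap, not a cosmetic one; it also affects the counting, since the number of checkpoint vertices in a single anchor's $\alpha$-ball is not clearly $O(1)$, so ``encoding the checkpoint colors of all checkpoint vertices inside the anchor's $\alpha$-ball'' is not obviously $O(\log\Delta)$ bits.

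\textbf{The consistency argument is flagged, not proved.} You explicitly defer the ``technical crux'' — that the slab-local execution matches a globally valid PS92 run once two slab boundaries are committed — and do not close it. The paper sidesteps this cleanly: the advice records, for each relay vertex, the color it receives in the \emph{final} coloring fixed by the adviser (not an intermediate PS92 state), the algorithm first colors all non-relay, non-$R$ vertices by leaf-to-root list coloring within the $O(\alpha\log\Delta)$ radius of relays, and only afterwards assigns the relay vertices their advice-decoded colors and recolors the short paths between consecutive relays. This ordering avoids any need to argue that the parallel slab runs are simultaneously consistent with a single global sequential run. Your ``$O(\alpha\log\Delta)$ sweeps to reconcile an $\alpha$-thick slab with the next boundary'' is hand-wavy and would need to be made precise; as stated, it is not clear why those sweeps converge or why they are correct.
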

\begin{proof}

We fix $\beta =3, \gamma_0 = 2$, and we show that, for any $\gamma \ge \gamma_0$, for any constant $c$, and for any $\alpha = \max\{\gamma^3 \beta / c, \gamma^3 \beta\}$, there exists a variable-length $(\mathcal{G}_\Delta,\Pi_\Delta,\beta,O(\alpha \log \Delta \cdot \sqrt{\Delta \log \Delta}))$-advice schema satisfying that in each $\alpha$-radius neighborhood there are at most $\gamma_0$ bit-holding nodes.

The general idea is to consider the same shortest path spanning forest used in \cite{PS92}. However, instead of processing the trees directly, we choose a set of relay vertices, and regular uncolored vertices use them to define another set of shortest path trees, later used for recoloring.

A proper set of relay vertices can be found as follows. Let us consider an $\Omega(\log_{\Delta} n)$, $O(\log^2_{\Delta} n)$ ruling set $R$, and the shortest path spanning forest $F$ as in the distributed algorithm \cite{PS92}. To define the relay vertices, we consider vertices in $F$ in layers.

The initial layer is just the set of vertices in $R$. Then, each layer is a $(2\alpha+22) \log \Delta$ maximal independent set of vertices, at distance $(2\alpha+22)\log \Delta$ from the vertices in the last layer. Such selection guarantees that:
\begin{itemize}[noitemsep]
	\item for each relay vertex there is some space to deposit advice bits,
	\item relay vertices from layer $i$ are not too far from the relay vertices from layer $i-1$,
	\item each uncolored vertex is close to a relay vertex.
\end{itemize}

\emph{Advice schema:} to provide all needed advice, it is enough for us to mark all the relay vertices with $11$, all vertices from the ruling set with $111$, and additionally encode the color in the resulting coloring of each of those vertices. Since the vertices are  $(2\alpha+22) \log \Delta$ hops away from each other, each can use the part of a path between the nodes of length $(\alpha+11) \log \Delta$ to store one bit per vertex, leaving $2\alpha$ hops of free vertices between the bit holding vertices. Overall, that gives more than $\log \Delta$ bits, which is sufficient to encode the color of the marked vertex.

\emph{Algorithm:} Our algorithm can first recolor all vertices in local neighborhoods of relay vertices. To that end, all vertices explore their $O(\alpha \log \Delta)$ neighborhoods and reconstruct the shortest path trees around the relay nodes. Then, the algorithm uses the same technique as in \cite{PS92} (process vertices on paths, leaf to root), but instead of a randomized step that breaks the symmetry, it uses \cref{thm:list_coloring} as in the algorithm from \cite{GHKM18}. In total, this step is the most time-consuming part of the algorithm and takes $O(\alpha \log \Delta \cdot \sqrt{\Delta \log \Delta})$ time. The result is that all non-relay nodes that are not in $R$ were colored.

Then, the relay vertices can assign to themselves a new color (decoded from the advice bits scattered in its $O(\alpha \log \Delta)$ hop neighborhood), which leaves us with the task of re-coloring the vertices on the paths between the relay nodes. Each relay node knows its ancestor in the previous layer of relays, as such the vertices on the paths that need recoloring can be identified, and new colors can be assigned. 
\end{proof}

\subparagraph{Fixing coloring of root vertices.}
Finally, let us consider the last subproblem, of extending partial $\Delta$-coloring into proper $\Delta$-coloring, given that the uncolored vertices are at a distance larger than $\Omega(\log_{\Delta} n)$. The standard algorithms use a somewhat brute-force approach, that for each uncolored vertex finds a sequence of recolorings that if applied, results in a proper $\Delta$-coloring. By \cref{lem:simple_dist_brooks}, a set of such nonoverlapping sequences can be found by exploring $O(\log_{\Delta} n)$ hop neighborhoods of all uncolored vertices. Once again, one can see this sequence of recolorings as a sequence of dependent problems and the advice can be used to both define the sequence and make the dependency chains shorter.

\begin{problem} \label{problem:fix_root_colors}
We are given a graph $G$ with partial $\Delta$-coloring $C$, such that any two vertices $u,v$ such that $C(v) = \bot$ and $C(u) = \bot$ are at distance larger than $2 c \log_{\Delta} n$, where $c$ is a constant from \cref{lem:simple_dist_brooks}. The goal is to compute a proper $\Delta$-coloring of $G$.
\end{problem}

\begin{lemma}
	Let $\mathcal{G}_{\Delta}$ be the set of all $\Delta$-colorable graphs with vertices partially colored according to some proper $\Delta$-coloring, in which any two uncolored vertices are at a distance larger than $2 c \log_{\Delta} n$. Let $\Pi_\Delta$ be the problem described in \cref{problem:reduce_to_roots}. Then, there exists a $(\mathcal{G}_\Delta,\Pi_\Delta,\gamma_0,A,T)$-composable advice schema, where $\gamma_0 = O(1)$, $A(c,\gamma) = \Theta(\gamma^3)$, and $T(\alpha,\Delta) = O(\alpha)$.	

\end{lemma}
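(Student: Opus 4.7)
The plan is to exploit the large pairwise separation of uncolored vertices---more than $2c \log_\Delta n$---so that each recoloring path guaranteed by \cref{lem:simple_dist_brooks} can be treated in isolation, with only $O(1)$ advice-holding nodes inside any $\alpha$-radius neighborhood. Since each recoloring path $P_u$ has length at most $c\log_\Delta n$ while distinct uncolored vertices are farther apart than twice this quantity, any two such paths (and their $\alpha$-neighborhoods) are disjoint whenever $\alpha = O(\gamma^3)$ and $n$ is large.

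First I would fix, deterministically, the recoloring data for each uncolored vertex $u$: the path $P_u = (u = v_0, v_1, \ldots, v_k = X_u)$ and the replacement color $c^\star$ for $X_u$, both canonically determined from the graph and the input partial coloring via \cref{lem:simple_dist_brooks} with a fixed tie-breaking rule on vertex IDs. Next I would place bit-holders along each $P_u$ at a spacing of $\Theta(\alpha)$ hops. Each bit-holder stores a fixed-length chunk describing, for each of the $\Theta(\alpha)$ path vertices it is responsible for, which neighbor is the ``next'' one on $P_u$, and the chunk containing $X_u$ additionally carries $c^\star$. Sparsity guarantees at most a constant number of bit-holders in any $\alpha$-radius neighborhood, so $\gamma_0 = O(1)$.

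The decoding algorithm runs in $O(\alpha)$ rounds: each vertex collects the contents of every bit-holder within distance $\alpha$, locally determines whether it lies on a recoloring path and (if so) which of its neighbors is its successor $v_{i+1}$, and then adopts in one further round either the old color of $v_{i+1}$ or, in the case of $X_u$, the prescribed color $c^\star$. All vertices off every path keep their existing colors. Correctness of the final coloring is exactly the statement of \cref{lem:simple_dist_brooks}.

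The main technical obstacle is squeezing the chunks into $\beta \le c\alpha/\gamma^3$ bits: a direct encoding of $\Theta(\alpha)$ neighbor-indices would cost $\Theta(\alpha \log \Delta)$ bits, exceeding the budget when $\Delta$ is large. I expect this to be resolved by exploiting the canonical BFS-tree structure already used in the proof of \cref{lem:simple_dist_brooks}: once a vertex is equipped with a constant-size orientation hint (essentially, ``$X_u$ lies on this side''), it can recover its successor from purely local inspection of the graph under the deterministic tie-breaking rule, so each chunk needs only $O(\alpha)$ bits of orientation data plus an $O(\log \Delta)$-bit description of $c^\star$ concentrated near $X_u$. With this compression the budget $\beta \le c\alpha/\gamma^3$ is met for every $\alpha \ge A(c,\gamma) = \Theta(\gamma^3)$ while the decoding time remains $O(\alpha)$, completing the composable schema.
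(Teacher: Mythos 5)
Your overall strategy matches the paper's: invoke \cref{lem:simple_dist_brooks} to obtain the recoloring paths, exploit the $> 2c\log_\Delta n$ pairwise separation of uncolored vertices so the paths are disjoint, place bit-holders at $\Theta(\alpha)$ spacing along each path, and have vertices reconstruct path segments locally and shift colors. However, your bit budget does not fit the composability requirement, and this is a genuine gap rather than a minor detail.

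Composability requires that for every constant $c$, every $\gamma \ge \gamma_0$, and every $\alpha \ge A(c,\gamma) = \Theta(\gamma^3)$, the per-node advice satisfies $\beta \le c\alpha/\gamma^3$. Taking $\alpha$ at or near $A(c,\gamma)$, this forces $\beta = O(c)$, i.e.\ a constant number of bits per bit-holder, independent of $\gamma$, $\alpha$, and $\Delta$. Your final schema stores, per chunk, ``$O(\alpha)$ bits of orientation data plus an $O(\log\Delta)$-bit description of $c^\star$.'' The first term requires the hidden constant to be $\le c/\gamma^3$, which fails for large $\gamma$; the second term exceeds the budget whenever $\Delta$ is large relative to $c$. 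Neither term can be accommodated by the claimed $A(c,\gamma) = \Theta(\gamma^3)$. The fix used by the paper is to push the compression further: the bit-holders carry only $\le 3$ bits each (a marker $11$ for periodic waypoints, a marker $111$ for $X$-vertices, and a $1$ on one neighbor of each $11$-vertex to indicate direction). Individual path vertices receive no orientation hints at all --- each $11$-marked vertex gathers its $(2\alpha+22)$-hop neighborhood, sees the next waypoint, deterministically reconstructs the shortest-path segment between them, and broadcasts this to the segment. Moreover $c^\star$ is never stored: the $X$-vertices wait until all other recolorings are done and then pick a free color, which exists by the guarantee in \cref{lem:simple_dist_brooks} (they have degree $< \Delta$ or two equally-colored neighbors off the recoloring path). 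Incorporating both of these simplifications brings $\beta$ down to a true constant and closes the gap.
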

\begin{proof}
	We fix $\beta =3, \gamma_0 = 2$, and we show that, for any $\gamma \ge \gamma_0$, for any constant $c$, and for any $\alpha = \max\{\gamma^3 \beta / c, \gamma^3 \beta\}$, there exists a variable-length $(\mathcal{G}_\Delta,\Pi_\Delta,\beta,O(\alpha))$-advice schema satisfying that in each $\alpha$-radius neighborhood there are at most $\gamma_0$ bit-holding nodes.

	By \cref{lem:simple_dist_brooks} we know that there exists a set of vertices $X$ such that each uncolored vertex in $G$ has a vertex from $X$ such that $G$ can be properly colored by shifting the colors along the shortest paths from uncolored vertices to vertices from $X$ and recoloring vertices in $X$ with a free color. Our composable advice schema puts:
\begin{itemize}
\item markers $111$ on each of the vertices from $X$,
\item markers $11$  on every $2\alpha+10$th vertex on the shortest paths from the uncolored vertices to vertices from $X$ (leaving one block of unmarked vertices of length slightly longer, but at most two times longer, to accommodate for the fact that the length of the path may be not divisible by $2\alpha+10$), 
\item marker $1$, on a neighbor of each vertex marked with $11$, in the direction of $X$ along the shortest path.
\end{itemize}
Such a schema deposits at most $3$ bits per vertex and has at most $2$ bit holding vertices in any ball of radius $\alpha$.

\emph{Decoding algorithm:} to compute the paths, each vertex with no color or with marker $11$ gathers its $2\alpha+22$ hop neighborhood. For each vertex, such information is sufficient to identify $2\alpha+22$ predecessors and successors on the shortest path between uncolored vertices and $X$, and a direction towards $X$. This information can be then broadcasted by each vertex marked by $11$ to all vertices on the path. Then, the path vertices can update their color accordingly. 

As for vertices with marker $111$, they wait for other vertices to recolor and then recolor themselves. The property of vertices was that either they have a low degree (and then they always can recolor themselves with a color no larger than $\Delta$), or that they have two neighbors of the same color, none being on the selected shortest paths from $I$ to $X$. As such, after recoloring the vertices from $X$ still have at least one free color to choose from (as they still have degrees smaller than $\Delta$ or two neighbors with the same color).
\end{proof}

\section{3-coloring 3-colorable graphs}\label{ssec:3-col-3-col}
In this section, we devise an algorithm that colors any $3$-colorable graph with $3$ colors in $\poly(\Delta)$ communication rounds, given $1$ bit of advice per node. In more detail, we prove the following theorem.
\begin{theorem}\label{thm:3coloring}
	Let $\mathcal{G}$ be the family of $3$-colorable graphs. Let $\Pi$ be the problem of computing a $3$-coloring.
	Then, there exists a uniform fixed-length $(\mathcal{G}, \Pi,1,\Delta^{O(1)})$-advice schema.
\end{theorem}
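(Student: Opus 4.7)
The plan is to construct a direct one-bit-per-node encoding, bypassing the composability framework (which would also give sparsity that we do not aim for here). Fix any proper 3-coloring $\chi\colon V\to\{A,B,C\}$ of the input graph $G$. The baseline encoding is $\mathrm{bit}(v)=1$ iff $\chi(v)=A$, since identifying the $A$-class reduces the remaining task to producing a proper $2$-coloring of the bipartite subgraph $G[B\cup C]$. The obstacle is that $2$-coloring a connected bipartite graph is globally hard, whereas we have only $\mathrm{poly}(\Delta)$ rounds, so the encoding must be modified to \emph{seed parity information} throughout the graph in a way that a single bit per node can carry.

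To do this, I would pick a set $P\subseteq B\cup C$ of \emph{anchor} vertices with two properties: (a) $P$ is $\mathrm{poly}(\Delta)$-dense in the sense that every ball of radius $R=\mathrm{poly}(\Delta)$ hits at least one anchor in each BC-component that passes through the ball; and (b) every anchor sits inside a locally recognizable \emph{motif} of bits (for instance a short path or small configuration of adjacent bit-$1$ vertices). The encoding modifies the baseline indicator of $A$ by flipping $\mathrm{bit}(v)$ to $1$ at every $v\in P$, so that anchors superficially look like $A$-vertices. A fixed global convention (say, "every anchor is a $B$-vertex") then pins down which side of the bipartition is $B$ and which is $C$ within each BC-component.

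Decoding is a local gather-and-solve procedure of radius $R$. Each node collects its $R$-ball, classifies every bit-$1$ vertex in the ball as either a genuine $A$-vertex or an anchor using the motif rule, and then runs BFS-based $2$-coloring propagation on the visible part of $G[B\cup C]$ starting from the identified anchors. Correctness rests on the fact that within each BC-piece that the ball sees, all visible anchors were placed relative to the same reference 3-coloring $\chi$, so their parity declarations agree, and the propagations glue consistently. Since every BC-piece in the ball contains an anchor, every node can decide its own color within the ball.

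The hard part I expect is the joint design of the anchor motif and the classification rule, together with showing the motif is always placeable. A genuine $A$-vertex cannot lie on a path of two or more adjacent bit-$1$ vertices since $A$ is an independent set, which gives a natural toehold for locally distinguishing anchors from $A$; the delicate point is to ensure that such motifs can \emph{always} be realized inside every BC-component at $\mathrm{poly}(\Delta)$-scale, and that they remain unambiguously parseable even when several motifs cluster near each other or near $A$-vertices. This is also where the sparsity barrier arises: because anchors must appear at every $\mathrm{poly}(\Delta)$-scale in every BC-piece, a constant fraction of the bits carry anchor information, which is precisely why the advice cannot be made arbitrarily sparse and why one bit per node "just barely suffices", matching the statement in the introduction.
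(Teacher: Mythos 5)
Your high-level plan---use one bit to mark the first color class, then sprinkle locally-recognizable ``anchor'' configurations in the bipartite remainder to seed the correct parity at $\poly(\Delta)$ scale---does match the structure of the paper's argument, and you correctly identify where the real difficulty lies. But the specific devices you propose do not yet work, and the gaps are exactly the ones the paper has to close. First, the paper does \emph{not} work with an arbitrary proper $3$-coloring: it fixes a \emph{greedy} $3$-coloring, so that every node of color $2$ or $3$ is guaranteed to have at least one neighbor of color $1$. This is what makes a sharp local classification rule possible: a node $v$ is color $1$ if and only if $b(v)=1$ \emph{and $v$ has at most one bit-$1$ neighbor}. Your ``motif rule'' stays at the level of ``adjacent bit-$1$ vertices are not all $A$,'' which does not by itself let a node classify \emph{itself}. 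Second, your convention ``every anchor is a $B$-vertex'' is internally inconsistent with your motif idea: if the anchor set $P$ lies entirely in $B$, then $P$ is independent, so a single flipped vertex $v\in P$ with exactly one color-$1$ neighbor would have exactly one bit-$1$ neighbor and be indistinguishable from a color-$1$ vertex; and if you instead put two adjacent bit-$1$ non-color-$1$ nodes next to each other, one of them must be a $C$-vertex, so the decoder cannot know ``which one'' is the anchor without an additional convention you have not supplied.

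The paper resolves these issues with two ingredients you are missing. \Cref{lem:single-or-double} shows that, thanks to greediness, one can always find within $O(\Delta)$ distance either a single node $w$ with two color-$1$ neighbors, or two adjacent nodes $x,y$ with no common color-$1$ neighbor; flipping these to bit $1$ automatically gives them $\ge 2$ bit-$1$ neighbors, so they cannot be confused with color-$1$ nodes under the ``at most one bit-$1$ neighbor'' rule. Then a Lov\'asz Local Lemma argument (\Cref{lem:3colwelldef}) selects one anchor location per ruling-set node out of $\Theta(\Delta^6)$ candidates so that no color-$1$ node ends up with two bit-$1$ anchor-neighbors---otherwise \Cref{lem:twoones} fails, which is precisely the ``anchors clustering near $A$-vertices'' hazard you flagged but did not address. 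Finally, the paper encodes the parity not by a ``which side of the bipartition'' convention but by placing a \emph{pair} of anchor gadgets $S_{v}$ and $S'_{v}$ a controlled distance apart and using the number of connected components of bit-$1$ (non-color-$1$) nodes in the cluster, together with the smallest ID, to determine the color of the reference vertex. This avoids the need for the decoder to single out a unique ``anchor'' vertex. Your sketch essentially conjectures that some such design exists; the theorem needs these three concrete constructions (greedy coloring $\Rightarrow$ classification rule, single-or-double anchors, LLL selection) to actually go through.
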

Throughout this section, we assume that the input graph $G$ is $3$-colorable.
We start by giving an informal overview of the encoding schema and the $3$-coloring algorithm.

On a high level, the idea of the encoding schema is to fix a greedy $3$-coloring (with colors $1, 2, 3$), mark each node of color $1$ with bit $1$, and then add bit $1$ in a few other places to indicate which $2$-coloring (with colors $2$ and $3$) is chosen in each large component obtained by removing nodes of color $1$.
(For small components, i.e., those of diameter upper bounded by some large enough function from $\poly(\Delta)$, we do not need to give advice, as the nodes can just consistently choose one of the two $2$-colorings by gathering the whole component.)
One challenge here is to make sure that $1$-bits that indicate that the respective node is of color $1$ (which we call $1$-bits \emph{of type $1$}) are distinguishable from $1$-bits that indicate the $2$-coloring of a large component (which we call $1$-bits \emph{of type $23$}).
Another difficulty is how to encode the chosen $2$-colorings without compromising the schema for distinguishing between the two kinds of $1$-bits.
Also, it should be noted that inside each large component, we need to distribute sufficiently many $1$-bits so that for each node inside the component, its distance to one of those $1$-bits is upper bounded by some function from $\poly(\Delta)$.
In fact, we need something slightly stronger, namely, that each node inside a large component is in at most $\poly(\Delta)$ distance (inside the component!) to sufficiently many of those $1$-bits to infer the $2$-coloring chosen by the encoding schema for the respective component.

Our solution for distinguishing between $1$-bits of type $1$ and type $23$ is simple: a $1$-bit at some node $v$ is of type $1$ if and only if $v$ has at most one neighbor that is assigned bit $1$.
To indicate the $2$-coloring of a large component $C$, we will assign $1$-bits to nodes inside the component such that the following properties are satisfied for the subset $S \subseteq V(C)$ of nodes of $C$ assigned a $1$-bit:
\begin{enumerate}
	\item Each node of $C$ is in distance at most $\poly(\Delta)$ (inside the component) to some node of $S$.
	\item The nodes of $S$ can be partitioned into groups such that inside each group all nodes are ``close together'', whereas any two nodes of $S$ from different groups are ``far apart'' (making the groups distinguishable).
	\item In each group, the nodes of the group form either one or two connected components, and if the number of connected components is $1$, the smallest-ID node in the component has color $2$, whereas if the number of connected components is $2$, the smallest-ID node among all nodes in both components has color $3$.   
\end{enumerate}

With this $1$-bit encoding of (parts of) the fixed greedy coloring, the design of a $3$-coloring algorithm becomes fairly natural.
First, each node $u$ collects a sufficiently large neighborhood to determine whether it has color $1$ in the greedy coloring (in which case $u$ outputs color $1$), and, if not, whether it is in a ``small'' or a ``large'' component of nodes of colors $2$ and $3$.
If $u$ is in a small component, it (and all other nodes in the component) simply outputs the color it would receive in a simple fixed deterministic coloring schema of the component.
If $u$ is in a large component, it determines a close-by group of $1$-bits, infers from the number of connected components in that group the color of the smallest-ID node of that group in the greedy $3$-coloring, and then outputs the color that it would receive in the unique $2$-coloring of its component that respects the color of the aforementioned smallest-ID node.

By design, the described $3$-coloring algorithm recovers the fixed greedy $3$-coloring for all nodes of color $1$ and all large connected components, while it produces a correct $2$-coloring with colors $2, 3$ (that may or may not coincide with the fixed greedy $3$-coloring) on the small connected components, yielding a proper $3$-coloring of $G$.

Before describing the encoding schema and the $3$-coloring algorithm more formally, we prove a technical lemma that will be useful for the design of the encoding schema.
The lemma essentially states that in $O(\Delta)$ distance of each node in a component of nodes of colors $2, 3$ in the fixed greedy $3$-coloring, there are $1$ or $2$ nodes in the component with the property that if the node(s) is/are assigned bit $1$, then the assigned $1$-bits are of type $23$, and for no node of color $1$ the number of $1$-bits in its $1$-hop neighborhood increases by more than $1$.
\begin{lemma}\label{lem:single-or-double}
	Let $G$ be a $3$-colorable graph and assume that a proper greedy $3$-coloring (with color $1, 2, 3$) of the nodes of $G$ is given.
	Let $C$ be a maximal connected component of the subgraph induced by the nodes of colors $2$ and $3$.
	Assume that the diameter of $C$ is at least $2\Delta$ and consider any node $v \in V(C)$.
	Then at least one of the following holds:
	\begin{enumerate}
		\item There exists a node $w \in V(C)$ satisfying $\dist_C(v, w) \leq \Delta$ such that $w$ has at least two neighbors of color $1$ in $G$.
		\item There exist two neighbors $x, y \in V(C)$ satisfying $\dist_C(v, x) \leq \Delta$ and $\dist_C(v, y) \leq \Delta$ such that $x$ and $y$ do not have a common neighbor of color $1$ in $G$. 
	\end{enumerate}
\end{lemma}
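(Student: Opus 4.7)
My plan is to argue by contradiction, assuming that both properties (1) and (2) fail for the given $v$ and exhibiting a color-$1$ node whose degree exceeds $\Delta$.

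The first step is to use the diameter assumption to construct a long geodesic in $C$ starting at $v$. Since $C$ has diameter at least $2\Delta$, by the triangle inequality there is a node in $C$ at $C$-distance at least $\Delta$ from $v$; a shortest path in $C$ from $v$ to that node, truncated after $\Delta+1$ vertices, yields a sequence $v = u_0, u_1, \ldots, u_\Delta$ of distinct nodes with $\dist_C(v, u_i) = i \le \Delta$ for every $i$.

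Next I would argue that all of the $u_i$ share a single common color-$1$ neighbor. Since the coloring is greedy, every node of color $2$ or $3$ must have had an already-processed neighbor of color $1$ when it was colored (otherwise the greedy rule would have assigned it color $1$), so each $u_i$ has at least one color-$1$ neighbor in $G$. The negation of condition~(1) then forces each $u_i$, which sits within $C$-distance $\Delta$ of $v$, to have \emph{at most} one color-$1$ neighbor; hence each $u_i$ has a unique color-$1$ neighbor, which I call $z_i$. Now for each consecutive pair $(u_i, u_{i+1})$, both nodes lie within $C$-distance $\Delta$ of $v$ and are adjacent in $C$, so the negation of condition~(2) guarantees they share a common color-$1$ neighbor; uniqueness of $z_i$ and $z_{i+1}$ then forces $z_i = z_{i+1}$.

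Chaining these equalities along the path gives a single color-$1$ node $z = z_0 = z_1 = \cdots = z_\Delta$ that is adjacent to all $\Delta+1$ distinct nodes $u_0, \ldots, u_\Delta$, contradicting the assumption that $G$ has maximum degree $\Delta$. The only subtlety I anticipate is the one-line justification that a greedy (first-fit) $3$-coloring forces every node of color $2$ or $3$ to have a neighbor of color $1$; the rest is a direct chaining argument along the geodesic. I do not expect a serious obstacle here, since the diameter bound $2\Delta$ in the hypothesis is exactly what is needed to produce a path of length $\Delta$ from $v$ and thereby $\Delta+1$ path vertices to overload $z$.
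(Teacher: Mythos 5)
Your proof is correct and relies on the same key ingredients as the paper's: a path of $\Delta+1$ nodes within $C$-distance $\Delta$ of $v$, the observation that greedy coloring plus the negation of property~(1) forces each such node to have a \emph{unique} color-$1$ neighbor, and a degree count on that color-$1$ node. The only cosmetic difference is that you frame it as a double contradiction (assuming both (1) and (2) fail and overloading a single vertex $z$), whereas the paper directly derives (2) from the negation of (1) by locating a "transition" index $i$ where $v_i$ is adjacent to $u$ but $v_{i+1}$ is not -- these are contrapositive views of the same argument.
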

\begin{proof}
	Assume there is no node $w$ with the stated property (otherwise we are done).
	This in particular implies that any node in distance at most $\Delta$ from $v$ in $C$ has exactly one neighbor of color $1$ in $G$ (as the given $3$-coloring is a greedy $3$-coloring).
	For $v$, let $u$ denote this unique neighbor.
	Consider any path $v = v_0, v_1, \dots, v_{\Delta}$ of length $\Delta$ starting in $v$.
	(Such a path must exist as otherwise the diameter of $C$ would be strictly smaller than $2\Delta$.)
	As $u$ has degree at most $\Delta$ and $v$ is a neighbor of $u$, at least one node from $\{ v_1, \dots, v_{\Delta} \}$ is not a neighbor of $u$.
	This implies that there is some index $0 \leq i \leq \Delta - 1$ such that exactly one of $v_i$ and $v_{i + 1}$ is a neighbor of $u$.
	As both $v_i$ and $v_{i + 1}$ have exactly one neighbor of color $1$ in $G$, it follows that $v_i$ and $v_{i+1}$ constitute nodes $x$ and $y$ with the properties described in the lemma.
\end{proof}

Now we are set to formally describe our encoding schema.
While it might not be immediately obvious why some objects that the schema computes along the way exist, we will show subsequently in \Cref{lem:3colwelldef} that they indeed exist.

\subparagraph{The encoding schema.}
Fix an arbitrary greedy $3$-coloring $\varphi \colon V(G) \rightarrow \{ 1, 2, 3 \}$ of the nodes of $G$ (i.e., each node of color $i$ has neighbors of all colors $< i$).
Assign bit $1$ to all nodes of color $1$.

Let $G_{2,3}$ denote the graph induced by the nodes of colors $2$ and $3$.
Let $G^*_{2,3}$ denote the subgraph of $G_{2,3}$ obtained by removing all maximal connected components of diameter\footnote{Whenever we refer to the diameter of a component $C$, we refer to its \emph{strong diameter}, i.e., to the diameter of the component in $C$, not in $G$.} at most $4000 \Delta^9$ from $G_{2,3}$.

For each maximal connected component $C$ of $G^*_{2,3}$, do the following.

Compute a $(2000\Delta^9, 2000\Delta^9)$-ruling set $R_C$ (where distances are with respect to $C$, not $G$).
For each node $r \in R_C$, do the following.

Choose a set $Q_r$ with the following properties.
\begin{enumerate}
	\item $|Q_r| = 12 \Delta^6$.
	\item For any node $v \in Q_r$, we have $\dist_C(r, v) \leq 600 \Delta^9$.
	\item For any two distinct nodes $v_1, v_2 \in Q_r$, we have $\dist_C(v_1, v_2) \geq 50 \Delta^3$.
\end{enumerate}
For each node $v \in Q_r$, do the following.

Select either a node $w$ or two nodes $x, y$ as described in \Cref{lem:single-or-double}, applied to $v$.
Denote the set of selected nodes by $S_v$ (i.e., either $S_v = \{ w \}$ or $S_v = \{ x, y \}$).
Denote by $T_v$ the set of all nodes of $V(C)$ that
\begin{enumerate}
	\item do not share a neighbor of color $1$ with some node from $S_v$, and
	\item are neither neighbors of nor identical to nodes in $S_v$.
\end{enumerate}
Choose a node $v'$ and a path $P_v$ of length $\Delta$ starting in $v'$ such that
\begin{enumerate}
	\item all nodes of $P_v$ are contained in $T_v$, and
	\item $\dist_C(v, v') \leq 20 \Delta^3$.
\end{enumerate}

Select either a node $w'$ or two nodes $x', y'$ as described in \Cref{lem:single-or-double},
applied to $v'$, with the additional constraint that the selected node or a pair of nodes must be contained in $P_v$.
Denote the set of selected nodes by $S'_v$ (i.e., either $S'_v = \{ w' \}$ or $S'_v = \{ x', y' \}$).

Let $\mathcal C(G^*_{2,3})$ denote the set of all maximal connected components $C$ of $G^*_{2,3}$.
For each $C \in \mathcal C(G^*_{2,3})$, and each $r \in R_C$, select one node $v_{r, C} \in Q_r$ such that each node of color $1$ in $G$ has at most one neighbor in the set
\[
\bigcup_{C  \in \mathcal C(G^*_{2,3}), r \in R_C} \left( S_{v_{r,C}} \cup S'_{v_{r,C}} \right) .
\]

Now, for each component $C \in \mathcal C(G^*_{2,3})$ and each ruling set node $r \in R_C$, assign $1$-bits to a subset of $S_{v_{r, C}} \cup S'_{v_{r, C}}$ as follows.
Let $s$ denote the smallest-ID node in $S_{v_{r, C}} \cup S'_{v_{r, C}}$, and let $X_s \in \{ S_{v_{r, C}}, S'_{v_{r, C}} \}$ denote the one of the two sets containing $s$.
If $s$ is of color $2$, then assign bit $1$ to all nodes in $X_s$.
If $s$ is of color $3$, then assign bit $1$ to all nodes in $S_{v_{r, C}} \cup S'_{v_{r, C}}$.

Finally, assign bit $0$ to all nodes that have not been assigned bit $1$ so far.

\subparagraph{Well-definedness of the encoding schema.}
The following lemma shows that the encoding schema is well-defined, i.e., that all objects that the encoding schema computes along the way do indeed exist.

\begin{lemma}\label{lem:3colwelldef}
	The encoding schema is well-defined.
\end{lemma}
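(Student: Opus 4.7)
To prove \Cref{lem:3colwelldef}, I will verify that each constructed object exists, in the order the schema introduces them. Existence of the ruling set $R_C$ follows from a greedy construction. For $Q_r$: each $C \in \mathcal{C}(G^*_{2,3})$ has diameter exceeding $4000\Delta^9$, so the eccentricity of any $r \in R_C$ inside $C$ is at least $2000\Delta^9$, and hence $C$ contains a shortest path starting at $r$ of length at least $600\Delta^9$. Picking the vertices at positions $50\Delta^3, 100\Delta^3, \ldots, 600\Delta^9$ along this path yields $12\Delta^6$ vertices, all within $600\Delta^9$ of $r$ and at pairwise distance $\geq 50\Delta^3$ (since on a shortest path pairwise distances equal differences in positions). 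The set $S_v$ exists directly by \Cref{lem:single-or-double}, whose diameter hypothesis is satisfied.

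For $v'$ and $P_v$: the complement $\overline{T_v}$ consists of $S_v$, the $G$-neighbors of $S_v$, and nodes sharing a color-$1$ neighbor with $S_v$, so $|\overline{T_v}| \leq 2 + 2\Delta + 2\Delta^2 \leq 5\Delta^2$. The component $C$ still contains a shortest path starting at $v$ of length at least $1400\Delta^9 > 20\Delta^3$; partitioning its initial length-$20\Delta^3$ segment into pairwise disjoint subpaths of length $\Delta$ gives more than $5\Delta^2$ such subpaths, so by pigeonhole some subpath lies entirely inside $T_v$. Take this subpath as $P_v$ with $v'$ its starting endpoint, which gives $\dist_C(v, v') \leq 20\Delta^3$. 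The set $S'_v$ is then obtained by applying the proof of \Cref{lem:single-or-double} along the length-$\Delta$ path $P_v$ from $v'$: that proof only inspects a single length-$\Delta$ path out of its input vertex, so it locates the desired node(s) inside $P_v$.

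The main obstacle is the last existence claim: the simultaneous choice of $v_{r, C} \in Q_r$ ensuring each color-$1$ node $c$ has at most one selected neighbor. The key observation enabling the argument is that for every $c$ and every $v$, at most one neighbor of $c$ lies in $S_v \cup S'_v$: the nodes inside $S_v$ (resp.\ $S'_v$) share no common color-$1$ neighbor by \Cref{lem:single-or-double}, and since $S'_v \subseteq P_v \subseteq T_v$, no node of $S'_v$ shares a color-$1$ neighbor with any node of $S_v$. I plan to sample each $v_{r, C}$ independently and uniformly from $Q_r$ and apply the Lov\'asz Local Lemma. Because $Q_r$ is $50\Delta^3$-spaced while $S_v \cup S'_v$ lies within $20\Delta^3 + \Delta$ of $v$, every neighbor $n$ of $c$ lies in $S_v \cup S'_v$ for at most one $v \in Q_r$; and because $R_C$ is $2000\Delta^9$-spaced while such $v$ lies within $600\Delta^9$ of $r$, each such $n$ is covered by at most one pair $(r, C)$. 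Consequently
\[
\sum_{(r,C)} \Pr\bigl[v_{r,C}\text{ contributes a neighbor of }c\bigr] \;\leq\; \frac{\Delta}{12\Delta^6} \;=\; \frac{1}{12\Delta^5}.
\]
By the key observation each $(r,C)$ contributes at most one $c$-neighbor, so the bad event $E_c$ requires two distinct $(r,C)$ to contribute; using independence across $(r,C)$,
\[
\Pr[E_c] \;\leq\; \tfrac{1}{2}\Bigl(\sum_{(r,C)}\Pr[v_{r,C}\text{ contributes}]\Bigr)^2 \;\leq\; \frac{1}{288\,\Delta^{10}}.
\]
Each $E_c$ depends on at most $\Delta$ variables, and each $v_{r,C}$ affects at most $4|Q_r|\cdot\Delta = 48\Delta^7$ events (since $|\bigcup_{v \in Q_r}(S_v \cup S'_v)| \leq 4|Q_r|$ and every such node has at most $\Delta$ color-$1$ neighbors), so the dependency degree is $d \leq 48\Delta^8$. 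Then $epd \leq e/(6\Delta^2) < 1$, so \Cref{lem:lll} produces a valid simultaneous assignment, proving that the schema is well-defined.
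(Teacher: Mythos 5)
Your proof is correct and follows essentially the same route as the paper's: the same object-by-object existence checks (greedy/MIS ruling set, $Q_r$ from a shortest path out of $r$, $P_v$ from a shortest path out of $v$ with a counting/pigeonhole bound on $|V(C)\setminus T_v|$, and the restricted application of \Cref{lem:single-or-double} for $S'_v$), and the same Lov\'asz Local Lemma setup with the same random variables (independent uniform choice in each $Q_r$) and the same bad events. The only noticeable deviation is in the probability estimate: the paper bounds $\Pr[A_u]$ by a union bound over the $\binom{\Delta}{2}$ pairs of variables each with probability at most $\Delta/|Q_r|$, getting $1/(144\Delta^8)$, whereas you observe that the $\Delta$ neighbors of $c$ are globally charged to at most one $(v,r,C)$ each, so $\sum_{(r,C)}\Pr[\text{contributes}]\le\Delta/|Q_r|=1/(12\Delta^5)$, and then use $\sum_{i<j}p_ip_j\le\tfrac12(\sum_ip_i)^2$ to get the slightly sharper $1/(288\Delta^{10})$. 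Both estimates pass the LLL threshold; yours saves a factor of $\Delta^2$ but otherwise matches, so this is a cosmetic tightening rather than a genuinely different argument.
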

\begin{proof}
	We show the existence of the objects the encoding schema computes along the way in the order in which they appear in the schema (omitting discussions of trivially existing objects).
	The existence of a $(2000\Delta^9, 2000\Delta^9)$-ruling set follows from the fact that any maximal independent set on the power graph $C^{2000\Delta^9}$ is a $(2000\Delta^9, 2000\Delta^9)$-ruling set on $C$ (and the fact that maximal independent sets trivially exist on any graph as they can be found by a simple greedy algorithm).
	
	Next, consider the set $Q_r$ with the mentioned properties (for some fixed $r$ in some component $C$).
	Since $C$ is a component of $G^*_{2,3}$, we know that $C$ has diameter larger than $4000\Delta^9$, which implies that there is a node $r_{600\Delta^9}$ that is at distance exactly $600\Delta^9$ from $r$ in $C$.
	Let $(r = r_0, r_1, r_2, \dots, r_{600\Delta^9})$ be a shortest path from $r$ to $r_{600\Delta^9}$.
	Then the set $Q_r := \{ r_{i \cdot 50\Delta^3} \mid 1 \leq i \leq 12\Delta^6 \}$ satisfies the properties specified in the encoding schema.
	
	Next, the existence of a node $w$ or two nodes $x, y$ as described follows from \Cref{lem:single-or-double}.
	
	Now, consider node $v'$ and path $P_v$.
	We begin the proof of their existence by bounding the number of nodes in $V(C) \setminus T_v$.
	Observe that there are at most $2\Delta^2$ nodes that share a neighbor of color $1$ with some node from $S_v$ since $|S_v| \leq 2$.
	Observe further that $|S_v| \leq 2$ also implies that there are at most $2\Delta + 2$ nodes that are neighbors of or identical to some node in $S_v$.
	Hence, $|V(C) \setminus T_v| \leq 2\Delta^2 + 2\Delta + 2 \leq 6\Delta^2$.
	Now consider a node $v_{20\Delta^3}$ that is at distance exactly $20\Delta^3$ from $v$ in $C$ (whose existence again follows from the fact that $C$ has diameter larger than $4000\Delta^9$).
	Let $(v = v_0, v_1, v_2, \dots, v_{20\Delta^3})$ be a shortest path from $v$ to $v_{20\Delta^3}$.
	Since $|V(C) \setminus T_v| \leq 6\Delta^2$, there must be a subpath $(v_i, v_{i+1}, \dots, v_{i + \Delta})$ of length $\Delta$ such that all nodes on the subpath are contained in $T_v$.
	Therefore, by setting $v' := v_i$ and $P_v := (v_i, v_{i+1}, \dots, v_{i + \Delta})$, we ensure that $v'$ and $P_v$ satisfy the properties stated in the encoding schema, concluding the existence proof for $v'$ and $P_v$.
	
	To show the existence of a node $w'$ or two nodes $x', y'$ as described in the schema, it suffices to observe that the proof of \Cref{lem:single-or-double} also works with the additional restriction that the selected node $w'$ or the selected two nodes $x', y'$ are contained in $P_v$ (as the proof, in fact, starts by finding such a path).
	
	Finally, the last and most interesting part of proving the well-definedness of the encoding schema consists in showing that the desired nodes $v_{r,C}$ satisfying the property specified in the schema exist.
	We show this existence by using the Lovász Local Lemma (LLL).
	Define an LLL instance as follows.
	
	For each $C \in \mathcal C(G^*_{2,3})$, and each $r \in R_C$, let $X_{r, C}$ denote the uniformly distributed random variable with value set $Q_r$.
	For each node $u$ of color $1$, let $A_u$ denote the event that there are two different pairs $(C, r) \neq (C', r')$ with $C, C' \in \mathcal C(G^*_{2,3})$ (where possibly $C = C'$), $r \in R_C$, and $r' \in R_{C'}$ such that $u$ has both a neighbor in $S_{X_{r, C}} \cup S'_{X_{r, C}}$ and a neighbor in $S_{X_{r', C'}} \cup S'_{X_{r', C'}}$.
	
	Next, we bound the number of random variables an event $A_u$ depends on.
	Observe that for each set $Q_r$ (where $r \in R_C$ for some $C \in \mathcal C(G^*_{2,3})$), each node $v \in Q_r$, and each node $w \in S_v \cup S'_v$, we have $\dist_C(r,w) \leq \dist_C(r,v) + \dist_C(v,w) \leq 600 \Delta^9 + 20 \Delta^3 + \Delta \leq 700 \Delta^9$ (by \Cref{lem:single-or-double} and the distance properties stated in the description of the encoding schema).
	Hence, for any two different pairs $(C, r) \neq (C', r')$ (as specified above), and any two nodes $u \in Q_r$, $v \in Q_{r'}$, we have $(S_u \cup S'_u) \cap (S_v \cup S'_v) = \emptyset$: either we have $C \neq C'$, in which case $S_u \cup S'_u \in V(C)$ and $S_v \cup S'_v \in V(C')$ are subsets of the node sets of different components, or $C = C'$ and $r \neq r'$, in which case the computed distance upper bound of $700 \Delta^9$, together with the fact that the nodes in $R_C$ have a pairwise distance of at least $2000 \Delta^9$ in $C$, ensures that any node in $(S_u \cup S'_u)$ has a distance (in $C$) of at least $600 \Delta^9$ from any node in $(S_v \cup S'_v)$.
	By the definition of the $A_u$, it follows that each event $A_u$ depends on at most $\Delta$ random variables (as $u$ has at most $\Delta$ neighbors).
	
	Vice versa, we claim that for any random variable $X_{r, C}$, there are at most $(12\Delta^6 \cdot 4 \cdot \Delta) = 48\Delta^7$ different events $A_u$ that depend on $X_{r, C}$: this simply follows from the facts that $12\Delta^6$ is an upper bound for the number of sets of the form $S_v \cup S'_v$ for some $v \in Q_r$, $4$ is an upper bound for the number of nodes in any such $S_v \cup S'_v$, and $\Delta$ is an upper bound for the number of events each such node can affect.
	We conclude that the dependency degree $d$ of our LLL instance is upper bounded by $d \leq \Delta \cdot 48\Delta^7 = 48 \Delta^8$.
	
	Now we bound the probability that an event $A_u$ occurs.
	Observe that for any two distinct nodes $u \neq v$ contained in the same $Q_r$, and any two nodes $\hat{u} \in S_u \cup S'_u$ and $\hat{v} \in S_v \cup S'_v$, we have $\dist_C(\hat{u}, \hat{v}) > 0$, due to $\dist_C(u, v) \geq 50\Delta^3$, $\dist_C(u, \hat{u})\leq 20\Delta^3 + \Delta \leq 21 \Delta^3$, and $\dist_C(v, \hat{v})\leq 20\Delta^3 + \Delta \leq 21 \Delta^3$ (where $C$ denotes the component containing $r$).
	Hence, $(S_u \cup S'_u) \cap (S_v \cup S'_v) = \emptyset$ for any two such $u, v$.
	It follows that for each event $A_u$, and each random variable $X_{r,C}$ that $A_u$ depends on, the probability that $u$ has a neighbor in $S_{X_{r, C}} \cup S'_{X_{r, C}}$ is at most $\Delta/|Q_r| = 1/(12\Delta^5)$.
	This implies that for any two distinct random variables $X_{r,C}, X_{r',C'}$ the probability that $u$ has a neighbor in both $S_{X_{r, C}} \cup S'_{X_{r, C}}$ and $S_{X_{r', C'}} \cup S'_{X_{r', C'}}$ is upper bounded by $1/(144\Delta^{10})$.
	Recall that $A_u$ depends on at most $\Delta$ random variables; hence, by union bounding over all the (less than $\Delta^2$) pairs of random variables that $A_u$ depends on, we obtain an upper bound of $\Delta^2/(144\Delta^{10}) = 1/(144\Delta^8)$ for the probability $p$ that $A_u$ occurs.
	
	Now we are ready to apply the Lov\'asz Local Lemma.
	By the bounds established above, we know that $epd \leq e \cdot 1/(144\Delta^8) \cdot 48\Delta^8 < 1$.
	By the Lov\'asz Local Lemma, this implies that there exists an assignment to the random variables $X_{r,C}$ such that none of the events $A_u$ occurs.
	In other words, it is possible to select, for each $C \in \mathcal C(G^*_{2,3})$, and each $r \in R_C$, a node $v_{r,C}$ such that for each node $u$ of color $1$ there exists at most one pair $(r,C)$ such that $u$ has a neighbor in $( S_{v_{r,C}} \cup S'_{v_{r,C}})$.
	Moreover, for any fixed pair $(r,C)$ and any node $u$ of color $1$, at most one neighbor of $u$ is contained in $( S_{v_{r,C}} \cup S'_{v_{r,C}})$, by the guarantees provided by \Cref{lem:single-or-double} and the construction of $S'_v$ (which ensures that $S'_v \subseteq T_v$).
	Hence, the aforementioned selection of nodes $v_{r,C}$ indeed satisfies the property specified in the encoding schema, concluding the proof.
\end{proof}

Before formally describing the algorithm that uses the computed advice for computing a $3$-coloring, we prove a useful property related to the encoding schema.

\begin{lemma}\label{lem:twoones}
	Let $\varphi \colon V(G) \rightarrow \{ 1, 2, 3\}$ be the greedy coloring fixed at the beginning of the encoding schema, and let $b(v)$ denote the bit that the encoding schema assigns to a node $v$.
	Then $\varphi(v) = 1$ if and only if the following two properties hold: $b(v) = 1$ and there is at most one neighbor $w$ of $v$ with $b(w) = 1$.
\end{lemma}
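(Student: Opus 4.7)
\medskip

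\noindent\textbf{Proof plan.} I would prove both directions of the biconditional separately, with the harder direction being the backward one.

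\emph{Forward direction.} Suppose $\varphi(v) = 1$. The step ``Assign bit $1$ to all nodes of color $1$'' at the start of the encoding schema immediately gives $b(v) = 1$. Since $\varphi$ is proper, every neighbor of $v$ has color $2$ or $3$, so any bit-$1$ neighbor of $v$ must be a ``type-$23$'' bit holder, i.e., a node in $\bigcup_{C,r} (S_{v_{r,C}} \cup S'_{v_{r,C}})$. But the last, LLL-based selection step of the encoding schema is engineered precisely so that each color-$1$ node has at most one neighbor in this union (this is the property verified via the Lov\'asz Local Lemma in the proof of \Cref{lem:3colwelldef}). Hence $v$ has at most one bit-$1$ neighbor.

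\emph{Backward direction.} I would argue the contrapositive: assume $\varphi(v) \in \{2,3\}$; I show that $b(v) = 0$ or $v$ has at least two bit-$1$ neighbors. The only nontrivial case is $b(v) = 1$. Since $\varphi(v) \ne 1$, the only way $v$ can receive bit $1$ is as a type-$23$ bit holder, which requires $v$ to lie in some $S_{v_{r,C}} \cup S'_{v_{r,C}}$, and the sub-set containing $v$ must have been selected for bit assignment (the $X_s$ when $\varphi(s) = 2$, or either sub-set when $\varphi(s) = 3$). Without loss of generality, assume $v \in S_{v_{r,C}}$; the case $v \in S'_{v_{r,C}}$ is identical by the symmetric construction of $S'_v$ via \Cref{lem:single-or-double} applied to $v'$. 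Now split by the two outcomes of \Cref{lem:single-or-double} applied to $v_{r,C}$. If $S_{v_{r,C}} = \{w\}$, then $v = w$, and by the lemma $w$ has at least two color-$1$ neighbors in $G$; both have bit $1$ by the first step of the schema, so $v$ has at least two bit-$1$ neighbors. If $S_{v_{r,C}} = \{x,y\}$, then $x$ and $y$ are adjacent in $G$, and in whichever bit-assignment sub-case caused $v$ to receive bit $1$ (either $X_s = S_{v_{r,C}}$ with $\varphi(s) = 2$, or $\varphi(s) = 3$), both $x$ and $y$ are assigned bit $1$. Hence $v$ has at least one bit-$1$ neighbor inside $\{x,y\}$. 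Moreover, since $\varphi$ is a greedy coloring and $\varphi(v) \in \{2,3\}$, node $v$ has at least one color-$1$ neighbor in $G$, which also has bit $1$; and since $x,y \in V(C)$ have colors in $\{2,3\}$, this color-$1$ neighbor is distinct from the one in $\{x,y\}$. Thus $v$ has at least two bit-$1$ neighbors, finishing the contrapositive.

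\emph{Main obstacle.} The only step that requires real care is the backward direction in the two-node sub-case $S_{v_{r,C}} = \{x,y\}$: one must check that bits are simultaneously assigned to \emph{both} $x$ and $y$ in every bit-assignment sub-case that makes $b(v) = 1$. This follows by unrolling the case split ``$\varphi(s) = 2$ with $X_s = S_{v_{r,C}}$'' vs.\ ``$\varphi(s) = 3$'' from the schema, and observing that the excluded sub-case ``$\varphi(s) = 2$ with $X_s = S'_{v_{r,C}}$'' is inconsistent with $v \in S_{v_{r,C}}$ having bit $1$. Everything else reduces to directly invoking the greedy-coloring property and the guarantees of \Cref{lem:single-or-double}.
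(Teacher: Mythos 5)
Your proof is correct and follows essentially the same route as the paper. The forward direction is identical: bit-$1$ neighbors of a color-$1$ node must be type-$23$ holders (i.e., lie in the union $\bigcup_{C,r}(S_{v_{r,C}}\cup S'_{v_{r,C}})$), and the LLL-based selection of the $v_{r,C}$ guarantees at most one such neighbor. For the backward direction the paper states the same disjunction you derive (``$v$ has at least two color-$1$ neighbors, or $v$ has a bit-$1$ neighbor of color $\neq 1$'') without unrolling the two outcomes of \Cref{lem:single-or-double}; your version makes that case split explicit and verifies that in the two-node outcome both endpoints always receive bit $1$, which is a useful clarification but not a different argument. One small thing you leave implicit (as does the paper at this point) is that $v$ lies in at most one set of the form $S_{v_{r,C}}\cup S'_{v_{r,C}}$; this disjointness is established in the proof of \Cref{lem:3colwelldef} and is what makes the claim ``the sub-set containing $v$ must have been selected for bit assignment'' airtight.
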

\begin{proof}
	Consider a node $v$ with $\varphi(v) = 1$.
	Then, the encoding schema guarantees that $b(v) = 1$.
	Moreover, observe that any node $u$ with $\varphi(u) \neq 1$ and $b(u) = 1$ is contained in
	\[
	\bigcup_{C  \in \mathcal C(G^*_{2,3}), r \in R_C} \left( S_{v_{r,C}} \cup S'_{v_{r,C}} \right) ,
	\]
	and that each node of color $1$ has at most one neighbor in this set.
	Hence, $v$ has at most one neighbor $w$ with $b(w) = 1$.
	
	Now, for the other direction, consider a node $v$ with $\varphi(v) \neq 1$.
	If $b(v) \neq 1$, we are done; thus assume $b(v) = 1$.
	As above, we obtain that $v$ is contained in
	\[
	\bigcup_{C  \in \mathcal C(G^*_{2,3}), r \in R_C} \left( S_{v_{r,C}} \cup S'_{v_{r,C}} \right) .
	\]
	By the definitions of $S_{v_{r,C}}$ and $S'_{v_{r,C}}$ (and the guarantees specified in \Cref{lem:single-or-double}) and the choice of exactly which nodes of color $\neq 1$ receive a $1$-bit, this implies that $v$ has at least two neighbors of color $1$ or $v$ has a neighbor $u$ of color $\neq 1$ that also receives bit $1$.
	As we are done in the former case, assume the latter, i.e., that $v$ has a neighbor $u$ with $\varphi(u) \neq 1$ and $b(u) = 1$.
	Since $\varphi$ is a greedy coloring and $\varphi(u) \neq 1$, we know that $v$ has a neighbor $w$ of color $1$ (which therefore satisfies $b(w) = 1$).
	As $\varphi(u) \neq 1$, we have $w \neq u$, which implies that $v$ has two neighbors that receive bit~$1$.
\end{proof}

\subparagraph{The 3-coloring algorithm.}
The algorithm $\mathcal A$ for computing a $3$-coloring using the advice provided by the encoding schema proceeds as follows.
Each node that receives bit $1$ and has at most one neighbor receiving bit $1$ outputs color $1$.
Let us call the set of these nodes $W$.
Each node $u \in V(G) \setminus W$ collects its $(4000\Delta^9 + 1)$-hop neighborhood in $V(G) \setminus W$.
If the maximal connected component $C(u)$ in $V(G) \setminus W$ containing $u$ is of diameter at most $4000\Delta^9$, then $u$ outputs the color it would receive in the unique proper $2$-coloring of the nodes of $C(u)$ that assigns color $2$ to the node in $C(u)$ of smallest ID.
Otherwise, $u$ selects a node $w$ in $C(u)$ that receives bit $1$ and satisfies $\dist_{C(u)}(u,w) \leq 3000 \Delta^9$.
(Such a node $w$ exists as the encoding schema guarantees that $u$ is in distance at most $2000 \Delta^9$ from some node $r \in R_{C(u)}$, which in turn is in distance at most $600\Delta^9$ from a node $v_{r,C(u)}$ that has distance at most $20 \Delta^3 + \Delta$ to some node in $S_{v_{r,C(u)}} \cup S'_{v_{r,C(u)}}$ that receives bit $1$, and all these distances are in $C$.)
Then, $u$ determines the number $N$ of maximal connected components of nodes that receive bit $1$ in the $(30 \Delta^3)$-hop neighborhood of $w$ in $C(u)$.
Moreover, $u$ determines the node $x$ of smallest ID among all nodes in the $(30 \Delta^3)$-hop neighborhood of $w$ in $C(u)$ that receive color $1$, and computes its distance to $x$ in $C(u)$.
If $N = 1$ and $\dist_{C(u)}(u,x)$ is even, then $u$ outputs color $2$.
If $N = 1$ and $\dist_{C(u)}(u,x)$ is odd, then $u$ outputs color $3$.
If $N > 1$ and $\dist_{C(u)}(u,x)$ is odd, then $u$ outputs color $2$.
If $N > 1$ and $\dist_{C(u)}(u,x)$ is even, then $u$ outputs color $3$.
This concludes the description of $\mathcal A$.
In the following theorem, we show that $\mathcal A$ produces a correct $3$-coloring, and bound its runtime.

\begin{theorem}
	Algorithm $\mathcal A$ outputs a proper $3$-coloring in $O(\Delta^9)$ rounds.	
\end{theorem}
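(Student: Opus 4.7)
The plan is to separately verify correctness for the nodes output with color 1, for the nodes in small components of $G_{2,3}$, and for the nodes in large components, and then bound the runtime. By \Cref{lem:twoones}, the set $W$ of nodes assigned color 1 by $\mathcal A$ is exactly $\{v : \varphi(v) = 1\}$; hence $W$ is independent and I only need to properly 2-color $V(G)\setminus W$ with colors $\{2,3\}$. The graph $G_{2,3} = G[V(G)\setminus W]$ is bipartite, with bipartition $\varphi^{-1}(2), \varphi^{-1}(3)$, so each maximal connected component admits exactly two proper 2-colorings, differing only by swapping the classes.

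For a small component $C(u)$ (diameter at most $4000\Delta^9$), each node of $C(u)$ collects the entire component in $O(\Delta^9)$ rounds and deterministically outputs the unique proper 2-coloring that assigns color 2 to the smallest-ID node of the component, so all nodes of $C(u)$ agree on a proper 2-coloring.

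For a large component $C$, I plan to show that $\mathcal A$ in fact recovers $\varphi$ restricted to $V(C)$. This requires two substeps: (a) for every $u\in V(C)$ and its chosen anchor $w$, the $(30\Delta^3)$-hop neighborhood of $w$ in $C$ contains exactly the bit-1 nodes of one single pair $(r,C)$ from the encoding; and (b) from these bits, $u$ correctly recovers $\varphi(s)$, where $s$ is the smallest-ID node in $S_{v_{r,C}}\cup S'_{v_{r,C}}$. For (a), the proof of \Cref{lem:3colwelldef} already establishes that bit-1 nodes from different pairs $(r,C)\neq(r',C')$ lie at distance at least $600\Delta^9$ in $C$ (from the $2000\Delta^9$-ruling-set separation and the $700\Delta^9$ within-group radius), while the bit-1 nodes of a single pair all lie within distance at most $22\Delta^3$ of each other (since $S_{v_{r,C}}$ is within $\Delta$ of $v_{r,C}$ and $S'_{v_{r,C}}$ is within $20\Delta^3+\Delta$ of $v_{r,C}$); this cleanly fits inside the $30\Delta^3$ radius. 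For (b), each of $S_{v_{r,C}}$ and $S'_{v_{r,C}}$ is a connected subgraph of $C$ (a singleton or an adjacent pair in $G$, by \Cref{lem:single-or-double}), while $S'_v$ is non-adjacent to $S_v$ in $G$ since $S'_v\subseteq P_v\subseteq T_v$ and $T_v$ excludes neighbors of $S_v$. Hence the number $N$ of connected components of bit-1 nodes in the neighborhood is $1$ iff the encoding marked only $X_s$ iff $\varphi(s)=2$, and is $2$ iff it marked both sets iff $\varphi(s)=3$. The smallest-ID bit-1 node $x$ seen by $\mathcal A$ coincides with $s$, so $\varphi(x)$ is correctly inferred. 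Since $C$ is bipartite, the parity of $\dist_C(u,x)$ determines whether $u$ and $x$ share a color in any proper 2-coloring of $C$, so the case distinction in $\mathcal A$ outputs $\varphi(u)$. The subtlest point is verifying the distance bounds in (a) and the connected-component count in (b); both amount to carefully bookkeeping the distance parameters fixed in the encoding schema.

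For the runtime, determining $W$ takes one round of communication, while gathering the $(4000\Delta^9+1)$-hop neighborhood of $u$ in $V(G)\setminus W$ takes $O(\Delta^9)$ rounds; every subsequent step (the choice of $w$, and the computation of $N$, of $x$, and of the parity) is purely local to this collected information. Hence the total round complexity is $O(\Delta^9)$.
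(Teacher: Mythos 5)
Your proof is correct and takes essentially the same approach as the paper: handle color-$1$ nodes via \Cref{lem:twoones}, handle small components by the canonical $2$-coloring, and for large components show via the $600\Delta^9$ inter-group separation and the $O(\Delta^3)$ intra-group diameter that the $30\Delta^3$-ball around $w$ captures exactly one group's bit-$1$ nodes, from which $N$ and $x$ determine $\varphi(s)$ and the parity argument finishes. The only (harmless) imprecision is the runtime bookkeeping: gathering the $(4000\Delta^9+1)$-hop neighborhood inside $V(G)\setminus W$ requires first collecting the $(4000\Delta^9+2)$-hop neighborhood in $G$ to decide membership in $W$, but this is still $O(\Delta^9)$ rounds, matching the paper.
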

\begin{proof}
	After collecting its $(4000\Delta^9 + 2)$-hop neighborhood, each node $u$ of $G$ can determine whether it outputs color $1$ and, if it does not output color $1$, which nodes of $G$ are contained in its $(4000\Delta^9 + 1)$-hop neighborhood in $C(u)$.
	As this information suffices to compute $u$'s output, $\mathcal A$ terminates in $O(\Delta^9)$ rounds.
	
	The remainder of the proof is dedicated to proving the correctness of $\mathcal A$.
	More precisely, we will show that $\mathcal A$ outputs the greedy coloring $\varphi$ fixed at the beginning of the encoding schema, except on maximal connected components of $G_{2,3}$ of diameter at most $4000 \Delta^9$, on which instead it will output the unique $2$-coloring with colors $2, 3$ that assigns color $2$ to the smallest-ID node of the component.
	We denote this modified coloring by $\psi$.
	As $\psi$ is a proper $3$-coloring, it suffices to show that $\mathcal A$ outputs $\psi$.
	
	Consider first nodes of color $1$ in $\psi$.
	These nodes have color $1$ also in $\varphi$, and by \Cref{lem:twoones} and the definition of $\mathcal A$, it follows that these nodes output color $1$, as desired.
	
	Next, let $u$ be a node with $\psi(u) \neq 1$ for which $C(u)$ has diameter at most $4000\Delta^9$.
	By the definitions of $\mathcal A$ and $\psi$, node $u$ outputs precisely $\psi(u)$.
	
	Finally, let $u$ be a node with $\psi(u) \neq 1$ for which the diameter of $C(u)$ is strictly larger than $4000\Delta^9$.
	Observe that each node in $C(u)$ receives the same color under $\psi$ as under $\varphi$.
	Consider the node $w$ selected by $u$ in $\mathcal A$.
	By the definition of the encoding schema, there exist nodes $r \in R_{C(u)}$ and $v_{r,C(u)} \in Q_r$ such that $w \in ( S_{v_{r,C}} \cup S'_{v_{r,C}} )$.
	As essentially already observed in the proof of \Cref{lem:3colwelldef}, we know that for any $r' \in R_{C(u)}$ satisfying $r' \neq r$, and any node $y \in ( S_{v_{r',C}} \cup S'_{v_{r',C}} )$, we have $\dist_{C(u)}(w,y) \geq 600\Delta^9$.
	By the specification of which nodes receive bit $1$ in the encoding schema, it follows that any node in the $(30 \Delta^3)$-hop neighborhood of $w$ in $C(u)$ that receives bit $1$ is contained in $S_{v_{r,C}} \cup S'_{v_{r,C}}$.
	Vice versa, since any two nodes in $S_{v_{r,C}} \cup S'_{v_{r,C}}$ have distance at most $\Delta + 20 \Delta ^3 + \Delta$ in $C(u)$ and $w \in ( S_{v_{r,C}} \cup S'_{v_{r,C}} )$, all nodes from $S_{v_{r,C}} \cup S'_{v_{r,C}}$ are contained in the $(30 \Delta^3)$-hop neighborhood of $w$ in $C(u)$.
	Hence, it suffices to restrict attention to which nodes from $S_{v_{r,C}} \cup S'_{v_{r,C}}$ receive bit $1$ in the encoding schema in order to determine which color $u$ outputs according to $\mathcal A$.
	
	From the design of the encoding schema (in particular from the fact that $S'_{v_{r,C}} \subseteq T_{v_{r,C}}$), we know that no node in $S_{v_{r,C}}$ is a neighbor of a node in $S'_{v_{r,C}}$.
	Moreover, the graph induced by the nodes in $S_{v_{r,C}}$ is connected, and the graph induced by the nodes in $S'_{v_{r,C}}$ is connected as well.
	Observe further that either all nodes of $S_{v_{r,C}}$ receive bit $1$ or none of them, and the same holds for $S'_{v_{r,C}}$.
	More specifically, the design of the encoding schema ensures that if the smallest-ID node in $S_{v_{r,C}} \cup S'_{v_{r,C}}$ is of color $2$, then there is exactly one maximal connected component of nodes that receive bit $1$ in the $(30 \Delta^3)$-hop neighborhood of $w$ in $C(u)$ (i.e., $N = 1$), and the smallest-ID node among all nodes in the $(30 \Delta^3)$-hop neighborhood of $w$ in $C(u)$ is identical to the smallest-ID node in $S_{v_{r,C}} \cup S'_{v_{r,C}}$ (which implies $\psi(x) = \varphi(x) = 2$).
	Similarly, the encoding schema ensures that if the smallest-ID node in $S_{v_{r,C}} \cup S'_{v_{r,C}}$ is of color $3$, then $N = 2$ and $\psi(x) = \varphi(x) = 3$.
	Now observe that in $\psi = \varphi$ (on $C(u)$), the colors of $u$ and $x$ are identical if $\dist_{C(u)}(u,x)$ is even, and the colors of $u$ and $x$ are different if $\dist_{C(u)}(u,x)$ is odd (simply by the fact that $\psi = \varphi$ is a proper $2$-coloring on $C(u)$).
	Going through the four different cases regarding the size of $N$ and the parity of $\dist_{C(u)}(u,x)$, we conclude that the design of $\mathcal A$ ensures that $u$ outputs $\psi(u)$.
	
	Hence, $\mathcal A$ outputs precisely $\psi$ on the nodes of $G$, which implies that $\mathcal A$ outputs a proper $3$-coloring. 
\end{proof}

\section{Composability}\label{app:composability}
We now prove that, given many composable schemas, it is possible to combine them into a single composable schema.
\begin{lemma}\label{lem:compose}
	Let $\mathcal{P} = \{\Pi_1,\ldots,\Pi_k\}$ be a set of $k$ problems satisfying that, for each problem $\Pi_i$, there exists a $(\mathcal{G},\Pi_i,\gamma_i,A_i, T_i)$-composable advice schema $\mathcal{S}_i$. Each problem $\Pi_i$ is allowed to be defined with the promise that the input contains a solution for a subset of problems in $\{\Pi_1,\ldots,\Pi_{i-1}\}$.

	Let $\Pi$ be the problem that requires to output a solution for each problem $\Pi_1,\ldots,\Pi_k$ when no solution for any $\Pi_i$ problem is given.
	Then, there exists a $(\mathcal{G},\Pi,\gamma_0,A, T)$-composable advice schema $\mathcal{S}$, where:
	\begin{itemize}
		\item $\gamma_0 = \sum_{i=1}^{k} \gamma_i$;
		\item $A(c,\gamma) = \max\{\frac{(2k\gamma)^3 \lceil \log k \rceil}{c}, A_1(c,2 k \gamma),\ldots, A_k(c,2 k \gamma)\}$;
		\item $T$ is defined as follows. Let $D$ be a directed acyclic graph of $k$ nodes, where each node corresponds to a problem in $\mathcal{P}$, and there is a directed edge from $\Pi_i$ to $\Pi_j$ if $\Pi_i$ is defined such that it receives as input a solution for $\Pi_j$. For a directed path $P = (\Pi_{i_1},\ldots,\Pi_{i_d})$ in $D$, let $\mathrm{cost}(\alpha,\Delta,P) := \sum_{1 \le j \le d} T_{i_j}(\alpha,\Delta)$. Then,  $T(\alpha,\Delta) := \max_{P \in D} \mathrm{cost}(\alpha,\Delta,P)$.
	\end{itemize} 
\end{lemma}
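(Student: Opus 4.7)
The plan is to apply each sub-schema $\mathcal{S}_i$ to the input graph with appropriately scaled parameters, concatenate the resulting bit-strings node by node (with short tags that let the decoder separate the contributions of different sub-schemas), and then run the sub-algorithms in the order prescribed by the dependency DAG $D$. Concretely, for target parameters $c$, $\gamma \ge \gamma_0$, and $\alpha \ge A(c,\gamma)$, I would invoke each $\mathcal{S}_i$ with the parameter tuple $(c, 2k\gamma, \alpha)$; this invocation is legal because the definition of $A(c,\gamma)$ guarantees $\alpha \ge A_i(c,2k\gamma)$ for every $i$, while $2k\gamma \ge \gamma_0 \ge \gamma_i$ meets each sub-schema's ``$\gamma \ge \gamma_i$'' requirement. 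By composability of $\mathcal{S}_i$, the $i$-th sub-schema then places at most $\gamma_i$ bit-holders in any $\alpha$-radius neighborhood, each carrying at most $c\alpha/(2k\gamma)^3$ bits. At every node $v$, the composed schema assigns the concatenation over all $i$ in which $v$ is a bit-holder of a block consisting of a $\lceil \log k \rceil$-bit schema tag followed by the bit-string produced by $\mathcal{S}_i$ at $v$, using any self-delimiting encoding.

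Since $v$ is a bit-holder of the composed schema iff it is a bit-holder of at least one $\mathcal{S}_i$, the union bound gives at most $\sum_i \gamma_i = \gamma_0$ composed bit-holders in any $\alpha$-ball, matching the desired floor. For the per-holder bit count, the threshold $\alpha \ge (2k\gamma)^3 \lceil \log k \rceil / c$ built into $A(c,\gamma)$ ensures $\lceil \log k \rceil \le c\alpha/(2k\gamma)^3$, so each of the at most $k$ blocks at $v$ contributes at most $2 \cdot c\alpha/(2k\gamma)^3$ bits, and the total is at most $2k \cdot c\alpha/(2k\gamma)^3 = c\alpha/(4k^2\gamma^3) \le c\alpha/\gamma^3$. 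Thus we may take $\beta := c\alpha/\gamma^3$ and the composability constraints on bits-per-holder and holders-per-$\alpha$-ball are both met.

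For the algorithm, I would proceed along a topological sort of $D$: to produce the output for $\Pi_i$, a node first obtains the outputs of all $\Pi_j$ that $\Pi_i$ depends on (recursively), then extracts from the composed labels in its local neighborhood the sub-strings tagged with identifier $i$, and finally runs the algorithm supplied by $\mathcal{S}_i$, feeding it both those bits and the already-computed dependency solutions as input. A straightforward induction along $D$ shows that the time to produce $\Pi_i$ equals the maximum, over directed paths from a source to $\Pi_i$, of $\sum T_j(\alpha,\Delta)$ along the path; taking the maximum over all $i$ yields the claimed $T(\alpha,\Delta)$. The one real balancing act is in choosing the sub-schema's $\gamma$-parameter: it must be large enough that the sub-schemas' bit-strings shrink enough to accommodate the per-block tag overhead under concatenation (leading to the factor $(2k)^3$ in the $\alpha$-threshold and the $2k\gamma$ inside the sub-invocations), yet not so large that any sub-schema becomes inapplicable; the choice $2k\gamma$ together with the $(2k\gamma)^3 \lceil \log k \rceil / c$ threshold resolves both demands simultaneously, and the remaining bookkeeping is routine.
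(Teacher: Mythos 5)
Your proposal is correct and follows essentially the same route as the paper: invoke each sub-schema with parameters $(c,2k\gamma,\alpha)$, union-bound the number of bit-holders per $\alpha$-ball to get $\gamma_0$, concatenate tagged self-delimiting blocks at each node (with the $(2k\gamma)^3\lceil\log k\rceil/c$ threshold absorbing the tag overhead), and process the DAG $D$ in reverse-topological order. Your block-length accounting glosses over the $O(\log L)$ bits needed to make each block self-delimiting, but the $4k^2$-fold slack you computed comfortably covers it, so this is an informality rather than a gap.
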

\begin{proof}
	We need to prove that, for any constant $c > 0$ and any $\gamma \ge \gamma_0$, for any $\alpha \ge A(c,\gamma)$, there exists $\beta \le c \alpha / \gamma^3$ such that:
	\begin{itemize}
		\item There exists a variable-length $(\mathcal{G},\Pi,\beta,T(\alpha,\Delta))$-advice schema $S$.
		\item For each $G \in \mathcal{G}$, the assignment given by $S$ to the nodes of $G$ satisfies that, in each $\alpha$-radius neighborhood of $G$, there are at most $\gamma_0$ bit-holding nodes.
	\end{itemize}
	Hence, in the following, let $c>0$ be an arbitrary constant, let $\gamma$ be an arbitrary integer that is at least $\gamma_0$, and let $\alpha$ be an arbitrary integer that is at least $A(c,\gamma)$.
	
	For a composable schema $\mathcal{S}$, let $\mathcal{S}(c,\alpha)$ be the schema that is part of the collection $\mathcal{S}$ obtained by using parameters $c$ and $\alpha$. Let $G = (V,E) \in \mathcal{G}$, and let $f_i := \mathcal{S}_i(c,\alpha)$.
	We now define, recursively, $k$ functions $\ell_i$ mapping nodes into bit-strings. Let $\ell_1 := f_1(G)$, and let $O_1$ be the solution that the nodes would compute for $\Pi_1$ when given $\ell_1$ as advice. Let $\mathrm{succ}_D(\Pi_{i})$ be the set of nodes reachable by $\Pi_{i}$ in $D$, that is, the set of problems for which $\Pi_{i}$ is assumed to receive a solution as input. Observe that $\mathrm{succ}_D(\Pi_{i}) \subseteq \{\Pi_1,\ldots,\Pi_{i-1}\}$.
	Let $G_i$ be defined as the graph $G$ where nodes are also labeled with the solutions $\{O_j \mid \Pi_j \in \mathrm{succ}_D(\Pi_{i})\}$. We define  
	$\ell_{i} := f_{i}(G_i)$, and $O_i$ as the solution that nodes would compute for $\Pi_i$ when given the advice $\ell_{i}$.
	
	For each node $v$, let $B(v) = \{ (i,\ell_i(v)) \mid 1 \le i \le k \text{ and } |\ell_i(v)| > 0 \}$.
	Observe that, by the definition of $\gamma_0$ and by construction of $B(v)$, in each $\alpha$-radius neighborhood, there are at most $\gamma_0$ nodes $v$ with non-empty sets $B(v)$. Moreover, by the definition of $\gamma_0$ and $A$, each string $\ell_i(v)$ satisfies $|\ell_i(v)| \le c \alpha / (2k \gamma)^3$. 
	 
	 Each set $B(v)$ can be encoded by using $c \alpha / \gamma^3$ bits, as follows. First, each pair $(i,\ell_i(v))$ is encoded by using at most $\lceil \log k \rceil + 2\lceil \log(c \alpha / (2k \gamma)^3)\rceil + 1 + c \alpha / (2k \gamma)^3$ bits by concatenating the bit-string representation of $i$, followed by $\lceil \log |\ell_i(v)| \rceil$ $1$s, followed by a $0$, followed by the bit-string representation of $|\ell_i(v)|$, followed by $\ell_i(v)$. Observe that such a string can be decoded to recover the pair. Then, the bit-strings of the pairs are concatenated into a single bit-string $\ell_G(v)$, that has length at most $k \lceil \log k\rceil + 4 k c\alpha /(2k \gamma)^3 \le 5 k c\alpha /(2k \gamma)^3 \le c \alpha / \gamma^3$, where the first inequality holds by the definition of $A$. The schema $S$ is defined as the one using, $\ell_G(v)$, for each $G \in \mathcal{G}$.
	 
	 Nodes can solve $\Pi$ as follows. First, each node $v$ decodes its own advice, obtaining a string $\ell_i(v)$ for each $1 \le i \le k$. Then, nodes consider the graph $D$: problems that are sinks in $D$ can be solved in parallel, and after that, nodes can recurse on the subgraph of $D$ induced by problems that still have to be solved. In total, they spend $T(\alpha,\Delta)$ time.
\end{proof}

\begin{lemma}\label{lem:composable-to-1bit}
	Let $\mathcal{S}$ be a $(\mathcal{G},\Pi,\gamma,A, T)$-composable advice schema. Then, there exists a uniform fixed-length sparse $(\mathcal{G},\Pi,1,T')$-advice schema $S'$, where $T':= O(\alpha + T(\alpha,\Delta))$ and $\alpha := A(c,\gamma)$ for a small-enough constant $c$.
\end{lemma}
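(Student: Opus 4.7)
The plan is to simulate the given composable variable-length schema $\mathcal{S}$ by laying down its bit-strings in a sparse, self-delimiting pattern over local neighborhoods of the graph. Given a target sparsity $\varepsilon > 0$, I would pick $\gamma := \gamma_0$ and a sufficiently small constant $c$ (depending on $\varepsilon$ and $\gamma_0$), then invoke $\mathcal{S}$ at these parameters to obtain a variable-length schema in which every $\alpha$-ball (with $\alpha := A(c,\gamma_0)$) contains at most $\gamma_0$ bit-holders, each storing at most $\beta \le c\alpha/\gamma_0^3$ bits.

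For each bit-holder $v$, I would assign 1-bits to a deterministically chosen set of $O(\beta)$ nearby nodes using a self-delimiting unary-style encoding reminiscent of \Cref{ssec:subexp-growth}: a distinguishing start marker (say $11110$), each 0 of $v$'s bit-string represented by $110$ and each 1 by $1110$, followed by a terminating $0$. The 1-bit-carrying nodes lie along a path rooted at $v$ selected by a fixed tie-breaking rule (such as always descending to the unused neighbor with the smallest ID), so that during decoding each marker can be traced back to the unique bit-holder that owns it. All other nodes receive bit $0$.

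The key technical obstacle is guaranteeing that encodings of distinct bit-holders do not collide, because up to $\gamma_0$ bit-holders can sit inside a single $\alpha$-ball. Since each encoding uses only $O(\beta)$ nodes and $\beta \ll \alpha$, there is enough room in principle. I would process bit-holders in a globally-determined order (say, by ID), greedily growing each encoding path so as to avoid nodes used by, or adjacent to, previously placed encodings; the total forbidden volume near any new bit-holder is $O(\gamma_0\beta\Delta)$, which is well within the budget of free nodes available in a sufficiently long BFS tree rooted at $v$. A Lovász-Local-Lemma argument in the spirit of \Cref{ssec:balanced-orientation} would give an alternative, non-greedy placement if cleaner independence is desired.

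Decoding is local: each node $u$ gathers its $(T(\alpha,\Delta)+O(\alpha))$-hop neighborhood, locates every start marker, parses the attached bit-string, attributes it to the correct bit-holder via the fixed tie-breaking rule, and then simulates the variable-length schema's $T(\alpha,\Delta)$-round algorithm locally, giving total round complexity $O(\alpha + T(\alpha,\Delta))$. For sparsity, each bit-holder contributes at most $O(\beta)=O(c\alpha/\gamma_0^3)$ 1-bits, and any $\alpha$-ball of size $\Omega(\alpha)$ hosts at most $\gamma_0$ bit-holders, yielding a global 1-bit density of $O(\gamma_0\beta/\alpha)=O(c/\gamma_0^2)$, which drops below any prescribed $\varepsilon$ by taking $c$ small enough. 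Connected components of diameter smaller than $\alpha$ can be handled trivially by gathering them in $O(\alpha)$ rounds and solving $\Pi$ from scratch without advice.
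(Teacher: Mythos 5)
Your proposal shares the paper's high-level strategy—mark bit-holders, lay out their variable-length strings as self-delimiting unary codes along nearby paths, decode locally, and tune $c$ to control sparsity—but it misses the central structural step that the paper needs to make the idea actually work, and the place where you hand-wave is exactly where the real difficulty lies.

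The core gap is your treatment of bit-holders \emph{individually}. You assign each bit-holder $v$ its own start marker and its own encoding path rooted at $v$, and you rely on a greedy placement (with an LLL alternative mentioned only in passing) to keep paths from colliding. Two problems arise. First, the ``forbidden volume is small'' argument does not establish that a free path of length $\Theta(\beta)$ actually exists from $v$: a small number of forbidden nodes can still disconnect $v$ from the rest of the component (e.g.\ by blocking a bottleneck), and nothing in the composability definition forbids adjacent bit-holders. Second, and more seriously, the decoder cannot disambiguate. Your run lengths ($2$ for $0$, $3$ for $1$, $4$ for the marker) are fixed constants, and your start marker sits at the bit-holder itself; when two bit-holders are adjacent or their paths touch, the runs of $1$s merge into a single connected component of $1$s that a local decoder cannot parse back into the two intended codewords. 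Your ``fixed tie-breaking rule'' (smallest-ID unused neighbor) makes the encoder's choices depend on which nodes earlier encodings consumed, so the decoder would have to replay the entire global greedy to reconstruct the paths—which is not a local decoding procedure and is in tension with the collision-avoidance you need.

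The paper sidesteps both issues with a clustering step you do not have. It groups bit-holders into clusters by repeatedly absorbing any bit-holder within distance $d := \alpha/(10\gamma)$ of the current cluster, and then uses the composability guarantee (at most $\gamma$ bit-holders in any $\alpha$-ball) to show each cluster has at most $\gamma$ nodes and distinct clusters are $\Omega(d)$ apart. It then marks every bit-holder with a single $1$, so each cluster of bit-holders forms a $1$-component of size at most $\gamma$, and lays down one \emph{joint} encoding per cluster in a disjoint annulus around it, using run lengths $\gamma+1$ and $\gamma+2$. This makes the two kinds of $1$-components distinguishable \emph{by size alone}: components of size at most $\gamma$ are bit-holder clusters, components of size $\gamma+1$ or $\gamma+2$ are payload. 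Because the clusters and their annuli are spatially separated, no two encodings can ever touch, so the decoder's task is local and unambiguous. Your fixed-constant run lengths do not scale with $\gamma$ and cannot provide this separation. To repair your argument you would essentially need to reintroduce the clustering and the $\gamma$-dependent run-length trick, at which point you would have the paper's proof.

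A minor point: your sparsity estimate and the small-diameter-component fallback are fine and match the paper's, but both rest on the collision-free placement that is not established.
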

\begin{proof}
	Let $c$ be a small-enough constant to be fixed later, and let $\alpha = A(c,\gamma)$. Let $S$ be the advice schemas for $\Pi$ obtained from \Cref{def:composable} by using parameters $c$ and $\alpha$. Let $G = (V,E) \in \mathcal{G}$. Let $\ell$ be the bit-assignment of the schema $S$ on the graph $G$. We define the schema $S'$ by providing a function $\ell'$ that maps each node of $V$ into a single bit.
	
	We define a clustering of the bit-holding nodes as follows.
	Let $H \subseteq V$ be the set of bit-holding nodes. Take an arbitrary node $v \in H$ and remove it from $H$. Initialize a cluster $C$ as $C = \{v\}$. Then, repeatedly perform the following operation, until nothing new is added to $C$: take a node $u$ in $H$ that is at distance at most $d := \alpha/(10\gamma)$ from at least some node in $C$, remove $u$ from $H$, add $u$ to $C$. Then, recurse on the remaining nodes of $H$ to create new clusters.

	We first prove that each cluster $C$ has size bounded by $\gamma$. Assume, for a contradiction, that there exists a cluster $C$ that contains strictly more than $\gamma$ nodes, and consider the set of nodes $C' \subseteq C$ of size exactly $\gamma+1$ obtained while constructing $C$. Since each node of $C'$ has a neighbor in $C'$ at distance at most $d$, we get that all nodes in $C'$ must be at pairwise distance at most $\gamma d = \alpha/10$. 
	Consider an arbitrary node $v \in C'$. We obtain that, in its $\alpha$-radius neighborhood there are at least $\gamma+1$ nodes, which is a contradiction with the definition of composable advice schema.

	In the following, we define the function $\ell'$.
	At first, define $\ell'$ by assigning a $1$ to each bit-holding node, and $0$ to all the other nodes. Then, for each cluster $C$, we operate as follows. Let $N(C)$ be the set of nodes at distance at most $d / 4$ from the nodes in $C$. Observe that, for two different clusters $C$ and $C'$, it holds that the nodes in $N(C)$ are at distance at least $d/2$ from all the nodes in $N(C')$.
	For each node $v \in N(C)$, let $x_v$ be the distance between $v$ and its nearest node in $C$. There are two cases: 
	\begin{itemize}
		\item There is a node $z$ satisfying $x_z \ge d/8 + 10$. In this case, we will modify the assignment of bits in a suitable way.
		\item There is no node $z$ satisfying $x_z \ge d/8 + 10$. In this case, we will leave the assignment as is. In fact, in this case, the whole graph has diameter bounded by $O(\alpha)$, and hence the nodes can solve $\Pi$ by brute force, without the need of advice.
	\end{itemize}
	
	In the following, we present the bit-assignment for the clusters of the former case. Let $v_1,\ldots, v_k$ be the bit-holding nodes of $C$, sorted by their ID. Recall that the $k \le \gamma$, and that each bit-holding node has a bit-string of length at most $c \alpha / \gamma^3$.
	Let $B = (L_1,\ldots,L_k)$, where $L_{i} = \ell(v_i)$. We encode the array $B$ by using at most $4 c \alpha / \gamma^2$ bits, as follows. Each $L_i$ is encoded by writing $\lceil \log |L_i|  \rceil$ $1$s, followed by a $0$, followed by the bit-string representation of $|L_i|$, followed by $L_i$, which requires at most $4 c \alpha / \gamma^3$ bits. Then, we concatenate the obtained bit-strings, in ascending order of $i$, obtaining a string $L$. Note that, given $L$, it is possible to recover $B$.
		
	Then, we create a new bit-string, by replacing each $0$ of $L$ with a sequence of $\gamma+1$ $1$s followed by a $0$, and each $1$ of $L$ with a sequence of $\gamma+2$ $1$s followed by a $0$. Let $L'$ be the obtained bit-string. Note that this string has length at most  $(\gamma+3) 4 c \alpha / \gamma^2 \le 16 c \alpha / \gamma$, and that it is possible to recover $L$ given $L'$.
	
	Recall that $d/8 = \alpha/(80\gamma)$. We fix $c$ small enough so that $16 c \alpha / \gamma < d/8$. Consider the path $P = (v_1,\ldots,v_{d/8})$ obtained as follows. Node $v_1$ is $z$ (that is, the node satisfying $x_z \ge d/8 + 10$), and node $v_{i+1}$ is an arbitrary neighbor of $v_i$ satisfying $x_{v_{i+1}} = x_{v_{i}} - 1$. We pad the string $L'$ to be of length exactly $d/8$ by adding $0$s. Then, for each $i$, we set $\ell'(v_i)$ as the $i$th bit of $L'$ (possibly overriding the $0$ that has been previously assigned). 
	
	We now prove that, given the assignment provided by the function $\ell'$, it is possible for the nodes to solve $\Pi$. First, each node $v$ gathers its $\alpha$-radius neighborhood $\bar{N}(v)$.
	By looking at $\bar{N}(v)$, node $v$ can see which nodes within distance $\alpha$ are marked $1$. Among them, node $v$ can recognize which nodes correspond to bit-holding nodes, and which nodes correspond to nodes that encode bits of~$L'$:
	\begin{itemize}
		\item Nodes that form connected components of size at most $\gamma$ correspond to bit-holding nodes.
		\item Nodes that form components of size $\gamma+1$ or $\gamma+2$ correspond to nodes that are encoding bits of $L'$. 
	\end{itemize} 
	Then, each bit-holding node $v$ can compute which bit-holding nodes within distance $\alpha$ from $v$ belong to the same cluster $C$ of $v$. Then, $v$ can sort the nodes of $C$ by their IDs. Then, bit-holding nodes consider the nodes within distance $d/8 + 10$ from $C$ and their assigned bits, and in this way they are able to recover $L'$. If no $L'$ is present, it means that the whole graph has diameter $O(\alpha)$, and hence all the nodes can solve $\Pi$ by brute force. Otherwise, nodes can decode $L'$ to obtain the array $B$, and hence the advice for $\Pi$. 
	Then, in $T(\alpha,\Delta)$ time, nodes can solve $\Pi$.
	
	Finally, observe that, by making $c$ small enough (and possibly increasing $\alpha$ suitably), we can make the ratio between assigned $1$s and assigned $0$s an arbitrarily small constant.
\end{proof}

\end{document}